\documentclass[]{elsarticle}
\usepackage[utf8]{inputenc}


\usepackage{amsmath}
\usepackage{amssymb}
\usepackage{bbm}                            
\usepackage{csquotes}                       
\usepackage{color}                          
\usepackage{overpic}			    
\usepackage{algorithm}
\usepackage{algpseudocode}                  


\newtheorem{thm}{Theorem}
\newtheorem{lem}[thm]{Lemma}
\newtheorem{prop}[thm]{Proposition}
\newdefinition{defi}[thm]{Definition}
\newdefinition{rmk}[thm]{Remark}
\newdefinition{algo}[thm]{Algorithm}
\newproof{proof}{Proof}

\newcommand{\arc}{\ensuremath{\gamma}}
\newcommand{\arceins}{\ensuremath{\gamma_1}}
\newcommand{\arczwei}{\ensuremath{\gamma_2}}
\newcommand{\arcdrei}{\ensuremath{\gamma_3}}

\newcommand{\RR}{\mathbbm{R}}               
\newcommand{\ZZ}{\mathbbm{Z}}               
\newcommand{\NN}{\mathbbm{N}}               


\newcommand{\capl}{\cap_-}
\newcommand{\capr}{\cap_+}

\newcommand{\restr}[1]{\ensuremath{\big|_{#1}}}

\newcommand{\reverse}[1]{
\hbox{%
    \vbox{%
      \hrule height 0.5pt 
      \kern0.5ex
      \hbox{\ensuremath{#1}}
    }%
  }%
}

\newcommand{\normal}[2]{\ensuremath{#1^{\perp}(#2)}}

\newcommand{\sprod}[2]{\ensuremath{\langle \, #1 \,,\, #2 \, \rangle}}


\newcommand{\fdg}{\,:\enspace}

\journal{Computer Aided Geometric Design}

\begin{document}

\begin{frontmatter}

\title{Numerically robust computation of circular visibility}
 \author[label1]{Stephan Brummer\corref{cor1}}
 \ead{stephan.brummer@uni-passau.de}
 \author[label1]{Georg Maier}
 \author[label1,label2]{Tomas Sauer}
 \cortext[cor1]{Corresponding author}
 \address[label1]{FORWISS, Universität Passau, Innstr.~43, 94032 Passau, Germany}
 \address[label2]{Lehrstuhl für Mathematik mit Schwerpunkt Digitale Signalverarbeitung,
Universität Passau, Innstr.~43, 94032 Passau, Germany}

\begin{abstract}
We address the question of whether a point inside a domain bounded by a
simple closed arc spline is circularly visible from a specified
arc from the boundary. We provide a simple and numerically stable
linear time algorithm that solves this problem.
In particular, we present an easy-to-check criterion that implies
that a point is not visible from a specified boundary arc.
\end{abstract}

\begin{keyword}
Circular visibility \sep arc spline \sep channel
\end{keyword}

\end{frontmatter}

\section{Introduction}
A point in the plane is called \emph{circularly visible} from another point inside a planar domain
if the two points can be connected by a circular arc that lies inside this domain.
Algorithms that compute the set of all circularly visible points inside a polygon
from a point or edge are well studied, cf.~\cite{agarwal1993, chou1992, chou1995}.
In~\cite{agarwal1993}, circular visibility from a point inside
a simple polygon is treated. The authors present an $O(n \log n)$ algorithm,
with $n$ the number of vertices of the polygon, that
computes the set of all circularly visible points.
In~\cite{chou1995} an algorithm to compute the circular visibility set
of a point inside a simple polygon is presented, which is based on the so called
CVD (circular visibility diagram), a partition of the plane where every
point represents the center of an arc. This leads to an algorithm
with linear runtime with respect to the number of vertices. A discussion of
numerical stability is not existing in both cases,
but numerical problems can be assumed if relevant circular arcs are almost straight.
The computation of the CVD is further used to compute the circular
visibility set from an edge of a simple polygon in~\cite{chou1992}.
The runtime of the presented algorithm is $O(kn)$ where $n$ is the number of
vertices and $k$ is the number of CVDs computed which equals $n$ in worst case.
The problem we tackle here differs in two ways:
we want to consider domains bounded by an \emph{arc spline}, a curve
that consists of circular arcs and line segments, and we
only want to know if a point is visible from a specified arc on
the boundary.
The treatment of regions bounded by arc splines has not been considered yet in literature.
We present a simple and numerically stable algorithm that decides,
in linear time with respect to the number of arc segments,
if a point is circularly visible from a boundary arc.
For this purpose, we supply an easy-to-check criterion that directly implies
that a point is not circularly visible from an arc.
Although we only consider circular visibility of a point,
we compute in some sense extremal arcs having a so-called alternating sequence.
This enables that this approach can easily be extended to compute boundary arcs
of the circular visibility set. This is a nice property as you are usually
most interested in this boundary region.\\
We use this algorithm to improve the numerical stability of the
SMAP (smooth minimum arc path) approach which computes an approximating smooth arc spline
with the minimal number of segments within a specified
maximal tolerance, cf.~\cite{Maier2014} or \cite{schindler2013} for an application
in vehicle self-localization. The basic task in the SMAP algorithm
is closely related to the computation of the circular visibility set from a starting arc.
It is known that the boundary of the circular visibility set consists
of \enquote{boundary arcs} having three points in common with the boundary
of the domain. Due to even very small numerical inaccuracies, however,
such boundary arcs can be missed.
With the approach presented in this paper, we can determine
if a point is visible and so we can localize the position of boundary arcs.\\
This paper is organized as follows: In Section \ref{section_notation}, we introduce
basic notations and definitions. In Section \ref{section_order}, we define a key tool for later
proofs, a total order on a specified set of arcs.
In Section \ref{section_restrictions}, a sufficient condition for a point to be not circularly visible
from an arc is shown. In Section \ref{section_algorithm}, we present a linear time
algorithm to decide if a point is circularly visible from an arc.
\section{Notation and basic definitions\label{section_notation}}
We call a continuous mapping $\alpha : [0,1] \to \RR^2$ a \emph{path}
and $\alpha(0)$ its \emph{starting point} and $\alpha(1)$ its \emph{endpoint}.
A path $\alpha$ is \emph{closed} if $\alpha(0) = \alpha(1)$, it is \emph{simple}
if it is injective and \emph{simple closed} if it is closed
and $\alpha\restr{[0,1)}$ is injective.
Note that the image of a simple closed path is a \emph{Jordan curve}
which divides its complement, according to the Jordan curve theorem,
into two connected components: a bounded one which we call the
\emph{interior} of the Jordan curve and an unbounded one, its \emph{exterior}.
As usual in the literature, we will use $\alpha$ for both the mapping and the image,
usually referred to as a \emph{curve}.
In particular, this allows us to write $p\in\alpha$ instead of $p \in \alpha([0,1])$.\\
We denote $\alpha((0,1))$ by $\alpha^\circ$
and by $\reverse{\alpha}$ the \emph{reverse path} defined by $\reverse{\alpha}(t) = \alpha(1-t)$.
Let $\alpha$ be a simple path and $p \in \alpha$.
We denote by $t_\alpha(p)$ the unique parameter in $[0,1]$ with $\alpha(t_\alpha(p)) = p$.
We write $t(p)$ if the corresponding path is clear from the context.
For $p,q\in\alpha$ we write $p \prec_\alpha q$ if
$t_\alpha(p) < t_\alpha(q)$.\\
A path $\arc$ of the form
\[
\arc(t) =
c + r \cdot \begin{pmatrix} \cos(2\pi a t + t_1) \\ \sin(2\pi a t + t_1) \end{pmatrix}, \qquad c \in \RR^2, r > 0, a \in (0,1), t_1 \in [0,2\pi),
\]
is called a \emph{positively oriented arc}. We call the reverse path $\reverse{\arc}$
of a positively oriented arc a \emph{negatively oriented arc}.
The path $\ell$ defined by
$\ell(t) = (1-t) \cdot p_1 + t \cdot p_2,\, p_1,p_2 \in \RR^2, p_1 \not= p_2$,
is a \emph{line segment from $p_1$ to $p_2$} denoted by $[p_1,p_2]$.
We call a path an \emph{arc} if it is an arc of either orientation or a line segment.
The set of all arcs will be denoted by $\Gamma$.\\
As an arc $\arc$ is differentiable with respect to $t$ and its derivative $\dot{\arc}(t)$
does not vanish for any $t \in [0,1]$, we can define the
\emph{unit tangent vector} $\arc' : [0,1] \to S^1$, where $S^1$ is
the unit sphere, by $\arc'(t) := \frac{\dot{\arc}(t)}{||\dot{\arc}(t)||_2}$
and the \emph{normal of length one \enquote{to the left}}
$\normal{\arc}{t} := (-v, u)^T$ with $(u,v)^T = \arc'(t)$.\\
For $p,q,r \in \RR^2, \tau \in S^1$, we denote
by $\arc[p,r,q]$ the arc with starting point $p$, endpoint $q$
that passes through $r$ and by $\arc[\tau,p,q]$
the arc with starting point $p$, endpoint $q$
and $\tau$ as starting point tangent and by $\arc[p,q,\tau]$ the arc
with starting point $p$, endpoint $q$ and $\tau$ as endpoint tangent.
Note that $\arc[p,r,q]$ exists and is unique if $p,q,r$ are distinct,
$q \not\in [p,r]$ and $p \not\in [r,q]$.
Likewise, $\arc[\tau,p,q]$ and $\arc[p,q,\tau]$
exist and are unique if $p\not= q$ and $\tau$ and $(p-q)$
are not pointing into the same direction.\\
Let $\arc$ be a positively or negatively oriented arc or a line segment,
then we call $[\arc] := \arc(\RR)$ the
\emph{corresponding circle} or the \emph{corresponding line}, respectively.\\
Let $\arc_1, \arc_2, \ldots, \arc_n$ be arcs with $\arc_k(1) = \arc_{k+1}(0), \, k \in \{1, \ldots n-1\}$.
Then, we call the path $\arc_1 \sqcup \arc_2 \sqcup \ldots \sqcup \arc_n$
defined as the concatenation
\[
(\arc_1 \sqcup \arc_2 \sqcup \cdots \sqcup \arc_n)(t) := \arc_k(nt - k + 1), \,\, t \in \frac{1}{n}[k-1,k], \,\, k = 1,2, \ldots, n
\]
an \emph{arc spline with $n$ segments}.
We call an arc spline \emph{simple}, \emph{closed} or \emph{simple closed} if the corresponding path
is simple, closed or simple closed, respectively.
The points $\gamma_1(0), \gamma_2(0), \ldots, \gamma_n(0), \gamma_n(1)$ are called the \emph{breakpoints}
of the arc spline.\\
Let $\ell$ be a line segment and $p \in \RR^2$.
A point $p$ is \emph{strictly left of} $\ell$ if \linebreak$\sprod{\normal{\ell}{0}}{p - \ell(0)} > 0$
and it is \emph{strictly right of} $\ell$ if the inner product is negative.
We say that $p$ is \emph{strictly left of} a positively oriented arc
$\arc$ if $p$ is in the interior of $[\arc]$,
it is \emph{strictly left of} a negatively oriented arc
$\arc$ if $p$ is in the exterior of $[\arc]$.
Furthermore, $p$ is left of an arc $\gamma$ if it is either strictly left of $\gamma$ or $p \in [\gamma]$.
With $p \in \gamma^\circ$, a set $M \subset \RR^2$ is said to be \emph{locally left of $\gamma$ at $p$}
if there is an $\varepsilon > 0$ so that for every $\delta \in (0,\varepsilon)$
the set $M \cap B_p(\delta)$, with $B_p(\delta) := \{x \in \RR^2 \fdg ||x - p||_2 < \delta\}$,
is nonempty and every $q \in M \cap B_p(\delta)$ is
left of $\gamma$. We say that $M$ is \emph{locally left of $\gamma$} if
for every $p \in \gamma^\circ$ it is locally left of $\gamma$ at $p$.\\
Let $\arc$ be an arc, $\alpha$ a path and $t\in[0,1]$ with
$\alpha(t) \in \arc^\circ$.
We say \emph{$\alpha$ leaves $\arc$ in $t$ to the left}
if $\alpha(t+\varepsilon)$ is strictly left of $\arc$
for every sufficiently small $\varepsilon > 0$.
Likewise, we say \emph{$\alpha$ approaches $\arc$ in $t$ from the left}
if $\alpha(t-\varepsilon)$ is strictly left of $\arc$
for every sufficiently small $\varepsilon > 0$.
Likewise, the definitions hold for \enquote{right} instead of \enquote{left}.\\
We say that \emph{$\alpha$ cuts $\arc$ in $t$ from the left} if
there is a $t' \in [0,t]$ with $\alpha([t',t]) \in \arc$ such that
$\alpha$ approaches $\arc$ in $t'$ from the left
and it leaves $\arc$ in $t$ to the right. Likewise, we define a \emph{cut from the right}.
Note that if $\alpha$ is an arc then $\alpha$ cuts $\arc$ in $t$ from the right
if and only if $\sprod{\normal{\arc}{t_\arc(\alpha(t))}}{\alpha'(t)} > 0$.
Hence, this definition is consistent with the usual intuition of
cutting from the left or right. Furthermore, this yields that an arc $\arc_1$ cuts
another arc $\arc_2$ from the left if and only if $\arc_2$ cuts $\arc_1$ from the right.
We use
\begin{align*}
\alpha \capl \arc &:= \{t\in[0,1] \fdg \alpha \text{ cuts } \arc \text{ in } t \text{ from the left }\},\\
\alpha \capr \arc &:= \{t\in[0,1] \fdg \alpha \text{ cuts } \arc \text{ in } t \text{ from the right }\}
\end{align*}
\begin{defi}[Channel\label{defi_channel}]
Let $\sigma$ be an arc
and $\kappa = \kappa_1 \sqcup \kappa_2 \sqcup \ldots \sqcup \kappa_n$ an arc spline
with arcs $\kappa_1, \ldots, \kappa_n$ such that $\sigma \sqcup \kappa$ is simple closed.
Furthermore, we demand that the closure of the interior of $\sigma \sqcup \kappa$,
denoted by $P$, is locally left of $\sigma$.
We call $\sigma$ the \emph{starting arc} of the \emph{channel} $P$,
$\kappa$ is called the \emph{channel boundary} and $\kappa_j, j\in\{1, \ldots, n\}$
are called \emph{channel segments}.
\end{defi}
Channels are a standard tool to approximate data
within a certain maximal tolerance. It is usual to use polygonal channels
and there are efficient methods to construct them, cf.~\cite{held2008}.
Here, however, we use arc splines as channel boundary as this is necessary
in the SMAP approach, cf.~\cite{Maier2014}.
\begin{figure}[t]
\begin{minipage}[b]{0.47\textwidth}
\begin{center}
\begin{overpic}[width=3.2cm,angle=33]{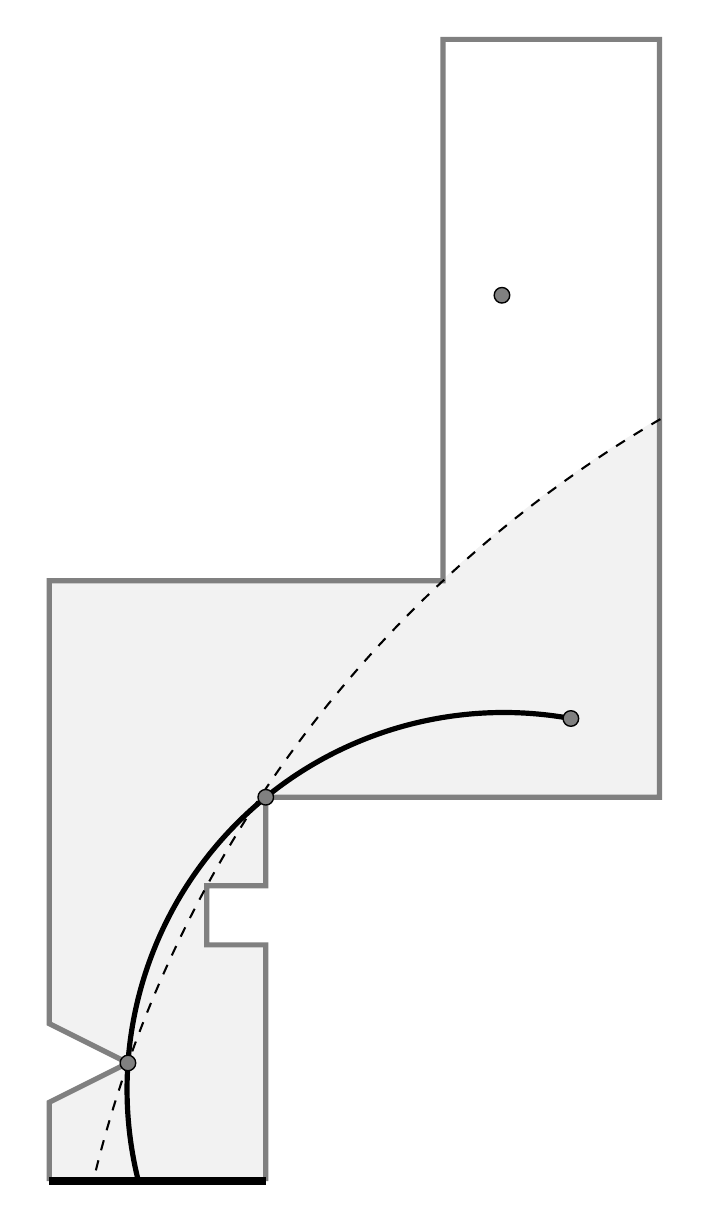}
\thicklines%
\put(33,84){$P$}
\put(55,5){$\sigma$}
\put(59,54){$p$}
\put(43,72){$q$}
\put(40,27){$\gamma$}
\end{overpic}
\caption{Channel $P$ with starting arc $\sigma$; set of circularly visible points shaded;
visible point $p$ with visibility arc $\gamma$; $q$ is not visible \label{figure_channel_visibility_set}}
\end{center}
\end{minipage}
\hfill
\begin{minipage}[b]{0.47\textwidth}
\begin{center}
\begin{overpic}[width=4cm]{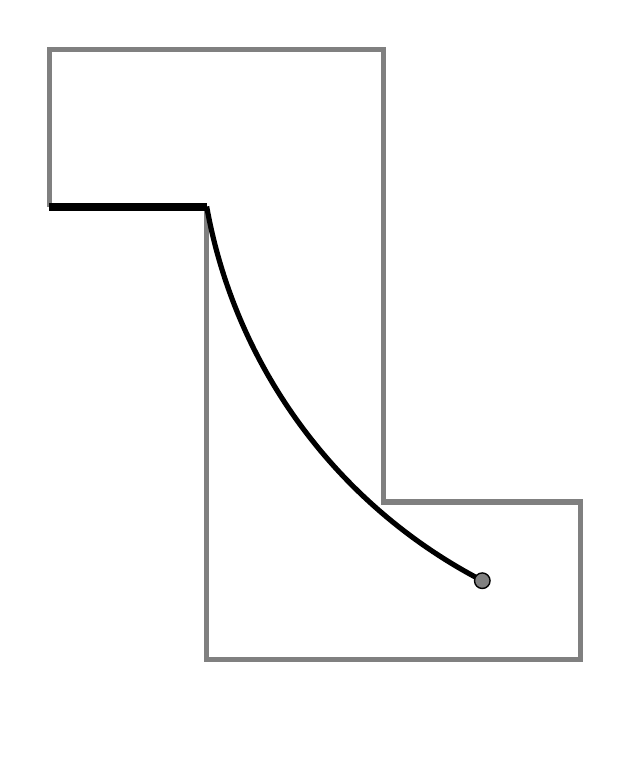}
\put(41,84){$P$}
\put(15,66){$\sigma$}
\put(64,23){$p$}
\put(36,35){$\gamma$}
\end{overpic}
\caption{A channel violating the condition of Remark~\ref{remark_convex_start_corner}
and a visibility arc $\gamma$ leaving $\sigma$ to the right\label{figure_invalid_channel}}
\end{center}
\end{minipage}
\end{figure}
\begin{defi}[Circular Visibility]
We say that a point $p \in P^\circ$ is \emph{(circularly) visible} if there
is an arc $\gamma$ with starting point on $\sigma$, endpoint $p$ and $\gamma \subset P$.
In this case, we call $\gamma$ a \emph{visibility arc}.
\end{defi}
The goal of the paper is an algorithm that either computes a visibility arc or
proves that the point is not visible.
An example showing a channel and a visibility arc is depicted in
Figure~\ref{figure_channel_visibility_set}.
\begin{rmk}\label{remark_convex_start_corner}
Throughout this paper we will assume that the channel satisfies
$\sprod{\normal{\sigma}{1}}{\kappa'(0)} > 0$ and $\sprod{\normal{\sigma}{0}}{\kappa'(1)} < 0$.
This assumption is actually not necessary for this approach but it
simplifies some of the following presentation.
One benefit is that every arc in $P$ starting at $\sigma$ has to leave
$\sigma$ to the left which reduces the complexity of some proofs.
In Figure~\ref{figure_invalid_channel} a channel
we want to omit and a visibility arc starting at $\sigma$ to the right is shown.
\end{rmk}
\section{Connecting arcs and total order\label{section_order}}
We know that a point $p$ is circularly visible from the starting arc $\sigma$
if there exists a visibility arc, that is, an arc inside the channel with
starting point on $\sigma$ and endpoint $p$.
In this chapter, we will study the set of candidates for visibility arcs,
the so-called connecting arcs.
These are arcs with starting point on $\sigma$ and endpoint $p$, but
we do not yet care if they are inside the channel.
We will define a total order on this set
of arcs which will provide a helpful tool to find a visibility arc
and to prove a criterion that classifies a point as not visible.\\
In this chapter we assume that $\sigma$ is an arc and $p \in \RR^2 \setminus \sigma$.
\begin{defi}[Connecting Arc]\label{defi_connecting_arc}
Let $\sigma$ be an arc and $p \in \RR^2 \setminus \sigma$.
If $p\not\in[\sigma]$ then let
\begin{align*}
\Gamma^*(\sigma, p) := \{ \arc \in \Gamma \fdg \arc(0) &\in \sigma^\circ, \arc \cap \sigma = \{\arc(0)\}, \arc(1) = p\\
& \text{ and } \arc \text{ leaves } \sigma \text{ in } 0 \text{ to the left } \}
\end{align*}
and
$\Gamma(\sigma,p) := \overline{\Gamma^*(\sigma,p)}$.
Otherwise, let
\begin{align*}
\Gamma(\sigma,p) :=& \{ \arc[\tau, \sigma(t), p] \fdg t\in[0,1], \tau \in S^1, \sprod{\normal{\sigma}{t}}{\tau} > 0\}\\
&\cup\, \{\;\arc[-\sigma'(0),\sigma(0),p],\; \arc[\sigma'(1),\sigma(1),p]\;\}
\end{align*}
We call an arc $\arc \in \Gamma(\sigma, p)$
a \emph{connecting arc (from $\sigma$ to $p$)}.\\
Let $q,r\in\RR^2 \setminus \sigma$ and $q,r$ and $p$ pairwise distinct.
If there exists a unique $\arc \in \Gamma(\sigma,p)$ with $q,r \in \gamma$,
$q \prec_\arc r$, then we denote $\arc$ by $\arc_\sigma[q,r,p]$.
\end{defi}
Note that for any $q,r \in \RR^2 \setminus \sigma$ with $q,r$ and $p$
pairwise distinct the arc $\arc_\sigma[q,r,p]$ is unique if it exists,
as $q,r,p$ defines a unique circle and there is at most
one starting point on $\sigma$ such that the arc starts to the left.
\begin{rmk}
The closure $\Gamma(\sigma, p)$ of $\Gamma^*(\sigma, p)$ is built
with respect to the parametric distance
$d_\infty : (\gamma_1, \gamma_2) \mapsto \sup_{t \in [0,1]} ||\gamma_1(t) - \gamma_2(t)||_2$.
Taking the closure is important as otherwise we would not necessarily have a maximal or minimal
element of $\Gamma(\sigma, p)$. Note that a connecting arc can also start in
$\sigma(0)$ or $\sigma(1)$ and in the case of $p$ being right of $\sigma$
a connecting arc may intersect $\sigma$ again at $\sigma(0)$ or $\sigma(1)$.
To be more precise: if $p$ is strictly left of $\sigma$, then
$\Gamma(\sigma, p) \setminus \Gamma^*(\sigma, p)$ is the set
\[
\{ \arc \in \Gamma \fdg \arc(0) \in \{\sigma(0), \sigma(1)\} \text{ and } \sprod{\normal{\sigma}{t(\arc(0))}}{\arc'(0)} \geq 0\}.
\]
If $p$ is strictly right of $\sigma$ then
$\Gamma(\sigma, p) \setminus \Gamma^*(\sigma, p)$ is the union of the two disjoint sets
\[
\{ \arc \in \Gamma \,:\, \arc(0) \in \{\sigma(0), \sigma(1)\} \text{ and }
\sprod{\normal{\sigma}{t(\arc(0))}}{\arc'(0)} \geq 0 \text{ and } \sigma \cap \gamma = \{\gamma(0)\}\}
\]
and
\[
\{ \arc \in \Gamma \fdg \arc(0) \in \sigma \text{ and } \gamma^\circ \cap \{\sigma(0), \sigma(1)\} \not= \emptyset \}.
\]
\end{rmk}
\begin{rmk}\label{remark_special_cuts_connecting_arcs}
Since with the definition from above one arc cannot cut another one
at its starting point or endpoint,
we use in Definition~\ref{defi_order_endpoint_fix},
for continuity reasons, the following extension of directional cuts.
Let $\arceins, \arczwei \in \Gamma(\sigma, p)$,
especially we have $\arceins(0), \arczwei(0) \in \sigma$,
and let $\arceins(0) \prec_\sigma \arczwei(0)$.
We say that $\arceins$ cuts $\arczwei$ in $1$ from the left and
$\arczwei$ cuts $\arceins$ in $1$ from the right if
$\arceins'(1) = \arczwei'(1)$ and $\gamma_2$ approaches $[\gamma_1]$ in $1$ from
the right, cf.~Figure~\ref{figure_extension_cut} (left).
Note that in this case every $\arcdrei \in \Gamma(\sigma, p)$ with
$\arceins(0) \prec_\sigma \arcdrei(0) \prec_\sigma \arczwei(0)$ and
$\arceins \cap \arcdrei = \arczwei \cap \arcdrei = \{p\}$ satisfies
$\arcdrei'(1) = \arceins'(1)$.
If $q := \arceins(0) \in \arczwei$, then we say $\arceins \capl \arczwei = \{0\}$ and
$\arczwei \capr \arceins = \{t_2(q)\}$, cf.~Figure~\ref{figure_extension_cut} (right).
Likewise, if $q := \arczwei(0) \in \arceins$ and $\arceins(0) \not\in \arczwei$, then we say
$\arceins \capl \arczwei = \{t_1(q)\}$ and $\arczwei \capr \arceins = \{0\}$.
The last two cases are only possible if $p$ is strictly right of $\sigma$.
\end{rmk}
\begin{figure}[ht]
\centering
\begin{overpic}[height=7cm]{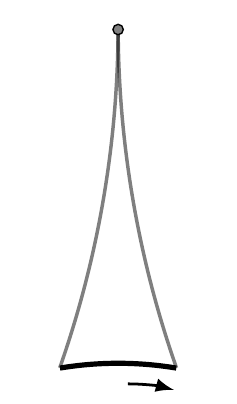}
\put(24,95){$p$}
\put(20,4){$\sigma$}
\put(14,30){$\gamma_1$}
\put(39,30){$\gamma_2$}
\end{overpic}
\hspace*{15mm}
\begin{overpic}[height=7cm]{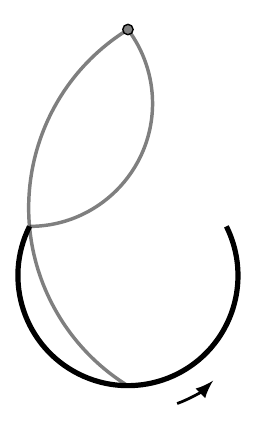}
\put(24,95){$p$}
\put(48,22){$\sigma$}
\put(34,57){$\gamma_1$}
\put(2,63){$\gamma_2$}
\end{overpic}
\caption{Illustration of extensions of directional cuts made in
Remark~\ref{remark_special_cuts_connecting_arcs}.\label{figure_extension_cut}}
\end{figure}
\begin{defi}\label{defi_order_endpoint_fix}
Let $\sigma \in \Gamma$, $p \in \RR^2 \setminus \sigma$ and $\arceins, \arczwei \in \Gamma(\sigma,p)$.
We say \emph{$\arceins \leq \arczwei$} if one of the following conditions holds:
\begin{enumerate}
\item $\arceins(0) \prec_\sigma \arczwei(0)$ and $\arceins \capl \arczwei = \emptyset$,
\item $\arceins(0) \succ_\sigma \arczwei(0)$ and $\arczwei \capl \arceins \not= \emptyset$,
\item $\arceins(0) = \arczwei(0) =: \sigma(t^*)$ and
$\sprod{\sigma'(t^*)}{\arceins'(0)} \leq \sprod{\sigma'(t^*)}{\arczwei'(0)}$.
\end{enumerate}
\end{defi}
An illustration of the three different cases can be found in Figure \ref{figure_examples_order}.
\begin{figure}[ht]
\centering
\begin{overpic}[height=7cm]{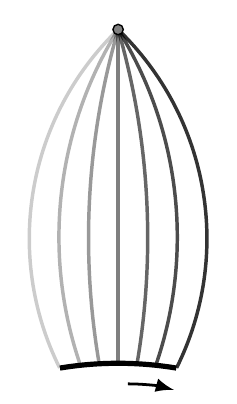}
\put(24,95){$p$}
\put(20,4){$\sigma$}
\end{overpic}
\begin{overpic}[height=7cm]{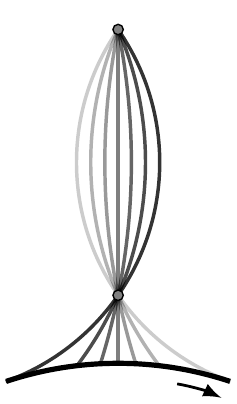}
\put(24,95){$p$}
\put(13,3){$\sigma$}
\end{overpic}
\begin{overpic}[height=7cm]{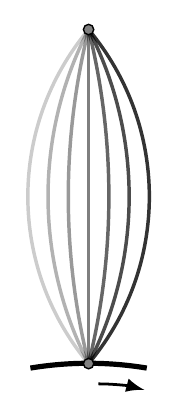}
\put(17,95){$p$}
\put(11,4){$\sigma$}
\end{overpic}
\caption{Illustration of the cases 1. (left), 2. (middle) and 3. (right)
of Definition \ref{defi_order_endpoint_fix}.
An arc is printed the darker the greater it is.\label{figure_examples_order}}
\end{figure}
\begin{rmk}
Note that in case 1 of Definition~\ref{defi_order_endpoint_fix},
which is $\gamma_1, \gamma_2 \in \Gamma(\sigma, p)$ with $\gamma_1(0) \prec_\sigma \gamma_2(0)$
and $\arceins \capl \arczwei = \emptyset$, the set $\arceins \capr \arczwei$ needs not
to be empty. Two examples are shown in Figure~\ref{figure_examples_order_2}.
\end{rmk}
\begin{figure}[ht]
\centering
\begin{overpic}[height=4cm]{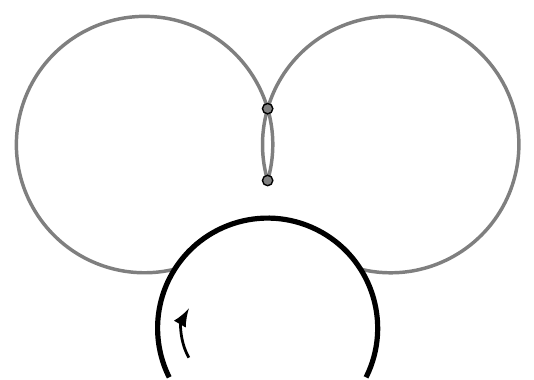}
\put(43,34){$p$}
\put(52,49.5){$x$}
\put(64,8){$\sigma$}
\put(7,50){$\arc_1$}
\put(88,50){$\arc_2$}
\end{overpic}
\hspace*{3mm}
\begin{overpic}[height=4cm]{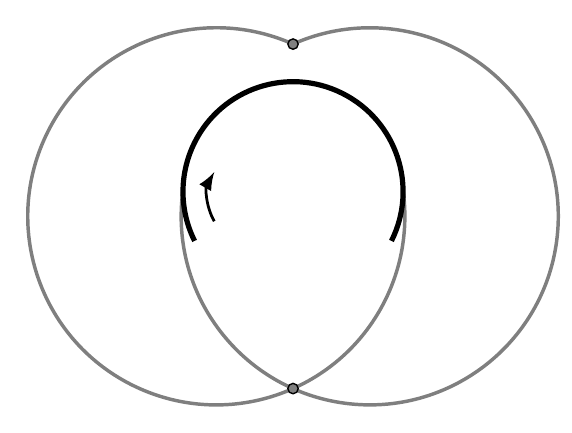}
\put(50,73){$p$}
\put(50,11){$x$}
\put(61,48){$\sigma$}
\put(26,25){$\arc_1$}
\put(71,25){$\arc_2$}
\end{overpic}
\caption{Two examples of arcs $\gamma_1, \gamma_2 \in \Gamma(\sigma, p)$ with
$\arceins(0) \prec_\sigma \arczwei(0)$ and $\arceins \capl \arczwei = \emptyset$
but having an intersection $x \not= p$, cf. case 1. of Definition \ref{defi_order_endpoint_fix}.
\label{figure_examples_order_2}}
\end{figure}
\begin{thm}
Let $\sigma$ be an arc and $p \in \RR^2 \setminus \sigma$. Then \enquote{$\leq$} is a total order 
on $\Gamma(\sigma, p)$.
\end{thm}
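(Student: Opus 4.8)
The plan is to verify the four defining properties of a total order --- reflexivity, totality, antisymmetry, and transitivity --- in that order of increasing difficulty.

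Reflexivity is immediate from clause~3 of Definition~\ref{defi_order_endpoint_fix}, since $\sprod{\sigma'(t^*)}{\arc'(0)} \le \sprod{\sigma'(t^*)}{\arc'(0)}$. For totality I would split according to the positions of the starting points on $\sigma$. If $\arceins(0)=\arczwei(0)=:\sigma(t^*)$, the real numbers $\sprod{\sigma'(t^*)}{\arceins'(0)}$ and $\sprod{\sigma'(t^*)}{\arczwei'(0)}$ are comparable, so clause~3 applies in one direction. If $\arceins(0)\prec_\sigma\arczwei(0)$, then either $\arceins\capl\arczwei=\emptyset$, whence $\arceins\le\arczwei$ by clause~1, or $\arceins\capl\arczwei\neq\emptyset$, whence $\arczwei\le\arceins$ by clause~2; the case $\arczwei(0)\prec_\sigma\arceins(0)$ is symmetric. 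One must keep in mind that the extended directional cuts of Remark~\ref{remark_special_cuts_connecting_arcs} are in force, so that ``$\arceins\capl\arczwei$ is empty or not'' is a genuine dichotomy even when a starting point lies on the other arc or when $\arceins'(1)=\arczwei'(1)$.

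For antisymmetry, suppose $\arceins\le\arczwei$ and $\arczwei\le\arceins$. If the starting points differed, say $\arceins(0)\prec_\sigma\arczwei(0)$, then $\arceins\le\arczwei$ could only come from clause~1, forcing $\arceins\capl\arczwei=\emptyset$, while $\arczwei\le\arceins$ could only come from clause~2, forcing $\arceins\capl\arczwei\neq\emptyset$; this contradiction shows the starting points coincide, say at $\sigma(t^*)$, and clause~3 then gives $\sprod{\sigma'(t^*)}{\arceins'(0)}=\sprod{\sigma'(t^*)}{\arczwei'(0)}$. Writing an admissible starting tangent at $\sigma(t^*)$ as $\tau(\theta)=\cos\theta\,\sigma'(t^*)+\sin\theta\,\normal{\sigma}{t^*}$, the constraint $\sprod{\normal{\sigma}{t^*}}{\tau}\ge0$ forces $\theta\in[0,\pi]$, and on this range $\sprod{\sigma'(t^*)}{\tau(\theta)}=\cos\theta$ is strictly decreasing, hence injective, so $\arceins'(0)=\arczwei'(0)$. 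Since a connecting arc is determined by its starting point, its starting tangent and the common endpoint $p$ (uniqueness of $\arc[\tau,\cdot,p]$), this yields $\arceins=\arczwei$.

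Transitivity is the heart of the matter and the step I expect to be the main obstacle. I would first reduce it: granting the three properties already proved, transitivity is equivalent to the absence of a strict cycle $\arceins<\arczwei<\arcdrei<\arceins$ among pairwise distinct connecting arcs, because if $\arceins\le\arczwei\le\arcdrei$ but $\arceins\not\le\arcdrei$, then (after discarding the trivial cases in which two of the arcs coincide) totality forces $\arcdrei<\arceins$ and we obtain exactly such a cycle. Assuming one exists, I would split into cases by the cyclic order of the starting points $\arc_i(0)$ on $\sigma$; up to cyclic relabelling there are only a few --- all three equal (then clause~3 turns the cycle into an impossible cyclic chain of real numbers), exactly two equal, or all three distinct (two sub-cases). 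In each case the defining clauses rewrite the three strict inequalities as statements about the cut sets $\arc_i\capl\arc_j$ and, where starting points coincide, about tangent inner products. The remaining geometric core is one lemma: for $\arc_a,\arc_b\in\Gamma(\sigma,p)$ with $\arc_a(0)\prec_\sigma\arc_b(0)$, the condition $\arc_a\capl\arc_b=\emptyset$ is equivalent to an inclusion of the closed regions bounded by the closed path that follows $\arc_a$ from $\arc_a(0)$ to $p$, then $\arc_b$ backwards from $p$ to $\arc_b(0)$, then $\sigma$ from $\arc_b(0)$ back to $\arc_a(0)$; since inclusion of regions is transitive, the cycle is ruled out. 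To make this rigorous I would exploit that two distinct arcs, being sub-arcs of two circles or lines, meet in at most two points, one of which is the common endpoint $p$, so that the closed path above is a Jordan curve (or a tame degeneration of one); the Jordan curve theorem applies, and the fact that every connecting arc leaves $\sigma$ to the left at its starting point fixes on which side of this path the bounded region lies. I expect the real friction to be purely in the bookkeeping of degenerate configurations: connecting arcs that re-touch $\sigma$ at $\sigma(0)$ or $\sigma(1)$ when $p$ is strictly right of $\sigma$, pairs sharing one supporting circle or line, and the extended cuts of Remark~\ref{remark_special_cuts_connecting_arcs} --- each perturbs the closed path slightly and must be checked not to break the nesting argument.
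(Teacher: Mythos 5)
Your handling of reflexivity, totality, and antisymmetry is correct and in fact more careful than the paper's, which dismisses all three as ``immediate''; the argument that $\theta\mapsto\cos\theta$ is injective on $[0,\pi]$, combined with uniqueness of $\arc[\tau,\sigma(t^*),p]$, is a clean way to finish antisymmetry. The reformulation of transitivity as the absence of a strict $3$-cycle, given the other three properties, is also valid, and the case split by the order of the starting points on $\sigma$ is the same organizing device the paper uses.

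The gap is in the proposed key lemma. You want to show that, for $\arc_a(0)\prec_\sigma\arc_b(0)$, the condition $\arc_a\capl\arc_b=\emptyset$ is equivalent to a region inclusion for the closed path built from $\arc_a$, $\reverse{\arc_b}$ and a piece of $\sigma$, and then conclude by transitivity of inclusion. But that closed path need not be a Jordan curve: even when $\arc_a\le\arc_b$, the two arcs can cross at a point $x\ne p$, because $\arc_a\capl\arc_b=\emptyset$ does not imply $\arc_a\capr\arc_b=\emptyset$. The paper flags exactly this in the remark following Definition~\ref{defi_order_endpoint_fix} and in Figure~\ref{figure_examples_order_2}; it is a generic situation, not a degenerate one, so it cannot be dismissed as bookkeeping. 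When it occurs, your closed path is a figure-eight, there is no single bounded region, and the nesting statement as formulated is false. The paper's proof avoids this precisely by truncating at the crossing: assuming $\arc_3<\arc_1$ it isolates the unique left-cut $\arc_1\capl\arc_3=\{t_{13}\}$, $\arc_3\capr\arc_1=\{t_{31}\}$, and forms the Jordan curve $\sigma\restr{[t_1,t_3]}\sqcup\arc_3\restr{[0,t_{31}]}\sqcup\reverse{\arc_1\restr{[0,t_{13}]}}$ from the initial pieces only; $\arc_2$ starts into the bounded component $Z$ while $p\notin\overline{Z}$, so $\arc_2$ must cut $\arc_1\restr{[0,t_{13}]}$ from the right or $\arc_3\restr{[0,t_{31}]}$ from the left on the way out, contradicting $\arc_1\le\arc_2\le\arc_3$. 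To repair your plan you would need to replace the global nesting lemma by this local, truncated Jordan-curve argument (and re-do the second ordering of starting points symmetrically, where the cut points $t_{21},t_{23}$ must additionally be compared); as written, the geometric core of the transitivity step does not go through.
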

\begin{proof}
Reflexivity, totality and antisymmetry are immediate.
In order to prove the transitivity of \enquote{$\leq$}, 
let $\arceins, \arczwei, \arcdrei \in \Gamma(\sigma, p)$ be pairwise distinct with
$\arceins \leq \arczwei$ and $\arczwei \leq \arcdrei$.
We distinguish a total of six different cases based on the possible order 
of the starting points of $\arceins, \arczwei$ and $\arcdrei$ on $\sigma$.
We denote by $t_1,t_2,t_3 \in [0,1]$ the parameters such that
$\sigma(t_1) = \arceins(0), \sigma(t_2) = \arczwei(0)$ and $\sigma(t_3) = \arcdrei(0)$.\\
First, let $t_1 \leq t_2 \leq t_3$.
If $t_1 = t_3$ then $t_1 = t_2 = t_3$ and we get
$\sprod{\sigma'(t_1)}{\arceins'(0)} \leq \sprod{\sigma'(t_1)}{\arczwei'(0)} \leq \sprod{\sigma'(t_1)}{\arcdrei'(0)}$
which yields $\gamma_1 \leq \gamma_3$.\\
So, let $t_1 \not= t_3$ and assume the contrary, $\arcdrei < \arceins$.
By definition, $\arceins \capl \arcdrei \not= \emptyset$, so define
$\arceins \capl \arcdrei =: \{t_{13}\}$ and $\arcdrei \capr \arceins =: \{t_{31}\}$.
We consider the simple closed path
\[
\alpha := \sigma\restr{[t_1,t_3]} \sqcup \arcdrei\restr{[0,t_{31}]} \sqcup \reverse{\arceins\restr{[0,t_{13}]}}
\]
and let $Z$ be the unique connected component in $\RR^2 \setminus \alpha$
that is locally left of $\sigma\restr{[t_1,t_3]}$.
Then, $Z$ is locally right of $\arceins\restr{[0,t_{13}]}$ and
locally left of $\arcdrei\restr{[0,t_{31}]}$ and $\arczwei$ starts into $Z$
meaning that there is an $\varepsilon > 0$ so that
$\arczwei(t) \in Z$ for every $t \in (0,\varepsilon)$.
As $\arcdrei$ cuts $\arceins$ in $t_{31}$ from the right we know that if $t_{31} \not= 1$
then the endpoint $p$ of $\arcdrei$ is not in the closure of $Z$.
Hence, $\arczwei$ either has to cut $\arceins$ from the right or $\arcdrei$ from the left.
If $t_{31} = 1$, with the extensions of cuts for connecting arcs,
cf.~Remark~\ref{remark_special_cuts_connecting_arcs}, we get analogously that
$\arczwei$ cuts $\arceins$ from the right or $\arcdrei$ from the left.
Either way, this yields $\arczwei < \arceins$ or $\arcdrei < \arczwei$ which is a contradiction.\\
We now turn to the case $t_3 \leq t_2 \leq t_1$, $t_1 \not= t_3$.
If $t_1 = t_2$ we define $t_{12} := t_{21} := 0$, otherwise we know
that $\gamma_2 \capl \gamma_1 \not= \emptyset$ and we define
$\gamma_2 \capl \gamma_1 =: \{t_{21}\}$ and $\gamma_1 \capr \gamma_2 =: \{t_{12}\}$.
Likewise, we define $t_{23} := t_{32} := 0$ if $t_2 = t_3$,
$\gamma_3 \capl \gamma_2 =: \{t_{32}\}$ and $\gamma_2 \capr \gamma_3 =: \{t_{23}\}$, otherwise.
We assume $t_{21} \leq t_{23}$, otherwise the proof works analogously.
Similarly to the first case, consider the simple closed path
\[
\alpha := \sigma\restr{[t_3,t_2]} \sqcup \arczwei\restr{[0,t_{23}]} \sqcup \reverse{\arcdrei\restr{[0,t_{32}]}}
\]
and let $Z$ be the connected component in $\RR^2 \setminus \alpha$ that
is locally left of $\sigma\restr{[t_3,t_2]}$.
The case $t_{21} = t_{23} = 1$ is easily verified.
In the case $t_{21} = t_{23} \not= 1$ we know that $\gamma_1$ cuts $\gamma_3$ in $t_{12}$
since $\arc_1 \not= \arc_3$.
If $\gamma_1$ cuts $\gamma_3$ in $t_{12}$ from the left then $\gamma_1$ starts in
$t_{12}$ into $Z$. This would yields at least three intersections of $\gamma_1$ with either
$\gamma_2$ or $\gamma_3$ which would be a contradiction since then the arcs would coincide.
Hence, $\gamma_1$ intersects
$\gamma_3$ in $t_{12}$ from the right and we have $\gamma_1 \leq \gamma_3$.
If $t_{21} < t_{23}$ then $\gamma_1$ starts in $t_{12}$ into $Z$ as $Z$ is locally left
of $\gamma_2\restr{[0,t_{23}]}$ and $\gamma_1$ cuts $\gamma_2\restr{[0,t_{23}]}$ in $t_{12}$ from the right.
As in the first case, we can conclude that $\gamma_1$ has to cut
$\gamma_3\restr{[0,t_{32}]}$ from the right or $\gamma_2\restr{[0,t_{23}]}$ from the left.
As $\gamma_1$ and $\gamma_2$ cannot have three intersections,
$\gamma_1$ has to cut $\gamma_3$ from the right and we have $\gamma_1 \leq \gamma_3$.\\
Now, let us consider the remaining four possibilities how to order $t_1, t_2, t_3$.
We prove this cases by contradiction, so assume $\arcdrei < \arceins$.
In the case $t_1 \leq t_3 \leq t_2$, with $\arczwei \leq \arcdrei$, we are
in the setting of the second case and this yields $\arczwei < \arceins$,
a contradiction.
If $t_2 \leq t_1 \leq t_3$ then with $\arceins \leq \arczwei$ we also are
in the setting of the second case and get $\arcdrei < \arczwei$.
If $t_2 \leq t_3 \leq t_1$ or $t_3 \leq t_1 \leq t_2$ then with
$\arczwei \leq \arcdrei$ or $\arceins \leq \arczwei$ we are in
the setting or the first case and get $\arczwei < \arceins$
or $\arcdrei < \arczwei$, respectively.
Hence, in either case we get a contradiction.\qed
\end{proof}
For the rest of the chapter we treat the problem
of computing a maximal connecting arc with respect to \enquote{$\leq$}
from Definition \ref{defi_order_endpoint_fix}. Note that the minimal
connecting arc must not be considered as the respective results hold because of symmetry.
We use this in the initialization of Algorithm \ref{algo_klopfalgo}
in Section \ref{section_algorithm}.
Assume that $p_1, p_2, p_3$ are three pairwise distinct points.
We know that there exists a unique arc with starting point $p_1$,
endpoint $p_3$ that passes $p_2$ unless 
$p_1 \in [p_2,p_3]$ or $p_3 \in [p_1,p_2]$.
Because of Remark~\ref{remark_arcs_exist} we can ignore these cases
and we will assume that every such arc exists.
Likewise, we assume that for any two points $p_1, p_2, p_1 \not= p_2$ and any direction $\tau \in S^1$
the arc with starting point $p_1$, $\tau$ as unit tangent vector at the start and
endpoint $p_2$ exists.
\begin{prop}\label{prop_maximal_connecting_arc}
Let $\sigma$ be an arc and $p \in \RR^2 \setminus \sigma$.
If $p$ is strictly left of $\sigma$ or $p\in[\sigma]$ then
$\arc[\sigma'(1),\sigma(1),p]$
is the maximal connecting arc in $\Gamma(\sigma, p)$
with respect to \enquote{$\leq$}.
If $p$ is strictly right of $\sigma$ then the maximal connecting arc in $\Gamma(\sigma, p)$
with respect to \enquote{$\leq$} is
$\arc[\sigma(0),\sigma(1),p]$.
\end{prop}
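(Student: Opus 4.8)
The plan is to verify separately the two cases of the proposition by showing that the candidate arc lies in $\Gamma(\sigma,p)$ and then that it dominates every other connecting arc in the order ``$\leq$''. I would begin with the case that $p$ is strictly left of $\sigma$ or $p\in[\sigma]$, where the candidate is $\gamma^{\max}:=\arc[\sigma'(1),\sigma(1),p]$. First I would check membership: $\gamma^{\max}$ starts at $\sigma(1)$ with tangent $\sigma'(1)$, so $\sprod{\normal{\sigma}{1}}{(\gamma^{\max})'(0)}=0\ge 0$, which places it in the set $\Gamma(\sigma,p)\setminus\Gamma^*(\sigma,p)$ described in the remark after Definition~\ref{defi_connecting_arc} (respectively in the explicitly given set $\Gamma(\sigma,p)$ when $p\in[\sigma]$). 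Then, to show maximality, take an arbitrary $\arc\in\Gamma(\sigma,p)$, $\arc\ne\gamma^{\max}$, and verify $\arc\le\gamma^{\max}$: since $\gamma^{\max}(0)=\sigma(1)$ is the $\prec_\sigma$-largest possible starting point, either $\arc(0)\prec_\sigma\gamma^{\max}(0)$ — and then I must show $\arc\capl\gamma^{\max}=\emptyset$, i.e. case~1 of Definition~\ref{defi_order_endpoint_fix} applies — or $\arc(0)=\gamma^{\max}(0)=\sigma(1)$, in which case $\sprod{\sigma'(1)}{\arc'(0)}\le\sprod{\sigma'(1)}{(\gamma^{\max})'(0)}=\|\sigma'(1)\|^2=1$, where the inequality is Cauchy–Schwarz (with equality only for $\arc=\gamma^{\max}$), so case~3 gives $\arc\le\gamma^{\max}$.

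The substantive part is proving $\arc\capl\gamma^{\max}=\emptyset$ when $\arc(0)\prec_\sigma\sigma(1)$. I would argue by contradiction: suppose $\arc$ cuts $\gamma^{\max}$ from the left at some parameter $s$. Using the cut relation as an equivalence (``$\arc_1$ cuts $\arc_2$ from the left iff $\arc_2$ cuts $\arc_1$ from the right''), $\gamma^{\max}$ then cuts $\arc$ from the right. Now consider the simple closed path obtained by concatenating $\sigma\restr{[t_\sigma(\arc(0)),1]}$, the initial part of $\gamma^{\max}$ up to the cut, and the reverse of the initial part of $\arc$ up to the cut; let $Z$ be the component of the complement locally left of $\sigma\restr{[t_\sigma(\arc(0)),1]}$. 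Because both $\arc$ and $\gamma^{\max}$ leave $\sigma$ to the left (this is exactly where Remark~\ref{remark_convex_start_corner} and the fact that the interior $P$ is locally left of $\sigma$ are used, though here the relevant statement is just about the geometry of connecting arcs), $Z$ is a well-defined bounded region, and $\gamma^{\max}$ would have to re-enter $Z$ after the cut. But $\gamma^{\max}$ and $\arc$ share the endpoint $p$ and the arc $\arc$ shares at most one further point with any circle, so this forces $\gamma^{\max}$ and $\arc$ to coincide — contradiction. (Essentially the same topological bookkeeping as in the transitivity proof of the preceding theorem, so I would lean on that machinery rather than redo it in detail.) The delicate subcase is when $\arc(0)=\sigma(0)$ and $\arc$ passes again through $\sigma(1)$, which can only happen when $p$ is strictly right of $\sigma$ — so it does not arise in the left/on case — and in the right case it is handled by the extension of cuts in Remark~\ref{remark_special_cuts_connecting_arcs}.

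For the case $p$ strictly right of $\sigma$, the candidate is $\gamma^{\max}:=\arc[\sigma(0),\sigma(1),p]$, i.e. the connecting arc that passes through both endpoints of $\sigma$; membership follows from the second of the two disjoint sets in the remark after Definition~\ref{defi_connecting_arc} (it has $(\gamma^{\max})^\circ\cap\{\sigma(0),\sigma(1)\}\ne\emptyset$ and $\gamma^{\max}(0)=\sigma(0)\in\sigma$). For maximality I compare an arbitrary $\arc\in\Gamma(\sigma,p)$ with $\gamma^{\max}$. If $\arc(0)\succ_\sigma\gamma^{\max}(0)=\sigma(0)$, then by the extension in Remark~\ref{remark_special_cuts_connecting_arcs}, since $\gamma^{\max}(0)=\sigma(0)\in$ \dots I would show $\gamma^{\max}\capl\arc\ne\emptyset$ (using that $\gamma^{\max}$ passes through $\sigma(1)$, which lies on $\sigma^\circ$-side, to locate the cut), giving case~2. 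If $\arc(0)=\sigma(0)$, then case~3 reduces to a Cauchy–Schwarz comparison of $\sprod{\sigma'(0)}{\arc'(0)}$ against $\sprod{\sigma'(0)}{(\gamma^{\max})'(0)}$, where one checks that $(\gamma^{\max})'(0)$ — the start tangent of the arc through $\sigma(0),\sigma(1),p$ leaving $\sigma$ to the left — maximises this inner product among admissible tangents; this is a short planar-geometry computation about which circular arc through two prescribed points and $p$ has the ``most forward'' tangent. The case $\arc(0)\prec_\sigma\sigma(0)$ is vacuous. I expect the main obstacle to be the topological argument of the second paragraph — correctly setting up the region $Z$ and the re-entry contradiction, including the degenerate subcases where a cut occurs at parameter $1$ (handled via Remark~\ref{remark_special_cuts_connecting_arcs}) — since everything else reduces to Cauchy–Schwarz and bookkeeping already rehearsed in the transitivity proof.
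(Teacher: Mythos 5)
Your overall strategy---verify membership, then show $\gamma^{\max}$ dominates every other connecting arc by splitting on the order of the starting points---matches the paper, but there are genuine gaps in the substantive steps.

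First, in the case that $p$ is strictly left of $\sigma$ with $\arc(0)\prec_\sigma\sigma(1)$, your argument (set up a region $Z$, claim $\gamma^{\max}$ would have to ``re-enter'' it, conclude coincidence) is vague precisely where the work lives. The paper identifies a concrete geometric fact that makes this one line: since $p$ is strictly left of $\sigma$ and $(\gamma^{\max})'(0)=\sigma'(1)$, the circles $[\sigma]$ and $[\gamma^{\max}]$ are tangent at $\sigma(1)$ in such a way that $\sigma$ lies to the right of $\gamma^{\max}$. Consequently any $\arc\in\Gamma(\sigma,p)$ with $\arc(0)\in\sigma^\circ$ starts strictly right of $\gamma^{\max}$, and to cut $\gamma^{\max}$ from the left it would first have to cut $[\gamma^{\max}]$ from the right; together with the shared endpoint $p$ this gives at least three intersection points, forcing $[\arc]=[\gamma^{\max}]$, a contradiction. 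Also note that $\gamma^{\max}$ does \emph{not} ``leave $\sigma$ to the left'' --- it is tangent --- so the premise you invoke for constructing $Z$ is imprecise; and the references to the channel $P$ and Remark~\ref{remark_convex_start_corner} are out of place, since this proposition is a statement about an arbitrary arc $\sigma$ and point $p$ with no channel in sight.

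Second, you fold $p\in[\sigma]$ into the first case, but this misses a degenerate configuration the paper handles separately: when $p\in[\sigma]$, the arc $\gamma^{\max}=\arc[\sigma'(1),\sigma(1),p]$ lies on $[\sigma]$, and so does $\arc[-\sigma'(0),\sigma(0),p]\in\Gamma(\sigma,p)$. Here $[\arc]=[\gamma^{\max}]$ and the ``at most two intersections of two distinct circles'' argument does not apply; one has to argue directly that $\arc\cap\gamma^{\max}=\{p\}$ and $(\gamma^{\max})'(1)=-\arc'(1)$, hence $\arc\capl\gamma^{\max}=\emptyset$. For the remaining arcs with start in $\sigma^\circ$ one instead notes $[\arc]\ne[\sigma]=[\gamma^{\max}]$, so at most two intersections, both already accounted for by the endpoints on $[\sigma]$, and the endpoint tangents differ. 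Third, for $p$ strictly right of $\sigma$ and $\arc(0)=\sigma(0)$, you defer to ``a short planar-geometry computation'' that $(\gamma^{\max})'(0)$ maximises $\sprod{\sigma'(0)}{\cdot}$; this is exactly the step the paper proves topologically: with $\alpha=\gamma^{\max}\restr{[0,t(\sigma(1))]}\sqcup\reverse{\sigma}$ and $Z$ the component locally left of $\sigma$, an arc starting at $\sigma(0)$ into $Z$ would have to cross $\sigma^\circ$ or $\gamma^{\max}$ three times, and the inner-product inequality follows. Leaving that as a black box is a gap, because that is where the maximality actually gets established.
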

\begin{proof}
First, let $p$ be strictly left of $\sigma$ and
let $\arceins = \arc[\sigma'(1),\sigma(1),p]$. Then,
$\arceins$ is in the boundary of $\Gamma^*(\sigma, p)$ and
by Definition \ref{defi_connecting_arc},
$\arceins$ is a connecting arc.
To show that $\arceins$ is maximal we take any $\arczwei \in \Gamma(\sigma, p)$
and show $\arczwei \leq \arceins$.
If $\arczwei(0) = \sigma(1)$ then
$\sprod{\sigma'(1)}{\arczwei'(0)} \leq 1 = \sprod{\sigma'(1)}{\arceins'(0)}$
which proves $\arczwei \leq \arceins$.\\
Now, let $\arczwei(0) \not= \sigma(1)$ and assume the contrary, $\arczwei > \arceins$.
As $\arczwei(0) \prec_\sigma \sigma(1) = \arceins(0)$, $\arczwei > \arceins$ yields
that $\arczwei$ cuts $\arceins$ from the left.
As $p$ is strictly left of $\sigma$ and $\arceins'(0) = \sigma'(1)$,
we know that $\sigma$ is right of $\arceins$.
Hence, $\arczwei$ starts strictly right of $\arceins$ which yields that
$\arczwei$ must cut $[\arceins]$ from the right before
$\arczwei$ can cut $\arceins$ from the left. This is a contradiction
to $\arceins \not= \arczwei$ as it would yield at least three intersections.\\
Now, let $p$ be strictly right of $\sigma$ and let $\arceins = \arc[\sigma(0),\sigma(1),p]$.
Again, $\arceins$ is a connecting arc as it is in the boundary of $\Gamma^*(\sigma,p)$.
Denote by $\alpha$ the simple closed path
$\arceins\restr{[0,t_1(\sigma(1))]} \sqcup \reverse{\sigma}$
and let $Z$ be the connected component of
$\RR^2 \setminus \alpha$ that is locally left of $\sigma$.
Let $\arczwei \in \Gamma(\sigma, p)$.
If $\arczwei(0) = \sigma(0)$, then we know that it does not start into $Z$
as otherwise $\arczwei$ would have to cut $\sigma^\circ$ or $\arc_1$ three times, as $p \not\in Z$.
Hence, if $\arczwei(0) = \sigma(0)$ then
$\sprod{\sigma'(0)}{\arczwei'(0)} \leq \sprod{\sigma'(0)}{\arceins'(0)}$.
If $\arczwei(0) \in \sigma^\circ$ then $\arczwei$ starts into $Z$ and as $p \not\in Z$
and $Z$ is locally right of $\arceins\restr{[0,t_1(\sigma(1))]}$ we know that
$\arczwei$ cuts $\arceins$ from the right, which yields $\arczwei \leq \arceins$.
If $\arc_2(0) = \sigma(1)$, then by Remark~\ref{remark_special_cuts_connecting_arcs}
$\arczwei$ cuts $\arceins$ in $0$ from the right and therefore $\arczwei \leq \arceins$.\\
Now, let $p\in[\sigma]$, $\arceins = \arc[\sigma'(1),\sigma(1),p]$ and $\arczwei \in \Gamma(\sigma, p)$.
If $\arczwei(0) = \sigma(1)$ then $\sprod{\sigma'(1)}{\arczwei'(0)} \leq 1 = \sprod{\sigma'(1)}{\arceins'(0)}$,
so we get $\arczwei \leq \arceins$.
If $\arczwei = \arc[-\sigma'(0),\sigma(0),p]$ then $\arceins \cap \arczwei = \{p\}$
and $\arceins'(1) = -\arczwei'(1)$, which yields $\arczwei \capl \arceins = \emptyset$, so $\arczwei < \arceins$.
Otherwise, we know that $\sprod{\normal{\sigma}{t_\sigma(\arczwei(0))}}{\arczwei'(0)} > 0$,
so $[\arczwei] \not= [\sigma] = [\arceins]$. As $\arczwei(0), \arczwei(1) \in [\arceins]$,
there cannot be a third point that is both on $\arczwei$ and on $[\arceins]$
and we know that $\arceins'(1) \not= \arczwei'(1)$.
This yields $\arczwei \capl \arceins = \emptyset$ and we get $\arczwei < \arceins$.\qed
\end{proof}
\begin{rmk}\label{rmk_maximal_connecting_arc}
Similarly, one can show that an arc $\arceins$ with $\arceins'(0) = \sigma'(t_\sigma(\arceins(0)))$
is greater than any arc $\arczwei$ with $\arczwei(0) \preceq_\sigma \arceins(0)$.
\end{rmk}
\begin{rmk}\label{remark_arcs_exist}
If $p$ is left of $\sigma$, then a maximal connecting arc obviously exists unless
$\sigma'(1)$ and the normalized vector $\sigma(1) - p$ are equal.
If $p$ is strictly right of $\sigma$, then a maximal connecting arc exists unless
$p$ is on the line segment $[\sigma(0),\sigma(1)]$ or $\sigma(0)$
is on the line segment $[p,\sigma(1)]$.
We know that the channel has finite diameter and so
the length of any arc $\arc\subset P$ is bounded from above.
This allows us to ignore arcs exceeding a certain maximal length
and to replace them by a respective arc having this maximal length.
In case of a maximal connecting arc, the maximal connecting arc with
restricted length will still be large enough for all requirements.
Hence, we will simply assume that the maximal connecting arc always exists.
\end{rmk}
\section{Restrictions\label{section_restrictions}}
In this chapter, $\sigma$ will denote the starting arc and
$\kappa$ the boundary of the channel $P$ as stated in
Definition~\ref{defi_channel} and we have $p \in P^\circ$.\\
Restrictions will be the key tool to characterize circular visibility.
Most papers on circular visibility use
the well-known fact that the boundary of the set of circularly visible
points from a given point or edge consists of arcs having two,
or, in the case of the edge, three points in common with the channel boundary,
cf.~\cite{agarwal1993, chou1992, chou1995, Maier2014}.
An example is shown in Figure~\ref{figure_channel_visibility_set},
in which the boundary arc of the visibility set is shown dashed.
We will use a similar approach, but we extend the definition of
restrictions to arcs that are not completely inside the channel.
An example is shown in Figure~\ref{figure_alternating_sequence} (right):
we have a connecting arc $\arc$ with endpoint $p \in P^\circ$ but it is not
a visibility arc as $\arc \not\subset P$.
The question we are interested in is if we can modify $\arc$
such that we obtain a visibility arc ending in $p$ or not.\\
The point $r_2$ shows an example of a restriction from the left
in the \enquote*{classical} sense: $\arc$ would immediately leave the channel
if we would move it slightly to the left at $r_2$.\\
The situation at the point $r_1$ is slightly different:
$\arc$ is leaving the channel in this region.
Nevertheless, since $\arc$ would run within the channel if we would
push it to the left there, we will call this also a \textit{restriction}; 
certainly \enquote*{from the right} in this case, however.
The part of the channel, where the arc is running outside the channel,
shown as thick line, will be called a \textit{violation}.
\begin{rmk}\label{remark_special_cuts}
For continuity reasons, we extend the definition of directional cuts
for the channel boundary. Let $\arc \in \Gamma(\sigma, p)$.
We say that $\kappa$ approaches $\arc$ in $0$ from the right if
$\kappa(0) \in \arc$. If $\arc(0) = \kappa(0)$ then we say that
$\kappa$ leaves $\arc$ in $0$ to the left or right if it leaves
$[\arc]$ in $0$ to the left or right, respectively.
Similarly, if $\arc(0) = \kappa(1)$ then we say that $\kappa$
approaches $\arc$ in $1$ from the left or right if it approaches
$[\arc]$ in $1$ from the left or right, respectively.
\end{rmk}
\begin{figure}[ht]
\hspace*{-20mm}
\begin{overpic}[height=5cm,angle=-33]{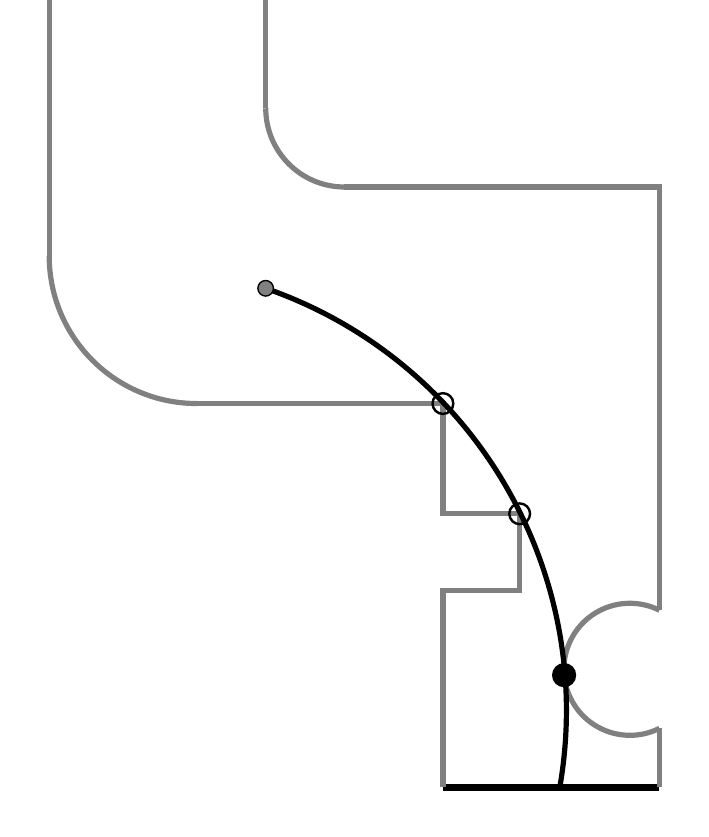}
\put(43,67){$p$}
\put(40,4){$\sigma$}
\put(54,16){$r_1$}
\put(58,33){$r_2$}
\put(58,45){$r'_2$}
\end{overpic}
\hspace*{-25mm}
\begin{overpic}[height=5cm,angle=-33]{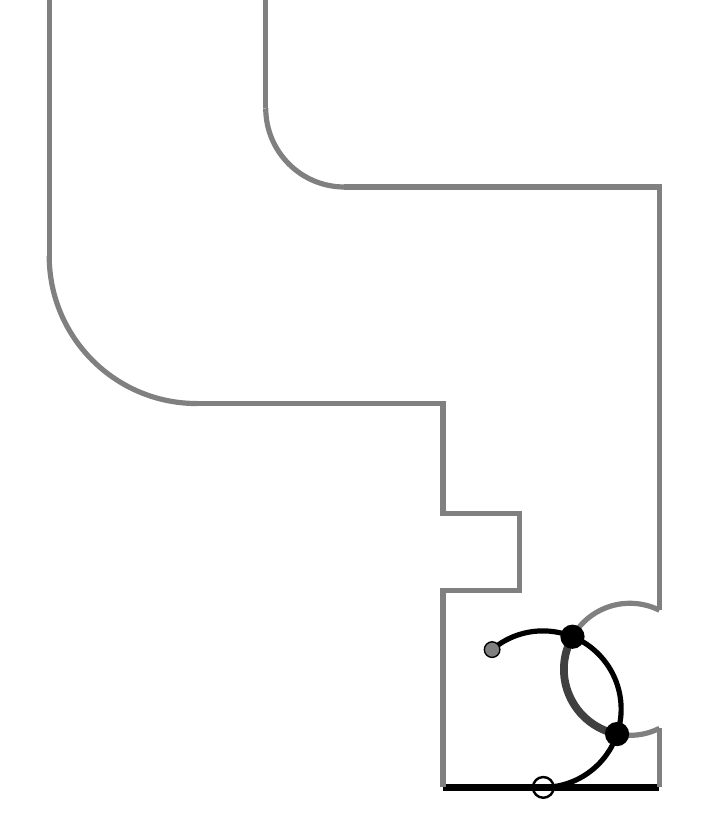}
\put(43,21){$p$}
\put(44,4){$\sigma$}
\put(28,8){$r_1$}
\put(35,10){\vector(4,1){8}}%
\put(64,14){$r_2$}
\put(63,15){\vector(-4,-1){8}}%
\put(58,34){$r'_2$}
\put(59,32){\vector(-1,-2){4}}%
\end{overpic}
\hspace*{-25mm}
\begin{overpic}[height=5cm,angle=-33]{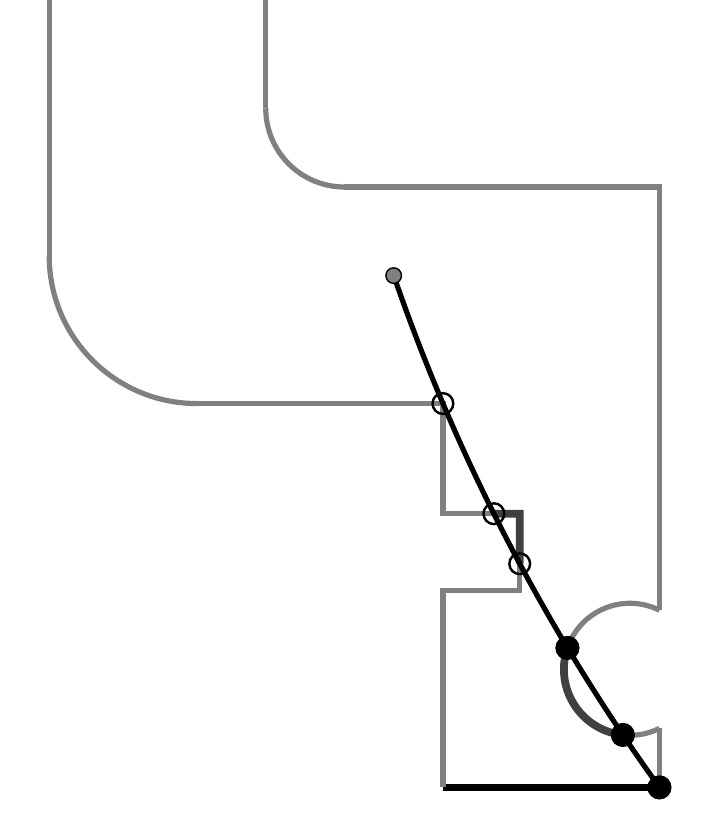}
\put(53,61){$p$}
\put(40,4){$\sigma$}
\end{overpic}\\
\caption{Illustration of restrictions and violations\label{figure_alternating_sequence}}
\end{figure}
\begin{defi}[Restriction\label{defi_restriction}]
Let $\alpha : [0,1] \to \RR^2$ be an arc spline and $\arc$ an arc.
Then we define $\Delta(\gamma, \alpha) \in \ZZ$ by
\begin{align*}
 \Delta(\gamma, \alpha) := \;\,
   & |\{t \in [0,1] \fdg \alpha \text{ approaches } \arc \text{ in } t \text{ from the right} \} |\\
 - \; & |\{t \in [0,1] \fdg \alpha \text{ approaches } \arc \text{ in } t \text{ from the left} \} | \\
 + \; & |\{t \in [0,1] \fdg \alpha \text{ leaves } \arc \text{ in } t \text{ to the left} \} |\\
 - \; & |\{t \in [0,1] \fdg \alpha \text{ leaves } \arc \text{ in } t \text{ to the right} \} |.
\end{align*}
With $p\in P^\circ$, $\arc\in\Gamma(\sigma, p)$ and $q \in \kappa$, we define
\[
\Delta_q(\gamma, \kappa) := \Delta(\gamma, \kappa\restr{[0, t_\kappa(q)]}).
\]
We call $q \in \kappa$ a \emph{restriction from the right of $\gamma$} if
$\Delta_q(\gamma, \kappa) = 1$, a \emph{restriction from the left of $\gamma$} if
$\Delta_q(\gamma, \kappa) = -1$, a \emph{violation from the right of $\gamma$} if
$\Delta_q(\gamma, \kappa) > 1$ and a \emph{violation from the left of $\gamma$} if
$\Delta_q(\gamma, \kappa) < -1$.\\
Furthermore, we call $q = \arc(0)$ a \emph{starting restriction from the left} if $\sigma$
is locally right of $[\arc]$ at $\arc(0)$ and we call it a 
\emph{starting restriction from the right} if $\sigma$
is locally left of $[\arc]$ at $\arc(0)$.
\end{defi}
\begin{rmk}
In Figure~\ref{figure_alternating_sequence} restrictions and violations are
depicted for some connecting arcs.
Restrictions from the left are depicted as small circles, restrictions from the
right are depicted as filled circles and the bold dark parts of the channel represent
violations. On the left-hand side there is one restriction
from the right, $r_1$, and two restrictions from the left, $r_2$ and $r'_2$,
and there is no violation.
The figure in the middle has a starting restriction from the left, $r_1$,
two restrictions from the right, $r_2$ and $r'_2$, and every point of the dark
part of the channel that is between $r_2$ and $r'_2$ is a violation from the right.
\end{rmk}
\begin{rmk}\label{rmk_violation_tolerance}
In practice, it is often neither necessary nor reasonable to regard every small violation.
In that case you simply regard only relevant violations having a distance to
the respective arc that is greater than a certain threshold.
This threshold is a parameter that can be chosen freely and is only dependent of
the application.
It defines how strictly the border of the channel has to be regarded.
\end{rmk}
If $\alpha$ is a simple closed arc spline and $\arc$ is an arc with
$\arc(0), \arc(1) \not\in \alpha$ then
the following lemma shows that the value of $\Delta(\arc, \alpha)$
can be determined just by analyzing in which connected
component of $\RR^2 \setminus \alpha$ the points $\arc(0)$ and $\arc(1)$
are. Note that this result is connected to the behavior of the
winding number along a path, known from algebraic topology or complex analysis,
see \cite{roe2015}.
\begin{prop}\label{prop_delta_vanishes}
Let $\alpha$ be a simple closed arc spline whose interior $I_\alpha$
is locally left of $\alpha$ and $\arc$ an arc with
$\arc(0), \arc(1) \not\in \alpha$.\\
If $\arc(0), \arc(1) \in I_\alpha$ or $\arc(0), \arc(1) \not\in I_\alpha$ then $\Delta(\arc, \alpha) = 0$.
If $\arc(0) \in I_\alpha$ and $\arc(1) \not\in I_\alpha$ then $\Delta(\arc, \alpha) = 2$ and
if $\arc(0) \not\in I_\alpha$ and $\arc(1) \in I_\alpha$ then $\Delta(\arc, \alpha) = -2$.
\end{prop}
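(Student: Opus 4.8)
The plan is to relate the quantity $\Delta(\arc,\alpha)$ to an intersection count between $\arc$ and $\alpha$, and then invoke a standard Jordan-curve parity/winding argument. First I would observe that each parameter $t$ contributing to one of the four sets in Definition~\ref{defi_restriction} corresponds to a transversal crossing of $\alpha$ through the circle or line $[\arc]$, and that the signs in the definition are arranged precisely so that $\alpha$ ``approaching from the right then leaving to the left'' (a genuine crossing from right to left relative to $\arc$) contributes $+2$, a crossing from left to right contributes $-2$, and a tangential touch that does not cross contributes $0$. More carefully: since $\alpha$ is a \emph{closed} arc spline with $\alpha(0)=\alpha(1)\notin[\arc]$ (this holds because $\arc(0),\arc(1)\notin\alpha$ forces, after a harmless general-position reduction, $\alpha(0)\notin[\arc]$), the multiset of $t$'s where $\alpha$ meets $[\arc]$ consists of crossings and touches; the ``approaches from the right'' set and the ``leaves to the right'' set are, up to the touches, in bijection, and likewise on the left, so $\Delta(\arc,\alpha)=2(c_{r\to l}-c_{l\to r})$ where $c_{r\to l}$ counts left-to-right-relative crossings etc. (I would phrase this as: $\tfrac12\Delta(\arc,\alpha)$ equals the signed number of times $\alpha$ crosses $[\arc]$ from the right side of $\arc$ to the left side.)

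Next I would reduce to the finite piece of $\arc$: since $\arc(0),\arc(1)\notin\alpha$ and $\alpha$ is a simple closed curve, the Jordan curve theorem applies, and I want to show the signed crossing count above equals the signed number of times $\alpha$ crosses the \emph{subarc} $\arc([0,1])$ rather than the whole circle/line $[\arc]$. For this I would use that $[\arc]\setminus\arc([0,1])$ is an arc (or two rays) whose two endpoints are $\arc(0)$ and $\arc(1)$; crossings of $\alpha$ through that complementary piece come in canceling $\pm$ pairs unless separated by an endpoint of $\arc$, i.e. unless $\arc(0)$ or $\arc(1)$ lies in $I_\alpha$. Concretely, I would parametrize $[\arc]$ as a circle (the line case is analogous, using a point at infinity), note $\alpha$ is null-homotopic in $\RR^2\setminus\{\arc(0)\}$ and in $\RR^2\setminus\{\arc(1)\}$ when those points are outside $\alpha$'s image but the relevant statement is about which \emph{side} of $\arc([0,1])$ the endpoints are approached from — so the signed count along the complementary piece telescopes to a boundary term depending only on whether $\arc(0)$ and $\arc(1)$ are inside or outside $I_\alpha$.

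Then the case analysis is immediate: the signed crossing number of $\alpha$ with the full circle $[\arc]$ is $0$ (a closed curve crosses a circle an even number of times with net sign zero, by Jordan on $[\arc]$ itself — $\alpha$ enters and exits the disk bounded by $[\arc]$ equally often, with the right/left-relative signs matching the in/out signs because $I_\alpha$ is locally left of $\alpha$); subtracting the contribution of the complementary piece, which by the telescoping above equals $0$ when $\arc(0),\arc(1)$ are on the same side of $\alpha$ and $\pm2$ when they are on opposite sides, yields $\Delta(\arc,\alpha)\in\{0,+2,-2\}$ exactly as claimed, with the sign of the ``$\arc(0)\in I_\alpha$, $\arc(1)\notin I_\alpha$'' case coming out to $+2$ because of the orientation convention $I_\alpha$ locally left of $\alpha$ combined with the sign convention in $\Delta$.

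The main obstacle I expect is the bookkeeping in the middle step: making rigorous the claim that the signed crossings of $\alpha$ with the complementary piece $[\arc]\setminus\arc([0,1])$ telescope to a clean boundary term, and handling tangential (non-crossing) touches and the line-segment case uniformly. This is really a winding-number computation — as the paper's remark before the proposition already hints (``connected to the behavior of the winding number,'' cf.~\cite{roe2015}) — so I would make it precise by replacing $[\arc]$ with a closed loop (adding the complementary arc of the circle, or closing up the line through infinity) and identifying $\tfrac12\Delta(\arc,\alpha)$ with the difference of winding numbers of $\alpha$ around the two ``sides'' near $\arc(0)$ versus $\arc(1)$; once phrased that way the result is a one-line consequence of the homotopy invariance of the winding number together with Jordan's theorem for $\alpha$.
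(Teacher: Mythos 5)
Your plan has the same skeleton as the paper's proof---close $\arc$ up to a Jordan curve, use closedness of $\alpha$ to make the total signed count vanish, and telescope along the closing piece---but the paper closes with a bespoke arc spline $\beta=\arc\sqcup\beta_1\sqcup\cdots\sqcup\beta_n$ in which each $\beta_i$ meets $\alpha$ in exactly one transversal, non-breakpoint crossing, so that $\Delta(\beta_i,\alpha)=\pm 2$ is immediate and the conclusion follows from a clean alternating sum. You close with the full circle or line $[\arc]$ and then have to subtract the contribution of $\arc^c:=[\arc]\setminus\arc([0,1])$, an arc that can meet $\alpha$ with touches, tangencies and breakpoint incidences just as badly as $\arc$ itself. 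This trades away the construction of the $\beta_i$, treats circles and lines less uniformly (a line only closes up through the point at infinity), and pushes all the difficulty into the ``complementary piece'' step; done correctly it would yield the winding-number identity the paper only alludes to, but it must carry the full weight of the argument.

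And that step is where your write-up has genuine gaps. The claim that crossings of $\alpha$ through $\arc^c$ ``come in canceling $\pm$ pairs unless separated by an endpoint of $\arc$'' is not a correct description of the telescoping: the right argument traverses $\arc^c$ from $\arc(1)$ to $\arc(0)$, observes that each transversal crossing flips $\arc^c$ between $I_\alpha$ and $E_\alpha$, and concludes that the count of $I_\alpha\to E_\alpha$ transitions minus $E_\alpha\to I_\alpha$ transitions is $0$ or $\pm 1$ according to the sides on which $\arc(1)$ and $\arc(0)$ lie. More importantly, you never make the sign translation that turns this into a statement about $\Delta$: a crossing where $\arc^c$ passes from $I_\alpha$ to $E_\alpha$ is one where $\arc^c$ cuts $\alpha$ from the left (because $I_\alpha$ is locally left of $\alpha$), and hence, by the reciprocity from Section~\ref{section_notation} that $\arc_1$ cuts $\arc_2$ from the left if and only if $\arc_2$ cuts $\arc_1$ from the right, one where $\alpha$ cuts $\arc^c$ from the right, contributing exactly $+2$ to $\Delta(\arc^c,\alpha)$. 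That is the \emph{only} place the hypothesis ``$I_\alpha$ locally left of $\alpha$'' enters and it is what pins down the signs $\pm2$; your sentence ``with the right/left-relative signs matching the in/out signs because $I_\alpha$ is locally left of $\alpha$'' misattributes it to the $[\arc]$ step, where only the orientation convention for arcs matters and $\alpha$'s interior plays no role. Finally, the claim that $\alpha$ is null-homotopic in $\RR^2\setminus\{\arc(0)\}$ is false precisely when $\arc(0)\in I_\alpha$, the case you need to detect, and ``$\arc(0),\arc(1)\notin\alpha$ forces, after a harmless reduction, $\alpha(0)\notin[\arc]$'' is a non sequitur: the correct move, as in the paper, is simply to reparametrize the closed curve $\alpha$ so that its basepoint avoids the closed curve you built, which is always possible but has nothing to do with $\arc(0)$ or $\arc(1)$.
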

\begin{proof}
Let $\beta_1, \beta_2, \ldots. \beta_n$ be arcs such that
$\beta := \arc \sqcup \beta_1 \sqcup \cdots \sqcup \beta_n$ is simple closed
and for $i \in \{1, \ldots, n\}$ we have $\beta_i \cap \alpha = \{p_i\}$,
$p_i \in \beta_i^\circ$, $p_i$ is not a breakpoint of $\alpha$
and $\alpha$ cuts $\beta_i$ in $t_\alpha(p_i)$ either from the left
or from the right.
To simplify notation we write $\beta_0 := \gamma$.\\
First, we show $\sum_{i=0}^n\Delta(\beta_i, \alpha) = 0$.
We can always assume that $\alpha(0) \not\in \beta$ as $\alpha$ is closed
and we define $0 = t_1 < t_2 < \cdots < t_m = 1$ such that
for every $j \in \{1, \ldots, m - 1\}$ we have $\alpha(t_j) \not\in \beta$
and $\{t \in [t_j, t_{j+1}] \fdg \alpha(t) \in \beta\}$
is a single nonempty interval.
As $\beta$ is simple closed, the interior $I_\beta$ and the exterior
$E_\beta$ of $\beta$ are well-defined and we know that either $I_\beta$
is locally left of $\beta_j$ and $E_\beta$ is locally right of $\beta_j$
for every $j\in\{0, \ldots, n\}$ or vice versa.
We assume that $I_\beta$ is locally left of $\beta_0$,
since the arguments work analogously, otherwise.
Let $j\in\{1, \ldots, m-1\}$.
If $\alpha(t_j) \in I_\beta$ and $\alpha(t_{j+1}) \in E_\beta$
then we know that there is exactly one $t \in (t_j, t_{j+1})$,
one $t' \in (t_j, t_{j+1})$ and one $i \in \{0, \ldots, n\}$ such that
$\alpha$ approaches $\beta_i$ in $t$ from the left and $\alpha$ leaves
$\beta_i$ in $t'$ to the right.
With this unique index $i$, we know that
$\Delta(\beta_k, \alpha\restr{[t_j, t_{j+1}]}) = -2$ if $k=i$
and that it vanishes, otherwise. This yields
$\sum_{i=0}^n\Delta(\beta_i, \alpha\restr{[t_j, t_{j+1}]}) = -2$.
Analogously, we get $\sum_{i=0}^n\Delta(\beta_i, \alpha\restr{[t_j, t_{j+1}]}) = 2$
if $\alpha(t_j) \in E_\beta$ and $\alpha(t_{j+1}) \in I_\beta$
and we get that $\sum_{i=0}^n\Delta(\beta_i, \alpha\restr{[t_j, t_{j+1}]})$
vanishes if both $\alpha(t_j)$ and $\alpha(t_{j+1})$ are either in
$I_\beta$ or in $E_\beta$.
As $\alpha$ is closed and $\alpha(t_m) = \alpha(t_1)$
there are just as many $j \in \{1, \ldots, m - 1\}$ with
$\sum_{i=0}^n\Delta(\beta_i, \alpha\restr{[t_j, t_{j+1}]}) = -2$
as with $\sum_{i=0}^n\Delta(\beta_i, \alpha\restr{[t_j, t_{j+1}]}) = 2$.
This yields $\sum_{i=0}^n\Delta(\beta_i, \alpha) =
\sum_{j=1}^{m-1}\sum_{i=0}^n\Delta(\beta_i, \alpha\restr{[t_j, t_{j+1}]}) = 0$.\\
Now, let $i \in \{1, \ldots, n\}$, $\beta_i(0) \in I_\alpha$
and let $\alpha = \alpha_1 \sqcup \alpha_2 \sqcup \cdots \sqcup \alpha_m$
with arcs $\alpha_1, \ldots, \alpha_m, m\in \NN$.
We denote the exterior of $\alpha$ by $E_\alpha$.
Since $\alpha$ cuts $\beta_i$ exactly once from the left or from the right
and since no intersection is a breakpoint of $\beta$ or $\alpha$, we know that
$\beta_i(1) \in E_\alpha$ and that there is a unique $j \in \{1, \ldots, m\}$
such that $\beta_i \cap \alpha_j \not= \emptyset$.
As $I_\alpha$ is locally left of $\alpha_j$, we know that $\beta_i$
cuts $\alpha_j$ from the left. Hence, there is exactly one $t\in[0,1]$
such that $\alpha_j(t) \in \beta_i$ and $\alpha_j$ cuts $\beta_i$ in $t$ from the right.
This yields $\Delta(\beta_i, \alpha) = \Delta(\beta_i, \alpha_j) = 2$.
Analogously, we get $\Delta(\beta_i, \alpha) = -2$ if $\beta_i(0) \in E_\alpha$.\\
Assume $\arc(0) \in I_\alpha$ and $\arc(1) \in E_\alpha$. Then,
$\beta_1(0) \in E_\alpha, \beta_n(1) \in I_\alpha$ and
$\beta_{i+1}(0) = \beta_i(1) \in I_\alpha$ if $\beta_{i}(0) \in E_\alpha, i\in\{1, \ldots, n-1\}$
and vice versa.
Hence, $\Delta(\arc, \alpha) = \Delta(\beta_0, \alpha)
= - \sum_{i=1}^n \Delta(\beta_i, \alpha) = - (-2 + 2 - \cdots + 2 - 2) = 2$.
Similarly, one can prove the remaining three cases.\qed
\end{proof}
\begin{lem}\label{lemma_restriction_well_defined}
Let $p \in P^\circ$ and $\arc \in \Gamma(\sigma, p)$.
We have $\Delta(\arc, \kappa) = 0$ if $\kappa(1) \not\in \arc$
and $\Delta(\arc, \kappa) = -1$, otherwise.
\end{lem}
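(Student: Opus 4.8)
The plan is to read off the value of $\Delta(\arc,\kappa)$ from Proposition~\ref{prop_delta_vanishes}, applied to the simple closed arc spline $\alpha := \sigma \sqcup \kappa$ of Definition~\ref{defi_channel}, whose interior $I_\alpha$ equals $P^\circ$ and is locally left of $\alpha$, hence of both $\sigma$ and $\kappa$. Since $\arc \in \Gamma(\sigma,p)$ we have $\arc(0)\in\sigma$, $\arc(1)=p\in P^\circ$, and $\arc$ leaves $\sigma$ in $0$ to the left, i.e.\ into $P^\circ$; by the remark following Definition~\ref{defi_connecting_arc} the set $\arc\cap\sigma$ equals $\{\arc(0)\}$ unless $p$ is strictly right of $\sigma$, in which case $\arc^\circ$ may in addition meet exactly one of $\sigma(0),\sigma(1)$ (at most one, since $[\arc]$ meets $[\sigma]$ in at most two points when $[\arc]\ne[\sigma]$). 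By Remark~\ref{remark_arcs_exist} I would restrict to arcs in general position (no tangential contact with $\kappa$, transversal intersections with $[\sigma]$, $[\arc]\ne[\sigma]$) and dispose of the finitely many degenerate configurations by an extra small perturbation.

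First I would perturb $\arc$ to a genuine arc $\hat{\arc}$ — a small translation in a direction pointing into $P^\circ$ at $\arc(0)$, together, if needed, with a small local adjustment near an interior contact of $\arc$ with $\sigma(0)$ or $\sigma(1)$ — chosen so that $\hat{\arc}(0),\hat{\arc}(1)\in P^\circ$, so that $\hat{\arc}$ avoids $\sigma$ and every breakpoint of $\alpha$ as well as the two endpoints $\kappa(0)=\sigma(1)$ and $\kappa(1)=\sigma(0)$ of $\kappa$, and so that $\hat{\arc}$ crosses $\kappa$ transversally at exactly the points where $\arc$ does and with the same orientations. Then Proposition~\ref{prop_delta_vanishes} yields $\Delta(\hat{\arc},\alpha)=0$, because $\hat{\arc}(0),\hat{\arc}(1)\in I_\alpha$. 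Since $\hat{\arc}$ meets no breakpoint of $\alpha$ and none of the extensions of Remark~\ref{remark_special_cuts} is triggered, the four point sets defining $\Delta(\hat{\arc},\alpha)$ split cleanly along the junction $\sigma(1)=\kappa(0)$, so $\Delta(\hat{\arc},\alpha)=\Delta(\hat{\arc},\sigma)+\Delta(\hat{\arc},\kappa)$; and $\hat{\arc}\cap\sigma=\emptyset$ gives $\Delta(\hat{\arc},\sigma)=0$. Hence $\Delta(\hat{\arc},\kappa)=0$.

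It then remains to compare $\Delta(\arc,\kappa)$ with $\Delta(\hat{\arc},\kappa)$. By the choices above, the two can differ only through the contributions to $\Delta(\arc,\kappa)$ coming from points where $\arc$ meets $\{\sigma(0),\sigma(1)\}=\{\kappa(1),\kappa(0)\}$: a contact lying in $\sigma^\circ$, or the endpoint $\arc(1)=p$, has positive distance from $\kappa$ and contributes nothing to the discrepancy. I would compute the contributions of a contact with $\kappa(0)$ or $\kappa(1)$ case by case from the extensions of directional cuts in Remark~\ref{remark_special_cuts}, using the corner conditions $\sprod{\normal{\sigma}{1}}{\kappa'(0)}>0$ and $\sprod{\normal{\sigma}{0}}{\kappa'(1)}<0$ of Remark~\ref{remark_convex_start_corner}. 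At the \emph{start} $\kappa(0)=\sigma(1)$ of $\kappa$, a contact of $\arc$ produces both an ``approach'' term (via the extension $\kappa(0)\in\arc\Rightarrow\kappa$ approaches $\arc$ in $0$ from the right) and a ``leave'' term, and their sum is even; it is cancelled exactly by the change, under the perturbation, in how often $\arc$ versus $\hat{\arc}$ enters and leaves $P^\circ$ across $\kappa$ near that corner (here one uses that $\kappa'(0)$ points strictly into the half plane left of $\sigma$, and that $\hat{\arc}$ starts inside $P^\circ$ while $\arc(0)=\sigma(1)\in\partial P$). At the \emph{endpoint} $\kappa(1)=\sigma(0)$ of $\kappa$, only an ``approach'' term is counted — there is no ``leave'' at parameter $1$ — so a contact there contributes an \emph{odd} amount; since every other contribution just described is even and $\Delta(\hat{\arc},\kappa)=0$, this forces $\Delta(\arc,\kappa)=-1$ when $\kappa(1)\in\arc$ and $\Delta(\arc,\kappa)=0$ otherwise.

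The main obstacle will be this last comparison: carrying out the sign-and-parity bookkeeping at the two corners $\kappa(0),\kappa(1)$ and matching it against the change of the crossing pattern under the perturbation. The decisive asymmetry is structural: $\kappa(0)$ is an interior vertex of the closed curve $\alpha$ and carries both a ``leave'' and an ``approach'' contribution (even total), whereas $\kappa(1)$ is the terminal point of $\kappa$ and carries only an ``approach'' contribution (odd total); once the even parts are seen to cancel against the crossing count, Proposition~\ref{prop_delta_vanishes} pins the remaining odd part to exactly $-1$. A secondary nuisance is choosing the perturbation so that $\hat{\arc}$ genuinely avoids $\sigma$ when $\arc^\circ$ passes through $\sigma(0)$ or $\sigma(1)$; if no single translation achieves this, one performs the perturbation in two localized stages and checks that each alters $\Delta(\cdot,\kappa)$ in a controlled way.
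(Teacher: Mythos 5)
Your overall route is the same as the paper's---apply Proposition~\ref{prop_delta_vanishes} to get $\Delta(\cdot,\sigma\sqcup\kappa)=0$, split additively as $\Delta(\cdot,\sigma)+\Delta(\cdot,\kappa)$, and isolate the boundary corners. The crucial difference is \emph{how} you make the endpoints of the test arc land in $P^\circ$. The paper replaces $\arc$ by the (reparametrized) restriction $\arc\restr{[\varepsilon,1]}$, which is still an arc, meets $\kappa$ exactly where $\arc$ does, and has both endpoints in $P^\circ$; Proposition~\ref{prop_delta_vanishes} then applies directly, and the remaining correction is a single explicit count of how $\sigma$ meets $\arc\restr{[\varepsilon,1]}$ at $\sigma(0)$ or $\sigma(1)$. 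You instead perturb $\arc$ to a nearby curve $\hat{\arc}$, and this is where two genuine gaps appear.

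First, your $\hat{\arc}$ is not necessarily an arc. A translation of an arc is an arc, but the ``small local adjustment near an interior contact of $\arc$ with $\sigma(0)$ or $\sigma(1)$'' produces a curve that is no longer a single circular arc. Proposition~\ref{prop_delta_vanishes}, as stated and proved in the paper, is for an arc $\arc$ (it builds a simple closed arc spline $\arc\sqcup\beta_1\sqcup\cdots\sqcup\beta_n$); to use it on a general perturbed curve you would first have to re-prove it in that generality. The paper's restriction device avoids this entirely.

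Second, and more seriously, the comparison $\Delta(\arc,\kappa)-\Delta(\hat{\arc},\kappa)$ is where all the content of the lemma lives, and your argument for it is a parity statement that cannot yield the exact value. Knowing that the corner $\kappa(1)=\sigma(0)$ contributes ``an odd amount'' while everything else contributes ``an even amount'' only tells you $\Delta(\arc,\kappa)$ is odd when $\kappa(1)\in\arc$; it does not rule out $+1$, $-3$, etc. Your closing assertion that ``Proposition~\ref{prop_delta_vanishes} pins the remaining odd part to exactly $-1$'' is not an argument: that proposition has already been spent on $\hat{\arc}$, whose endpoints are strictly interior, and it says nothing further about $\arc$ itself. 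The paper instead determines the sign directly: it shows $\Delta(\arc\restr{[\varepsilon,1]},\sigma)$ equals $+1$ precisely when $\sigma$ leaves $\arc\restr{[\varepsilon,1]}$ in $0$ to the left (i.e.\ when $\sigma(0)=\kappa(1)\in\arc$), and equals $0$ otherwise (after the Remark~\ref{remark_special_cuts} correction at $\sigma(1)=\kappa(0)$), and then reads $\Delta(\arc,\kappa)=-\Delta(\arc\restr{[\varepsilon,1]},\sigma)$ off the decomposition. You would need to carry out an analogous explicit directional computation at $\kappa(1)$ using the corner condition $\sprod{\normal{\sigma}{0}}{\kappa'(1)}<0$ and the fact that $\arc$ leaves $\sigma$ to the left; as written, this step is missing, and the proof is not complete without it.

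A further, smaller caveat: you also have to justify that $\hat{\arc}$ can be chosen to cross $\kappa$ ``at exactly the points where $\arc$ does and with the same orientations.'' Near a corner $\kappa(0)$ or $\kappa(1)$ that $\arc$ touches, this is exactly what is in doubt, and the number of crossings there is what you are trying to compute; so this assumption quietly begs part of the question.
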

\begin{proof}
First, assume $\arc(0) \in \sigma^\circ$.
Then there exists an $\varepsilon > 0$ such that
${\arc([0,\varepsilon]) \cap \kappa} = \emptyset$ and we know that $\arc(\varepsilon)$
and $\arc(1)$ are in $P^\circ$, that is in the same connected component of $\RR^2 \setminus (\sigma \sqcup \kappa)$.
With Proposition~\ref{prop_delta_vanishes} we get that ${\Delta(\arc\restr{[\varepsilon, 1]}, \sigma \sqcup \kappa)}$ vanishes.
If $\arc\restr{[\varepsilon, 1]} \cap \sigma = \emptyset$ then we get
$\Delta(\arc\restr{[\varepsilon, 1]}, \kappa) = 0$ and finally $\Delta(\arc, \kappa) = 0$.
Otherwise, either $\sigma$ approaches $\arc\restr{[\varepsilon, 1]}$ in $1$
from the right or it leaves $\arc\restr{[\varepsilon, 1]}$ in $0$ to the left.
In the first case this yields $\Delta(\arc\restr{[\varepsilon, 1]}, \kappa) = -1$
as $\Delta(\arc\restr{[\varepsilon, 1]}, \sigma) = 1$.
Taking into account Remark~\ref{remark_special_cuts},
we get $\Delta(\arc\restr{[\varepsilon, 1]}, \kappa) = 0$ as $\kappa$ is said to
approach $\gamma$ in $0$ from the right if $\kappa(0) \in \gamma$.
Hence, we get $\Delta(\arc, \kappa) = 0$ in this case too.
If $\sigma$ leaves $\arc\restr{[\varepsilon, 1]}$ in $0$ to the left, then also
$\Delta(\arc\restr{[\varepsilon, 1]}, \kappa) = -1$ as
$\Delta(\arc\restr{[\varepsilon, 1]}, \sigma) = 1$.
As in this case $\kappa(1) = \sigma(0) \in \gamma$, this is the claimed statement.\\
Now, assume $\arc(0) = \sigma(1)$.
If $\kappa$ leaves $[\arc]$ in $0$ to the right then there exists an $\varepsilon > 0$
such that $\arc((0,\varepsilon)) \cap \kappa = \emptyset$ and $\arc(\varepsilon) \in P^\circ$.
As in the first case, we get $\Delta(\arc\restr{[\varepsilon, 1]}, \kappa) = 0$.
Again by Remark~\ref{remark_special_cuts}, $\kappa$ approaches $\arc([0,\varepsilon])$
in $0$ from the right and it leaves $\arc([0,\varepsilon])$ in $0$ to the right.
This yields $\Delta(\arc\restr{[0, \varepsilon]}, \kappa) = 0$, thus $\Delta(\arc, \kappa) = 0$.
If $\kappa$ leaves $[\arc]$ in $0$ to the left then there exists an $\varepsilon > 0$
such that $\arc((0,\varepsilon)) \cap \kappa = \emptyset$ and $\arc(\varepsilon) \in \RR^2 \setminus P$.
As Proposition~\ref{prop_delta_vanishes} yields $\Delta(\arc\restr{[\varepsilon, 1]}, \sigma \sqcup \kappa) = -2$,
we get $\Delta(\arc\restr{[\varepsilon, 1]}, \kappa) = -2$, analogously to the first case.
Again by Remark~\ref{remark_special_cuts} we get
$\Delta(\arc\restr{[0, \varepsilon]}, \kappa) = 2$ as $\kappa$ approaches
$\arc\restr{[0, \varepsilon]}$ in $0$ from the right and it leaves
$\arc\restr{[0, \varepsilon]}$ in $0$ to the left, which yields $\Delta(\arc, \kappa) = 0$.
If $\kappa$ neither leaves $\arc$ in $0$ to the left nor to the right then
we can use the same argumentation for the first $t \in [0,1]$ in which
$\kappa$ leaves $\arc$. The case $\arc(0) = \sigma(0)$ can be shown analogously.\qed
\end{proof}
\begin{rmk}
$\Delta_q(\gamma, \kappa)$ is odd if and only if $q \in \gamma$, so
a restriction of $\gamma$ is on $\gamma$.
\end{rmk}
\begin{defi}[Alternating Sequence]
We call $(a_1, a_2, \ldots, a_n), n\in\NN$, an \emph{alternating sequence of length $n$}
of a connecting arc $\arc$ if $a_1 \preceq_{\arc} a_2 \prec_{\arc} \cdots \prec_{\arc} a_n$,
if $a_1, a_3, \ldots$ are restrictions from the left and
$a_2, a_4, \ldots$ are restrictions from the right or vice versa.\\
We call $(a_1, a_2, \ldots, a_n)$ \emph{left-blocking} if $a_1$
is a restriction from the left and \emph{right-blocking}, otherwise.
\end{defi}
Note that $a_1$ must be a starting restriction if $a_1 = a_2$.
In Figure~\ref{figure_alternating_sequence} (left and right)
one can see a connecting arc having a right-blocking alternating sequence of length two,
in the middle there is a left-blocking alternating sequence of length two.
Note that there is no alternating sequence of length three in any of the three figures.\\
Good examples of alternating sequences are also depicted in Figure~\ref{figure_algorithm},
although the alternating sequences are not marked explicitly.
In every figure the connecting arc has a right-blocking alternating sequence of length
two.
\begin{lem}\label{lemma_delta_difference_on_arcs}
Let $p \in P^\circ$, $\arceins, \arczwei \in \Gamma(\sigma, p)$, $\arceins < \arczwei$ and
$q \in (\arceins \cup \arczwei) \cap \kappa$.
\begin{enumerate}
\item If $\arceins(0) \prec_\sigma \arczwei(0)$ and $\arceins \cap \arczwei \subset \{p\}$
then $\Delta_q(\arczwei, \kappa) - \Delta_q(\arceins, \kappa) = 1$.
\item If $\arceins(0) = \arczwei(0)$ then
\[
\Delta_q(\arczwei, \kappa) - \Delta_q(\arceins, \kappa) =
\begin{cases}
0, & \text{ if } q = \arceins(0),\\
1, & \text{ otherwise.}\hspace*{25mm}
\end{cases}
\]
\item If $\arceins(0) \prec_\sigma \arczwei(0)$ and $\arceins \cap \arczwei \not\subset \{p\}$
then with $\arceins \capr \arczwei =: \{t_1\}$ and $\arczwei \capl \arceins =: \{t_2\}$ we know:
\[
\Delta_q(\arczwei, \kappa) - \Delta_q(\arceins, \kappa) =
\begin{cases}
 2, & \text{ if } q = \arceins(t_1),\\
 1, & \text{ if } q \in \arceins\restr{[0,t_1)} \cup \arczwei\restr{[0,t_2)},\\
 3, & \text{ if } q \in \arceins\restr{(t_1,1)} \cup \arczwei\restr{(t_2, 1)}.\hspace*{3mm}
\end{cases}
\]
\item If $\arczwei(0) \prec_\sigma \arceins(0)$ then with
$\arceins \capr \arczwei =: \{t_1\}$ and $\arczwei \capl \arceins =: \{t_2\}$ we have
\[
\Delta_q(\arczwei, \kappa) - \Delta_q(\arceins, \kappa) =
\begin{cases}
 0, & \text{ if } q = \arceins(t_1),\\
-1, & \text{ if } q \in \arceins\restr{[0,t_1)} \cup \arczwei\restr{[0,t_2)},\\
 1, & \text{ if } q \in \arceins\restr{(t_1,1)} \cup \arczwei\restr{(t_2, 1)}.
\end{cases}
\]
\end{enumerate}
\end{lem}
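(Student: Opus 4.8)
The plan is to turn the difference $\Delta_q(\arczwei,\kappa)-\Delta_q(\arceins,\kappa)$ into a winding-number-type count of the kind controlled by Proposition~\ref{prop_delta_vanishes}, in the same spirit as the transitivity proof and the proof of Lemma~\ref{lemma_restriction_well_defined}. Write $\mu:=\kappa\restr{[0,t_\kappa(q)]}$.

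I would first record the elementary rules for $\Delta$: it is additive when either argument is cut into consecutive pieces (provided the cut point is not a breakpoint of the other curve, which may be arranged by the perturbation device from the proof of Proposition~\ref{prop_delta_vanishes}); reversing the second (``active'') argument flips the sign, $\Delta(\alpha,\reverse{\beta})=-\Delta(\alpha,\beta)$; and, because ``$\beta_1$ cuts $\beta_2$ from the left'' is equivalent to ``$\beta_2$ cuts $\beta_1$ from the right'', one has the duality $\Delta(\alpha,\beta)=-\Delta(\beta,\alpha)$ as long as every intersection lies in the interior of both curves (which, for $q$ itself, fails; see below). Combining these, $\Delta_q(\arczwei,\kappa)-\Delta_q(\arceins,\kappa)=\Delta(\arczwei,\mu)-\Delta(\arceins,\mu)$, which --- up to the contribution of the sub-arc $\sigma_0:=\sigma\restr{[t_\sigma(\arceins(0)),\,t_\sigma(\arczwei(0))]}$ --- equals $-\Delta(\mu,C)$ for the closed curve $C:=\sigma_0\sqcup\arczwei\sqcup\reverse{\arceins}$ (and $C:=\arczwei\sqcup\reverse{\arceins}$ in case~2, where the starting points coincide and $\sigma_0$ degenerates).

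Next I would study $C$. In case~2 it is a lens, since two distinct arcs through the common points $\arceins(0)$ and $p$ meet only there. In cases~1, 3, 4 one checks, using $\arceins\le\arczwei$, Definition~\ref{defi_connecting_arc} (connecting arcs leave $\sigma$ to the left) and Remark~\ref{remark_convex_start_corner}, that $C$ is simple closed (in cases~3 and 4 after splitting it at the unique crossing $\arceins(t_1)=\arczwei(t_2)$ of the two arcs into two simple closed loops $C',C''$ whose pieces together are the pieces of $C$), that its interior lies locally left of every piece, and that $\sigma(0),\sigma(1)$, in particular $\kappa(0)$, lie in its exterior. Decomposing the truncation $\mu_\varepsilon:=\kappa\restr{[0,\,t_\kappa(q)-\varepsilon]}$ into channel segments, applying Proposition~\ref{prop_delta_vanishes} to each against $C$ (resp. $C',C''$) and telescoping, $\Delta(\mu_\varepsilon,C)$ depends only on which connected component of $\RR^2\setminus C$ contains $\kappa(t_\kappa(q)-\varepsilon)$ --- i.e. on the side of $\arceins$ resp. $\arczwei$ from which $\kappa$ reaches $q$ --- while $\Delta(\mu_\varepsilon,\sigma_0)=0$ as $\kappa\cap\sigma^\circ=\emptyset$. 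Finally I add back, straight from the definition, the contribution of the sliver $\kappa\restr{[t_\kappa(q)-\varepsilon,\,t_\kappa(q)]}$: since it ends exactly at $q$, which lies in the interior of $\arceins$ or $\arczwei$ (or of both, at the crossing point), at its last parameter $\kappa$ can only \emph{approach}, never \emph{leave}, that arc, so the sliver contributes $+1$ per arc through $q$ if $\kappa$ reaches $q$ from the right of that arc and $-1$ if from the left. Since the interior of $C$ is locally left of each arc, this sign exactly neutralises the parity of the winding term, and what is left is a value that depends only on which arc(s) $q$ lies on and --- in cases~3, 4 --- on whether $q$ precedes or follows $\arceins(t_1)$ along the arcs; this is precisely the case distinction in the statement, and reading off the components gives the values $1$; $0$ or $1$; $2,1,3$; and $0,-1,1$ in cases~1 through 4.

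The main obstacle is exactly this interplay at $q$. Because $q$ always lies on $C$, Proposition~\ref{prop_delta_vanishes} cannot be applied to $\mu$ itself, and the sliver must be treated by hand --- the duality $\Delta(\arczwei,\mu)=-\Delta(\mu,\arczwei)$ genuinely \emph{fails} at $q$, since an endpoint of the passive curve is not an interior point. Getting the winding contribution and the sliver sign to combine correctly (so that the answer is odd when $q$ lies on one arc and even, namely $2$ or $0$, when $q$ is the crossing point or the common starting point, consistent with the parity remark after Lemma~\ref{lemma_restriction_well_defined}) forces one to keep careful track of the orientation of $C$. The remaining care goes into the boundary cases where a connecting arc starts at $\sigma(0)$ or $\sigma(1)$, or, when $p$ is strictly right of $\sigma$, meets $\sigma$ there again: then $\sigma_0$ degenerates or $\sigma(0),\sigma(1)$ lie on $C$, and one argues with the extended directional cuts of Remarks~\ref{remark_special_cuts_connecting_arcs} and \ref{remark_special_cuts} and with the value $\Delta_{\kappa(0)}(\gamma,\kappa)=1$ for $\kappa(0)\in\gamma$ and $0$ otherwise, exactly as in the proof of Lemma~\ref{lemma_restriction_well_defined}.
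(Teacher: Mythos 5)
Your approach is correct in outline but genuinely different from the paper's. The paper never assembles the closed curve $C$ and never invokes Proposition~\ref{prop_delta_vanishes}: instead, writing $t_1,t_2$ for the crossing parameters, it partitions $\arceins\cup\arczwei$ into the five pieces $S_1=\arceins\restr{[0,t_1)}$, $S_2=\arczwei\restr{[0,t_2)}$, $S_3=\arceins\restr{(t_1,1)}$, $S_4=\arczwei\restr{(t_2,1)}$, $S_5=\{\arceins(t_1)\}$, and for a simple arc spline $\alpha$ running from $S_j$ to $S_k$ without touching the remaining pieces it evaluates $\Delta(\arczwei,\alpha)-\Delta(\arceins,\alpha)$ directly from Definition~\ref{defi_restriction}, recording the results in a $5\times5$ table $\Delta_{j,k}$. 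The essential observations there are that $\Delta_{j,k}$ is independent of the particular path and satisfies the cocycle relation $\Delta_{j,k}+\Delta_{k,l}=\Delta_{j,l}$; the claim then follows by splitting $\kappa\restr{[t',t_\kappa(q)]}$, with $t'$ the first parameter at which $\kappa$ meets $\arceins\cup\arczwei$, and telescoping. In effect, the paper's table already absorbs your ``sliver'' contribution (because $\alpha(1)\in S_k$ is itself the final intersection), and the well-definedness of $\Delta_{j,k}$ encapsulates, in a local and combinatorial form, the same Jordan-curve information you extract from $\Delta(\mu,C)$. What your route buys is a more conceptual explanation of the numbers --- the parity/winding remark makes the case pattern transparent and reuses Proposition~\ref{prop_delta_vanishes} --- while what it costs are the obligations you flag yourself: verifying that $C$ (resp.\ its two lobes in cases~3 and~4) is simple closed with interior locally left and with $\kappa(0)$ in the unbounded component, treating the endpoint $q$ and the degenerate starting positions (arc starting at $\sigma(0)$ or $\sigma(1)$, or $p$ strictly right of $\sigma$) by hand, and keeping orientations straight so the sliver sign and winding term actually cancel to the stated constants. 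These are real gaps in the write-up as given, but they roughly mirror the work the paper compresses into ``it can be easily checked'' and ``the claim is easily verified,'' so neither route is clearly shorter once all cases are spelled out.
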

\begin{proof}
We just consider the case $\arc_2(0) \prec_\sigma \arc_1(0)$
as the other cases can be proven analogously.
To simplify notation, let $S_1 := \arc_1\restr{[0,t_1)}$,
$S_2 := \arc_2\restr{[0,t_2)}$, $S_3 := \arc_1\restr{(t_1, 1)}$,
$S_4 := \arc_2\restr{(t_2, 1)}$, $S_5 := \{\arc_1(t_1)\}$
and $I := \{1, \ldots, 5\}$.\\
Let $j,k \in I$ and let $\alpha : [0,1] \to \RR^2$ be a simple arc spline
such that $\alpha(0) \in S_j$, $\alpha(1) \in S_k$ and
$\alpha \cap S_l = \emptyset$ for $l \in I \setminus \{j,k\}$ and such that
the set\linebreak $\{t \in [0,1] \fdg \alpha(t) \in S_j\}$ is a single interval
as well as $\{t \in [0,1] \fdg \alpha(t) \in S_k\}$.
Then $\Delta(\arc, \alpha)$ can be easily computed.
For example, suppose $\alpha(0) \in S_1$ and $\alpha(1) \in S_2$.
$\alpha$ must leave $\arc_1$ either to the left or to the right.
If $\alpha$ leaves $\arc_1$ to the left, then
$\alpha$ has to approach $\arc_2$ from the right, so
$\Delta(\arc_2, \alpha) - \Delta(\arc_1, \alpha) = 1 - 1 = 0$.
If $\alpha$ leaves $\arc_1$ to the right then
$\alpha$ has to approach $\arc_2$ from the left and we get
$\Delta(\arc_2, \alpha) - \Delta(\arc_1, \alpha) = (-1) - (-1) = 0$.
Hence, if $\alpha(0) \in S_1, \alpha(1) \in S_2$ then the value of
$\Delta(\arc_2, \alpha) - \Delta(\arc_1, \alpha)$ is independent of the
actual path.
It can be easily checked that this is true for every $j,k \in I$.
We show the values of $\Delta(\arc_2, \alpha) - \Delta(\arc_1, \alpha)$
with $\alpha(0) \in S_j, \alpha(1) \in S_k, j, k \in I$,
in the following denoted by $\Delta_{j,k}$, in Table~\ref{table_djk}.
\begin{table}[H]
\begin{center}
\begin{tabular}{r|ccccc}
  &  1 &  2 & 3 & 4 & 5\\ \hline
1 &  0 &  0 & 2 & 2 & 1\\
2 &  0 &  0 & 2 & 2 & 1\\
3 & -2 & -2 & 0 & 0 & -1\\
4 & -2 & -2 & 0 & 0 & -1\\
5 & -1 & -1 & 1 & 1 & 0
\end{tabular}
\end{center}
\caption{Values of $\Delta_{j,k}$ with $j$ the row and $k$ the column\label{table_djk}}
\end{table}
\noindent It is easy to verify that $\Delta_{j,k} + \Delta_{k,l} = \Delta_{j,l}$ holds for any $j,k,l \in I$,
which yields $\Delta_{j_1,j_2} + \Delta_{j_2,j_3} + \cdots + \Delta_{j_{n-1},j_n} = \Delta_{j_1, j_n}$,
$n \in \NN, j_1, \ldots, j_n \in I$.\\
Now, let us consider $\Delta_q(\arc_2, \kappa) - \Delta_q(\arc_1, \kappa)$ with $q \in S_k, k\in I$.
We know that $\Delta_q(\arc_2, \kappa) - \Delta_q(\arc_1, \kappa)
= \Delta(\arc_2, \kappa\restr{[0,t_\kappa(q)]}) - \Delta(\arc_1, \kappa\restr{[0,t_\kappa(q)]})$.
Let $t' \in [0,1]$ be the first parameter such that $\kappa(t') \in \arc_1 \cup \arc_2$,
and let $\kappa(t') \in S_j, j \in I$.
As $\kappa\restr{[t', t_\kappa(q)]}$ can be split in parts satisfying the condition
from above, we know that $\Delta_q(\arc_2, \kappa) - \Delta_q(\arc_1, \kappa) =
\Delta(\arc_2, \kappa\restr{[0,t']}) - \Delta(\arc_1, \kappa\restr{[0,t']}) + \Delta_{j,k}$.
Hence, the claim is easily verified for every case.\qed
\end{proof}
\begin{lem}\label{lemma_two_alternates_order}
Let $p \in P^\circ$, $\arceins, \arczwei \in \Gamma(\sigma, p)$ and let $(a_1, a_2)$ be
a left-blocking alternating sequence of $\arceins$.
If $\arceins < \arczwei$ then $a_1$ is a violation from the left of $\arczwei$
or $a_2$ is a violation from the right of $\arczwei$.
\end{lem}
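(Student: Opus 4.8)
The plan is to reduce the statement to the increment arithmetic of Lemma~\ref{lemma_delta_difference_on_arcs}. Since $(a_1,a_2)$ is a left-blocking alternating sequence of $\arceins$, we have $a_1\preceq_\arceins a_2$, the point $a_1$ is a restriction of $\arceins$ from the left and $a_2$ one from the right, and $a_1=a_2$ forces $a_1=\arceins(0)$ to be a starting restriction. From $p\in P^\circ$ we get $p\notin\kappa$, so $a_1,a_2\ne p=\arceins(1)$. Whenever the two points are restrictions in the $\Delta$-sense, $\Delta_{a_1}(\arceins,\kappa)=-1$ and $\Delta_{a_2}(\arceins,\kappa)=1$; thus it is enough to establish $\Delta_{a_2}(\arczwei,\kappa)-\Delta_{a_2}(\arceins,\kappa)\ge 1$ (then $a_2$ is a violation from the right of $\arczwei$) or $\Delta_{a_1}(\arczwei,\kappa)-\Delta_{a_1}(\arceins,\kappa)\le -1$ (then $a_1$ is a violation from the left of $\arczwei$).

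I would first settle the main situation: $a_1\prec_\arceins a_2$ with both $a_i$ being $\Delta$-restrictions. Apply Lemma~\ref{lemma_delta_difference_on_arcs} to the pair $\arceins<\arczwei$ and read the required increment off Table~\ref{table_djk} in each of its four cases. If $\arceins(0)\prec_\sigma\arczwei(0)$ (cases 1 and 3 there), every tabulated increment at a point of $(\arceins\cup\arczwei)\cap\kappa$ other than $p$ is at least $1$, so applying it at $a_2$ gives $\Delta_{a_2}(\arczwei,\kappa)\ge 2$. If $\arceins(0)=\arczwei(0)$ (case 2), the increment is $1$ away from $\arceins(0)$ and $0$ at $\arceins(0)$, and $a_1\prec_\arceins a_2$ rules out $a_2=\arceins(0)$, so again $\Delta_{a_2}(\arczwei,\kappa)\ge 2$. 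If $\arczwei(0)\prec_\sigma\arceins(0)$ (case 4), set $\{t^*\}:=\arceins\capr\arczwei$, which is nonempty since $\arceins$ cuts $\arczwei$ from the right exactly when $\arczwei$ cuts $\arceins$ from the left and $\arczwei\capl\arceins\ne\emptyset$ by Definition~\ref{defi_order_endpoint_fix}; the increment along $\arceins$ then equals $-1$ strictly before $\arceins(t^*)$, $0$ at $\arceins(t^*)$ and $+1$ strictly after. If $t_\arceins(a_2)>t^*$ this yields $\Delta_{a_2}(\arczwei,\kappa)\ge 2$; otherwise $t_\arceins(a_1)<t_\arceins(a_2)\le t^*$ puts $a_1$ strictly before $\arceins(t^*)$, giving $\Delta_{a_1}(\arczwei,\kappa)=-2$. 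In every case the claim holds.

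It remains to deal with $a_1=\arceins(0)$ a (pure) starting restriction. If $\arceins(0)\in\sigma^\circ$, one checks that a starting restriction from the left there forces $\sigma$ to be tangent to $[\arceins]$ at $\arceins(0)$ and $\arceins'(0)=\sigma'(t_\sigma(\arceins(0)))$, so by Remark~\ref{rmk_maximal_connecting_arc} no connecting arc starting at or before $\arceins(0)$ exceeds $\arceins$; hence $\arceins<\arczwei$ forces $\arceins(0)\prec_\sigma\arczwei(0)$, i.e.\ case 1 or 3 of Lemma~\ref{lemma_delta_difference_on_arcs}, and $a_2$ is a violation from the right exactly as above. The remaining, genuinely delicate configurations --- which I expect to be the main obstacle --- are those where $a_1=\arceins(0)$ sits at an endpoint of $\sigma$, equivalently at a breakpoint of $\kappa$; this is in particular the only way $a_1=a_2$ can occur. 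Here Proposition~\ref{prop_maximal_connecting_arc} together with Definition~\ref{defi_connecting_arc} already kills several sub-cases (for instance $a_1=a_2=\arceins(0)=\sigma(1)$ forces $[\arceins]=[\sigma]$ and hence $\arceins=\arc[\sigma'(1),\sigma(1),p]$, the maximum of $\Gamma(\sigma,p)$, so $\arceins<\arczwei$ cannot occur), but the sub-case $a_1=a_2=\arceins(0)=\sigma(0)$ --- where $[\arceins]=[\sigma]$ and $\arceins=\arc[-\sigma'(0),\sigma(0),p]$ --- is not vacuous, and since then $\Delta_{a_1}(\arczwei,\kappa)=\Delta(\arczwei,\kappa)\in\{-1,0\}$ by Lemma~\ref{lemma_restriction_well_defined}, $a_1$ can never be a $\Delta$-violation of $\arczwei$. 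For such breakpoint cases one has to reconcile the $\Delta$-based notion of a violation with the starting-restriction notion, arguing directly with the cut extensions of Remark~\ref{remark_special_cuts} and the winding-number bookkeeping of Proposition~\ref{prop_delta_vanishes} that $\arczwei>\arceins$ is forced either to turn $\arceins(0)$ into a (starting) violation of $\arczwei$ or to pick up a genuine violation elsewhere on $\kappa$. Apart from this reconciliation, the whole argument is just matching increments against Table~\ref{table_djk}.
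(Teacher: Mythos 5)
Your main argument is the same as the paper's: reduce everything to the increment bookkeeping of Lemma~\ref{lemma_delta_difference_on_arcs} and read off the required inequalities from the table, splitting on the order of $\arceins(0)$ and $\arczwei(0)$ on $\sigma$. For the generic case $a_1 \prec_\arceins a_2$ with $a_1, a_2$ ordinary $\Delta$-restrictions, your case analysis (cases 1--4 of that lemma, with the cut parameter $t^*$ in case 4) is exactly the paper's argument.

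Where your proposal goes astray is the degenerate case $a_1=a_2$. The paper dispatches it in one line: $a_1=a_2$ forces $a_1$ to be a starting restriction from the left and $a_2 = \sigma(1)$, hence $\arceins$ is the maximal element of $\Gamma(\sigma,p)$, which contradicts $\arceins<\arczwei$; no ``reconciliation'' between $\Delta$-violations and starting restrictions of $\arczwei$ is ever needed. Your treatment of this case is both incomplete (you explicitly leave it as a remaining obstacle, so the proof does not close) and in parts incorrect. In particular, your claim that $a_1=a_2=\arceins(0)=\sigma(0)$ is ``not vacuous'' is wrong: $\sigma(0)=\kappa(1)$, so $\Delta_{\sigma(0)}(\arceins,\kappa)=\Delta(\arceins,\kappa)=-1$ by Lemma~\ref{lemma_restriction_well_defined}, which makes $\sigma(0)$ a restriction from the \emph{left}, so it cannot serve as the restriction-from-the-right $a_2$ of a left-blocking sequence. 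Similarly, your assertion that a starting restriction at an endpoint of $\sigma$ forces $[\arceins]=[\sigma]$ is unjustified; the relevant consequence is tangency $\arceins'(0)=\sigma'(1)$ and hence maximality via Proposition~\ref{prop_maximal_connecting_arc}, not coincidence of the supporting circles. Once $a_1=a_2$ is excluded, note also that the paper never needs to treat the case ``$a_1$ is a starting restriction with $a_1\ne a_2$'' separately: when $\arceins(0)\preceq_\sigma\arczwei(0)$ the argument uses only $a_2\in\arceins\restr{(0,1)}$, and when $\arczwei(0)\prec_\sigma\arceins(0)$ the left-blocking starting restriction is incompatible with $\arceins<\arczwei$ (cf.\ Remark~\ref{rmk_maximal_connecting_arc}), so the detour through Remark~\ref{rmk_maximal_connecting_arc} you sketch for that sub-case is superfluous.
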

\begin{proof}
If $a_1 = a_2$ then we know that $a_1$ is a starting restriction from the left
and $a_2 = \sigma(1)$ and this yields that $\arceins$ is the maximal arc
in $\Gamma(\sigma, p)$. As we assumed $\arceins < \arczwei$,
we know that $a_1 \not= a_2$.\\
If $\arczwei(0) \prec_\sigma \arceins(0)$ then, with
$\arceins \capr \arczwei =: \{t\}$, we know that
$t_1(a_1) < t$ or $t < t_1(a_2)$.
By Lemma~\ref{lemma_delta_difference_on_arcs}, we get
$\Delta_{a_1}(\arczwei, \kappa) - \Delta_{a_1}(\arceins, \kappa) = -1$
if $t_1(a_1) < t$ and 
$\Delta_{a_2}(\arczwei, \kappa) - \Delta_{a_2}(\arceins, \kappa) = 1$
if $t < t_1(a_2)$. This yields $\Delta_{a_1}(\arczwei, \kappa) = -2$
if $t_1(a_1) < t$ and $\Delta_{a_2}(\arczwei, \kappa) = 2$, otherwise.\\
If $\arceins(0) \preceq_\sigma \arczwei(0)$ then, by Lemma~\ref{lemma_delta_difference_on_arcs},
we know that $\Delta_{a_2}(\arczwei, \kappa) - \Delta_{a_2}(\arceins, \kappa) > 0$,
as $a_2 \in \arceins\restr{(0,1)}$. Hence, $\Delta_{a_2}(\arceins, \kappa) = 1$ yields
$\Delta_{a_2}(\arczwei, \kappa) > 1$.\qed
\end{proof}
Note that Lemma~\ref{lemma_two_alternates_order} holds analogously
if $(a_1, a_2)$ is a right-blocking alternating sequence and $\arczwei < \arceins$.
\begin{thm}\label{lemma_three_alternates_visible}
Let $p \in P^\circ$, $\arc \in \Gamma(\sigma,p)$ and let $(a_1, a_2, a_3)$ be an
alternating sequence of $\arc$. If $\arc \not\subset P$ then $p$ is not visible.
\end{thm}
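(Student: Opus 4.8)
The statement says: if a connecting arc $\arc$ has an alternating sequence of length three, and $\arc$ is not fully inside the channel, then $p$ is not visible. The natural strategy is proof by contradiction combined with the total order from Section~\ref{section_order}. Assume $p$ is visible, so there exists a visibility arc $\tilde\arc \subset P$ ending at $p$. Since $\arc \not\subset P$, we have $\arc \neq \tilde\arc$, and by totality of ``$\leq$'' either $\tilde\arc < \arc$ or $\arc < \tilde\arc$. Without loss of generality (using the left/right symmetry of the whole setup, exactly as the paper does for Lemma~\ref{lemma_two_alternates_order}), assume the alternating sequence $(a_1,a_2,a_3)$ is left-blocking, so $a_1,a_3$ are restrictions from the left and $a_2$ is a restriction from the right.

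First I would dispose of one direction: the sub-sequence $(a_1,a_2)$ is a left-blocking alternating sequence of length two, so by Lemma~\ref{lemma_two_alternates_order}, if $\arc < \tilde\arc$ then $a_1$ is a violation from the left of $\tilde\arc$ or $a_2$ is a violation from the right of $\tilde\arc$; in either case $\tilde\arc \not\subset P$ (a violation means $|\Delta_q(\tilde\arc,\kappa)| > 1$, which forces a point of $\kappa$ strictly on the wrong side, so $\tilde\arc$ leaves $P$), contradicting that $\tilde\arc$ is a visibility arc. Symmetrically, the sub-sequence $(a_2,a_3)$ is a right-blocking alternating sequence of length two (reading it as: $a_2$ a restriction from the right, $a_3$ from the left), so by the remark following Lemma~\ref{lemma_two_alternates_order}, if $\tilde\arc < \arc$ then $a_2$ is a violation from the right of $\tilde\arc$ or $a_3$ is a violation from the left of $\tilde\arc$ — again $\tilde\arc \not\subset P$, a contradiction. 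Since both possibilities $\arc < \tilde\arc$ and $\tilde\arc < \arc$ lead to contradictions, no visibility arc exists, i.e.\ $p$ is not visible.

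The main obstacle, and the step I would spend the most care on, is making precise the claim that ``a violation of $\tilde\arc$ contradicts $\tilde\arc \subset P$.'' A violation from the right at $q$ means $\Delta_q(\tilde\arc,\kappa) > 1$; I would need to translate this back, via the connected-component reasoning of Proposition~\ref{prop_delta_vanishes} and Lemma~\ref{lemma_restriction_well_defined} (which pin down $\Delta(\tilde\arc,\kappa)$ globally), into the geometric fact that $\kappa$ — hence part of the channel boundary — crosses to the interior side of $[\tilde\arc]$ in a way that cannot be undone, so some point of $\tilde\arc^\circ$ lies outside $\overline{P^\circ}$. Concretely: if $\tilde\arc$ ran entirely inside $P$, then along $\tilde\arc$ the only cuts with $\kappa$ would be tangential touches that do not change $\Delta$ past $1$ in absolute value, so $\Delta_q(\tilde\arc,\kappa) \in \{-1,0,1\}$ for all $q$; a value with modulus $\geq 2$ therefore witnesses $\tilde\arc \not\subset P$. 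I would state this as a short auxiliary observation (perhaps folding it into the proof of Lemma~\ref{lemma_two_alternates_order}'s consequences) and then the rest of the argument is the clean case split above. A secondary point to check is the degenerate case $a_1 = a_2$ or $a_2 = a_3$, where one of the ``restrictions'' is a starting restriction at $\arc(0) = \sigma(1)$ or $\sigma(0)$; here the extended cut conventions of Remark~\ref{remark_special_cuts_connecting_arcs} and Remark~\ref{remark_special_cuts} apply, and Lemma~\ref{lemma_two_alternates_order} was already stated to cover $a_1 = a_2$, so the same reasoning goes through with $\arc$ forced to be the maximal (or minimal) connecting arc, eliminating one side of the case split outright.
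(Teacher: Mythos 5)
Your proof is correct and follows the same route as the paper: assume a visibility arc $\tilde\gamma \subset P$ exists, use totality of $\leq$ to split into $\gamma < \tilde\gamma$ and $\tilde\gamma < \gamma$, and apply Lemma~\ref{lemma_two_alternates_order} to the left-blocking pair $(a_1,a_2)$ in the first case and its symmetric remark to the right-blocking pair $(a_2,a_3)$ in the second, obtaining a violation of $\tilde\gamma$ and hence a contradiction. Your added justification that a violation (i.e.\ some $|\Delta_q(\tilde\gamma,\kappa)|>1$) forces $\tilde\gamma\not\subset P$ is a point the paper leaves implicit, and it is a reasonable thing to make explicit.
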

\begin{proof}
Assume that there is a $\arczwei \in \Gamma(\sigma, p)$ with $\arczwei \subset P$.
As $\arceins$ and $\arczwei$ are not equal and as \enquote{$\leq$} is a total order
we have either $\arceins < \arczwei$ or $\arczwei < \arceins$.
We know that $(a_1, a_2)$ is a left-blocking alternating sequence
and $(a_2, a_3)$ is a right-blocking alternating sequence or vice versa.
In either case we can apply Lemma~\ref{lemma_two_alternates_order}
and get that there is a violation of $\arczwei$.\qed
\end{proof}
\section{Algorithm\label{section_algorithm}}
Recall that $\kappa = \kappa_1 \sqcup \kappa_2 \sqcup \cdots \sqcup \kappa_n$
is the channel boundary, an arc spline with $n \in \NN$ segments.
Let $p \in P^\circ$ and $\arc \in \Gamma(\sigma, p)$.
We say that a channel segment $\kappa_j, j\in\{1,\ldots, n\}$ is a \emph{restriction from
the left of $\arc$} if
$\{-1\} \subseteq \{{\Delta_q(\arc, \kappa) \fdg q \in \kappa_j}\} \subseteq \{-1, 0\}$.
Analogously, we say that $\kappa_j$ is a \emph{restriction from the right of
$\arc$} if $\{1\} \subseteq \{\Delta_q(\arc, \kappa) \fdg q \in \kappa_j\} \subseteq \{1, 0\}$.
We call $\kappa_j$ a \emph{violation from the left of $\arc$} if there is a $q \in \kappa_j$ with
$\Delta_q(\arc, \kappa) < -1$ and a \emph{violation from the right of $\arc$} if there is a
$q \in \kappa_j$ with $\Delta_q(\arc, \kappa) > 1$.
With $\kappa_j, \kappa_k, j,k\in\{1,\ldots,n\}$ being two restrictions of $\arc$,
we write $\kappa_j \prec_\arc \kappa_k$
if there are Restrictions $q_1 \in \kappa_j, q_2 \in \kappa_k$ with $q_1 \prec_\arc q_2$.
With this in mind, we can define an alternating sequence $(\kappa_j, \kappa_k)$.
Note that a channel segment cannot be both a restriction from the left
and a restriction from the right at the same time.\\
In this chapter we give an algorithm and prove that it computes in linear time
an arc $\arc \in \Gamma(\sigma, p)$ that is either a visibility arc or an arc
having an alternating sequence of length three.
To explain the rough idea of the algorithm, we skip the initialization for a moment:
assume we have an arc $\arc \in \Gamma(\sigma, p)$ and indices $L$ and $R$
such that the channel segment $\kappa_L$ is a restriction from the left of $\arc$ and we know that
no segment $\kappa_i, i \in \{1, \ldots, L-1\}$ is a violation from the left
and $\kappa_R$ is a restriction from the right with the respective property.
Then we check if $\kappa_{L+1}$ is a violation from the left, if $\kappa_{R+1}$ is a violation
from the right, if $\kappa_{L+2}$ is a violation from the left, an so forth.
As soon as we find a violation, suppose a violation from the left,
we update the index $L$ and the arc $\arc$ such that $\kappa_L$ is a restriction
from the left of $\arc$ and $\kappa_R$ remains a restriction from the right.
Then we repeat the procedure starting at the updated indices $L$ and $R$.
Initially, $\arc$ is the minimal arc in $\Gamma(\sigma, p)$ and $L = R = 0$.
We will see that this algorithm is correct as there do not appear new
\enquote{relevant} restrictions before $\kappa_L$ or $\kappa_R$, respectively.
Since we check alternately for violations from the left and from the right
and since every step can be computed in constant time the overall runtime is linear
with respect to the number of segments $n$, see Theorem~\ref{thm_linear_runtime}.\\
In Figure~\ref{figure_algorithm} the basic update steps of Algorithm~\ref{algo_klopfalgo} are illustrated.
The annotation is made with respect to the notation use in Algorithm~\ref{algo_klopfalgo}.
Note that in every step the connecting arc has an right-blocking alternating sequence of length two.
The restrictions are marked unless they are starting restrictions.
\begin{figure}[ht]
\centering
\begin{overpic}[height=3cm,angle=-33]{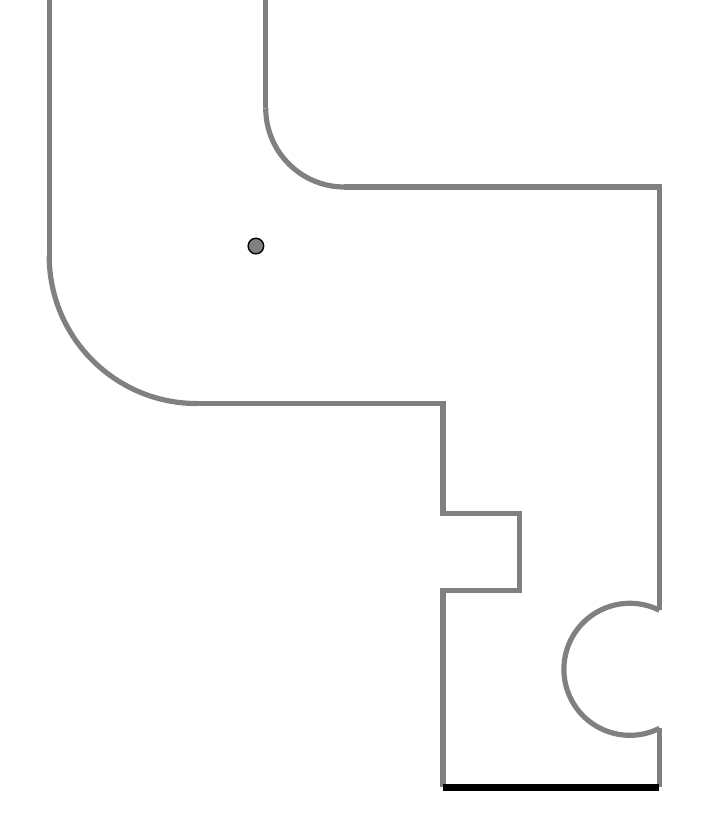}
\put(10,90){$(i)$}
\put(44,70){$p$}
\put(40,4){$\sigma$}
\end{overpic}
\begin{overpic}[height=3cm,angle=-33]{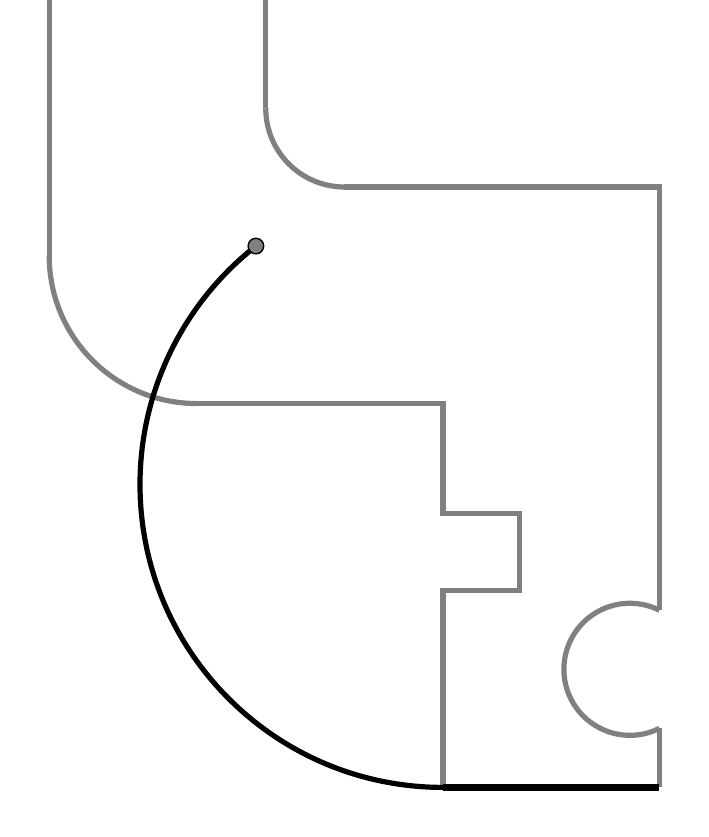}
\put(10,90){$(ii)$}
\put(44,70){$p$}
\put(40,4){$\sigma$}
\end{overpic}
\begin{overpic}[height=3cm,angle=-33]{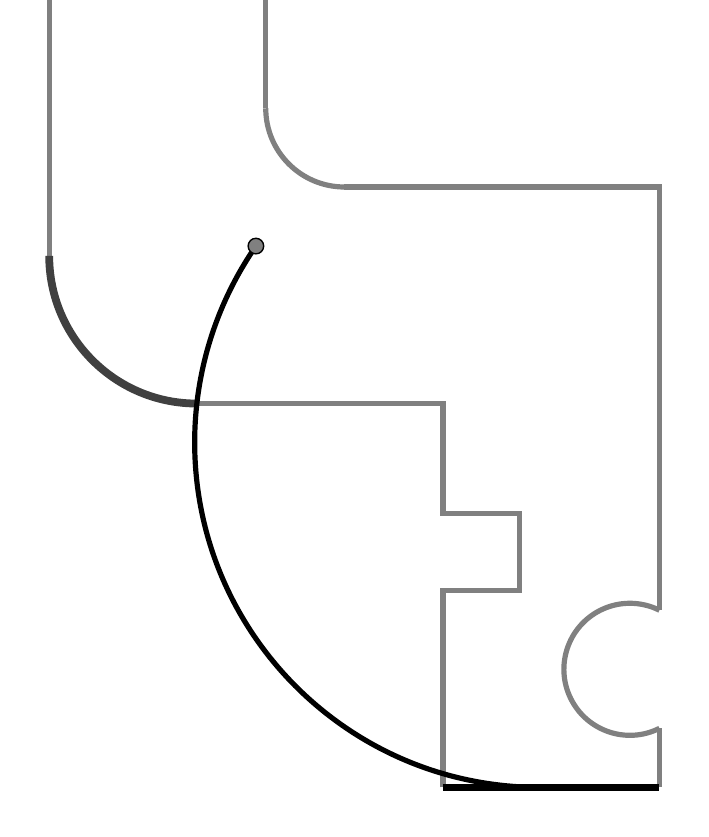}
\put(10,90){$(iii)$}
\put(44,70){$p$}
\put(40,4){$\sigma$}
\put(17,67){$\kappa_L$}
\end{overpic}\\
\vspace*{7mm}
\begin{overpic}[height=3cm,angle=-33]{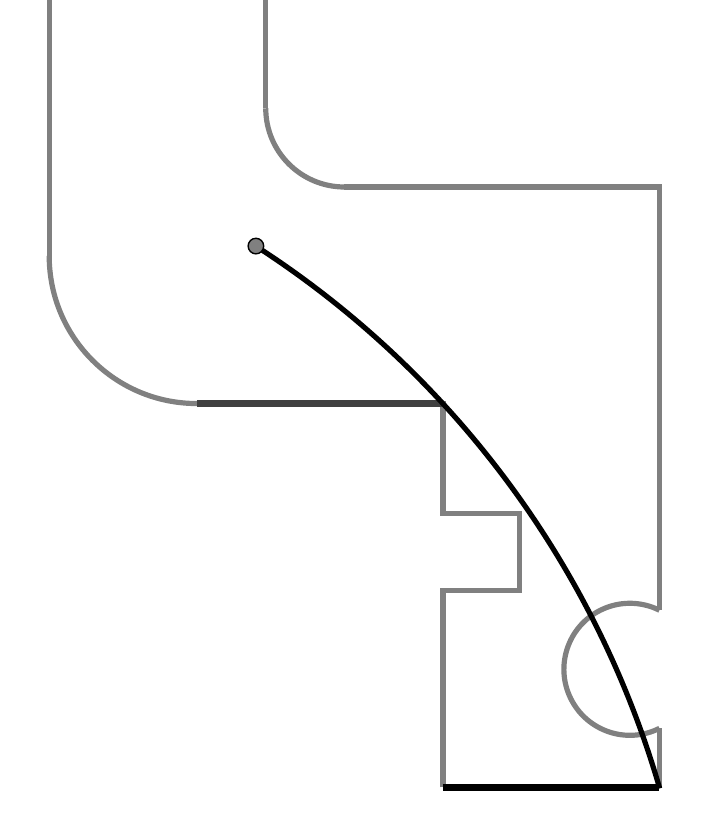}
\put(10,90){$(iv)$}
\put(44,70){$p$}
\put(40,4){$\sigma$}
\put(34,47){$\kappa_L$}
\end{overpic}
\begin{overpic}[height=3cm,angle=-33]{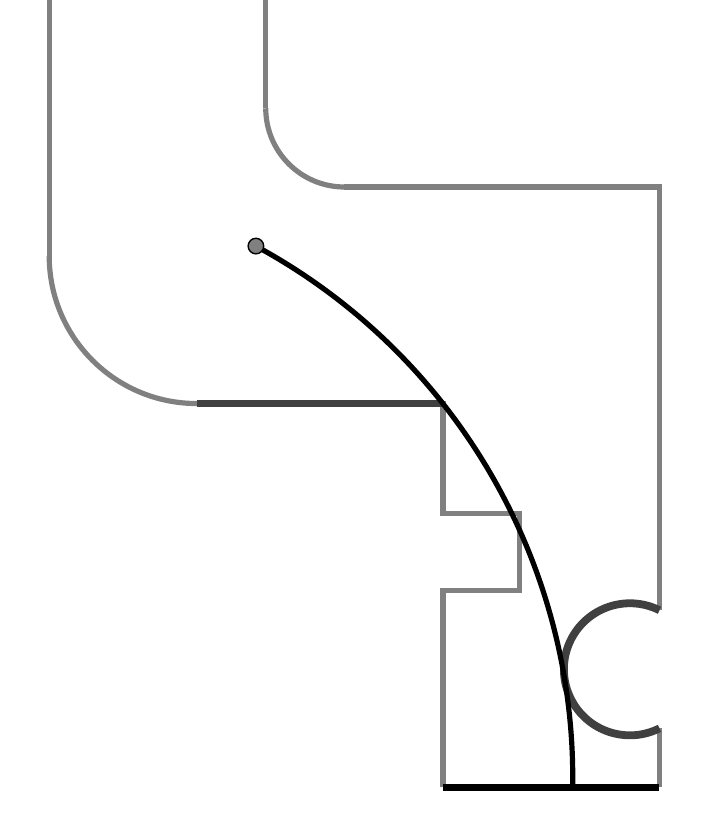}
\put(10,90){$(v)$}
\put(44,70){$p$}
\put(40,4){$\sigma$}
\put(34,47){$\kappa_L$}
\put(68,13){$\kappa_R$}
\put(67,15){\vector(-4,0){12}}%
\end{overpic}
\begin{overpic}[height=3cm,angle=-33]{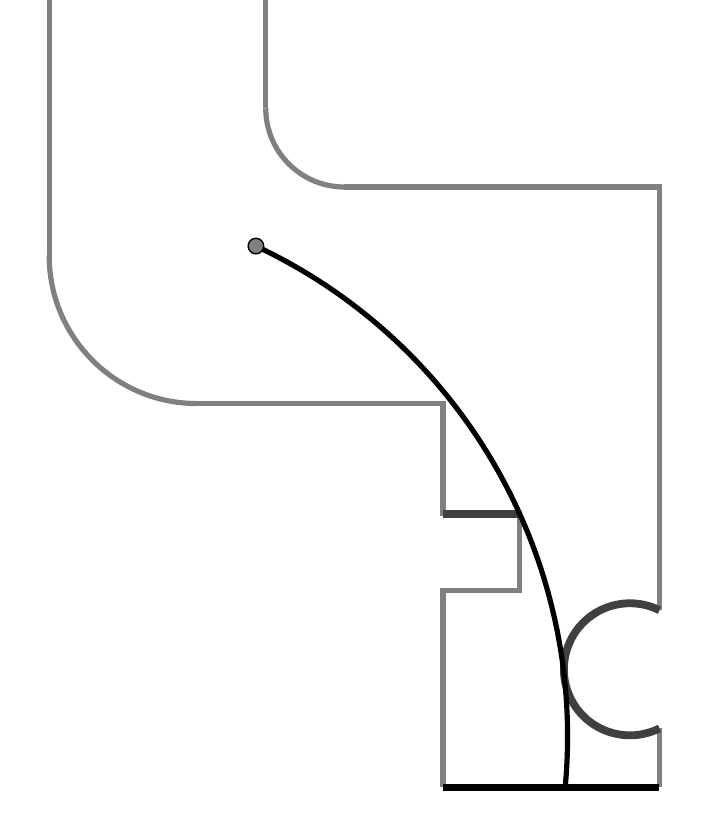}
\put(10,90){$(vi)$}
\put(44,70){$p$}
\put(40,4){$\sigma$}
\put(27,29){$\kappa_L$}
\put(38,32){\vector(4,1){12}}%
\put(68,13){$\kappa_R$}
\put(67,15){\vector(-4,0){12}}%
\end{overpic}
\caption{Illustration of the basic steps, the update steps in
Line~\ref{algo_update_arc_one} and \ref{algo_update_arc_two}, of Algorithm~\ref{algo_klopfalgo}.
\label{figure_algorithm}}
\end{figure}
\begin{rmk}
Note that in Algorithm~\ref{algo_klopfalgo} there is implicitly a third input
parameter, a tolerance up to which violations are ignored, cf.~Remark~\ref{rmk_violation_tolerance}.
To enhance readability, this tolerance is not listed and regarded explicitly in the algorithm.\\
Further, note that the Algorithm is based on continuous changes of the connecting
arc. This is the main advantage with respect to numerical stability over
any algorithm that computes arcs exactly satisfying the extremal condition.
\end{rmk}
\algrenewcommand{\algorithmiccomment}[1]{\hfill// #1}
\begin{algorithm}
\caption{}
\label{algo_klopfalgo}
\begin{algorithmic}[1]
\State \textbf{Input:} a channel with $n$ segments and a point $p$ inside the channel
\State \textbf{Output:} $\arc \in \Gamma(\sigma, p)$ that is a vis. arc or has an alt. sequ. of length three
\State Let $\arc = \min \Gamma(\sigma, p)$
\State // segment number of current restr. from the left, $L$, or right, $R$; $\kappa_0 := \sigma$
\State $L = R = 0$
\State $l = r = 0$ \Comment{current indices of segments to check for violations}
\While{$l < n$ or $r < n$} \label{algo_while}
\State Let $l = \min(l + 1, n)$; $r = \min(r + 1, n)$
\If{$\arc$ has an alternating sequence of length three with restrictions in
\hspace*{2cm}$\sigma, \kappa_L, \kappa_{L+1}, \ldots, \kappa_l, \kappa_R, \kappa_{R+1}, \ldots, \kappa_r$}\label{algo_check_one}
\State return $\arc$
\ElsIf{there is a $\arc^* \in \Gamma(\sigma, p)$ with alternating sequence of length three
\hspace*{2cm}in $\sigma, \kappa_L, \kappa_R, \kappa_l, \kappa_r$}\label{algo_check_two}
\State return $\arc^*$ \label{algo_return1}
\EndIf
\State // Update $\arc$ in case of a violation
\If{$\kappa_l$ is a violation from the left of $\arc$}\label{algo_check_violation_one}
\State Choose $\arc \in \Gamma(\sigma, p)$ as the arc with right-bl. alt. sequence $(\kappa_R, \kappa_l)$\label{algo_update_arc_one}
\State $L = l$; $r = R$ \label{algo_update_l_2} \label{algo_update_r_1}
\ElsIf{$\kappa_r$ is a violation from the right of $\arc$}\label{algo_check_violation_two}
\State Choose $\arc \in \Gamma(\sigma, p)$ as the arc with right-bl. alt. sequence $(\kappa_r, \kappa_L)$\label{algo_update_arc_two}
\State $R = r$; $l = L$ \label{algo_update_l_3} \label{algo_update_r_3}
\EndIf
\EndWhile
\State \textbf{return} $\arc$\label{algo_return_visibility_arc}
\end{algorithmic}
\end{algorithm}
\begin{thm}
Algorithm \ref{algo_klopfalgo} is correct, this means that the result is a visibility arc
or an arc in $\Gamma(\sigma, p)$ having an alternating sequence of length three.
\end{thm}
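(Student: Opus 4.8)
The plan is to argue that the algorithm maintains a loop invariant that is strong enough to imply correctness at termination, and then check that the two early-return branches are also correct.

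\medskip
\noindent\textbf{Loop invariant.} First I would formulate the invariant that holds at the top of the \texttt{while} loop (Line~\ref{algo_while}): the current arc $\arc$ belongs to $\Gamma(\sigma,p)$; the segment $\kappa_L$ (with $\kappa_0:=\sigma$) is a restriction from the left of $\arc$, $\kappa_R$ is a restriction from the right of $\arc$, and these two restrictions, read in the order induced by $\arc$, form an alternating sequence of length two (right-blocking, as in Figure~\ref{figure_algorithm}); moreover no segment $\kappa_i$ with $i\le L$ is a violation from the left of $\arc$ and no segment $\kappa_i$ with $i\le R$ is a violation from the right of $\arc$. I would verify the invariant holds initially: with $\arc=\min\Gamma(\sigma,p)$ and $L=R=0$, by Proposition~\ref{prop_maximal_connecting_arc} (in its symmetric, minimal form) the arc starts tangentially to $\sigma$ at an endpoint, so $\sigma$ itself is a starting restriction, and there are no earlier segments to worry about.

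\medskip
\noindent\textbf{Invariant is preserved.} The only place $\arc$, $L$, $R$ change is Lines~\ref{algo_update_arc_one}--\ref{algo_update_r_1} and \ref{algo_update_arc_two}--\ref{algo_update_r_3}. Suppose $\kappa_l$ is found to be a violation from the left of the old arc $\arc_{\text{old}}$. The new arc $\arc$ is chosen with right-blocking alternating sequence $(\kappa_R,\kappa_l)$; here I would invoke Lemma~\ref{lemma_two_alternates_order} (and the remark following it) to see that such a choice exists and, crucially, that $\arc_{\text{old}} < \arc$ in the total order, since $\kappa_l$ was a violation from the left of $\arc_{\text{old}}$. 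Now I must re-establish that no segment $\kappa_i$ with $i\le L_{\text{new}}=l$ is a violation from the left of the new $\arc$: the key point — the "no new relevant restrictions" phenomenon advertised in the prose — is that for $i\le R$ the segment $\kappa_i$ is, by the old invariant, not a violation from the right of $\arc_{\text{old}}$, and moving up in the order (passing from $\arc_{\text{old}}$ to $\arc$) can only decrease $\Delta_q(\cdot,\kappa)$ by a bounded amount governed by Lemma~\ref{lemma_delta_difference_on_arcs}; I would use cases 1.–4. of that lemma, together with the fact that $\kappa_L$ was a restriction from the left of $\arc_{\text{old}}$, to rule out a violation from the left among $\kappa_1,\dots,\kappa_l$ of the new $\arc$, and symmetrically for the reset $r=R$. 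The case where $\kappa_r$ is a violation from the right is handled by the mirror argument.

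\medskip
\noindent\textbf{Termination cases.} There are three ways to leave the loop. If we return at Line~\ref{algo_check_one} or at Line~\ref{algo_return1}, the returned arc is by construction one with an alternating sequence of length three (over $\sigma,\kappa_L,\dots$), so it satisfies the claimed output specification directly. If instead the loop exits normally at Line~\ref{algo_return_visibility_arc}, then $l=r=n$ and no segment $\kappa_1,\dots,\kappa_n$ is a violation of $\arc$ from either side — the left-violation-freeness up to $L=n$ and right-violation-freeness up to $R=n$ are exactly the invariant. Since $\Delta_q(\arc,\kappa)\in\{-1,0,1\}$ for all $q\in\kappa$, Lemma~\ref{lemma_restriction_well_defined} and the definition of $\Delta$ via crossings show $\arc$ never cuts $\kappa$, hence $\arc^\circ\subset P^\circ$; combined with $\arc(0)\in\sigma$ and $\arc(1)=p\in P^\circ$ this gives $\arc\subset P$, i.e.\ $\arc$ is a visibility arc.

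\medskip
\noindent\textbf{Main obstacle.} The delicate step is the preservation of the "no earlier violation" half of the invariant after an update: one must be sure that replacing $\arc_{\text{old}}$ by the strictly larger arc $\arc$ does not create a fresh violation among the already-processed segments $\kappa_1,\dots,\kappa_L$ (resp. $\kappa_1,\dots,\kappa_R$). This is precisely where the monotone, quantized behaviour of $\Delta_q$ along the total order — encoded in Lemma~\ref{lemma_delta_difference_on_arcs} and its additive table — has to be combined carefully with the definitions of "restriction" and "violation" of a channel segment, paying attention to the boundary sub-cases ($q$ at the crossing point $\arc_{\text{old}}(t_1)$, $q$ before or after it). I would spend most of the write-up on that bookkeeping; the rest is routine.
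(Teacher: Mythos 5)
Your overall plan -- a loop invariant maintained through the update steps, then a termination argument showing the final arc is a visibility arc -- mirrors the paper's structure, and you correctly identify the preservation step as the crux. However, there are two genuine gaps and one slip.

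First, you do not address the \emph{existence and uniqueness} of the arc chosen in Lines~\ref{algo_update_arc_one} and \ref{algo_update_arc_two}. You write that you would ``invoke Lemma~\ref{lemma_two_alternates_order}'' for this, but that lemma only says that a left-blocking alternating sequence of a smaller arc forces a violation of any larger arc; it says nothing about the existence of an arc with a prescribed right-blocking alternating sequence $(\kappa_R,\kappa_l)$. The paper devotes an entire separate argument (Lemma~\ref{lemma_arcs_exist}, four parts, with a continuity/pushing argument and a topological case analysis) to exactly this. Without it, the algorithm's update step is not even well-defined.

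Second, your invariant is stronger than what the paper proves, and your sketch of its preservation does not go through as stated. You posit ``no $\kappa_i$ with $i\le L$ is a violation from the left of $\arc$'' and claim this follows by chasing cases in Lemma~\ref{lemma_delta_difference_on_arcs}. But cases 3 and 4 of that lemma show that $\Delta_q$ can \emph{increase or decrease} depending on where $q$ sits relative to the crossing of the old and new arcs; a former restriction can in principle become a violation. Ruling this out is not additive bookkeeping -- it requires topological control of where the crossing can occur relative to the already-processed channel segments. The paper sidesteps your stronger invariant entirely: Lemma~\ref{lemma_invariant} is a more delicate statement about the \emph{first restriction} before $q_R$ (resp.\ $q_L$) being approached from the opposite side, and its proof constructs auxiliary closed paths and argues about connected components of their complements, again using Proposition~\ref{prop_delta_vanishes}. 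The conclusion (Lemma~\ref{lemma_result_is_visibility_arc}) then derives a contradiction from an assumed violation using that invariant, rather than from a direct ``no earlier violations'' statement. Your proposal collapses this into ``routine bookkeeping,'' which it is not.

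Finally, a smaller slip: at normal termination $l=r=n$, but $L$ and $R$ are only advanced when a violation is found, so in general $L,R<n$. Your phrase ``left-violation-freeness up to $L=n$'' is therefore wrong as written; the correct argument combines the invariant (covering $\kappa_1,\dots,\kappa_L$ and $\kappa_1,\dots,\kappa_R$) with the fact that the loop explicitly verified $\kappa_{L+1},\dots,\kappa_n$ and $\kappa_{R+1},\dots,\kappa_n$ against the current arc.
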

The proof is divided into two parts: first we show that the arcs in line
\ref{algo_update_arc_one} and \ref{algo_update_arc_two} exist.
This will be done in Lemma~\ref{lemma_arcs_exist}.
Afterwards, we prove an invariant of the algorithm, see Lemma~\ref{lemma_invariant},
which yields that the arc in line \ref{algo_return_visibility_arc} has no violations
and thus is a visibility arc, see Lemma~\ref{lemma_result_is_visibility_arc}.
\begin{lem}\label{lemma_arcs_exist}
The arcs in line \ref{algo_update_arc_one} and \ref{algo_update_arc_two} of
Algorithm \ref{algo_klopfalgo} exist and are unique.
\end{lem}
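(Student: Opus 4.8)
The statement asks us to show that certain connecting arcs exist and are unique. In line~\ref{algo_update_arc_one} we need the arc $\arc$ with right-blocking alternating sequence $(\kappa_R,\kappa_l)$; in line~\ref{algo_update_arc_two} the arc with right-blocking alternating sequence $(\kappa_r,\kappa_L)$. By symmetry the two cases are identical after renaming, so I would only treat the first one in detail.

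First I would unwind what the requested arc is. Having a right-blocking alternating sequence $(\kappa_R,\kappa_l)$ means: $\kappa_R$ is a restriction from the right occurring first along $\arc$, and $\kappa_l$ is a restriction from the left occurring later. Here $\kappa_l$ is the segment that was just detected to be a violation from the left of the current arc (line~\ref{algo_check_violation_one}), and $\kappa_R$ was already a restriction from the right. The idea is that we must "push $\arc$ to the right" at $\kappa_l$ until the left-violation at $\kappa_l$ just becomes a left-restriction, while keeping $\kappa_R$ as a right-restriction. I would characterize this arc concretely: it is the unique $\arc\in\Gamma(\sigma,p)$ that passes through a specific point $q_R\in\kappa_R$ and a specific point $q_l\in\kappa_l$ (the "extremal" contact points), so that $\arc$ approaches $\kappa$ from the appropriate side at each. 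In other words, $\arc = \arc_\sigma[q_R, q_l, p]$ in the notation of Definition~\ref{defi_connecting_arc} (or $\arc_\sigma[q_l,q_R,p]$ if the order along $\arc$ is the other way), and the task reduces to showing that such contact points $q_R, q_l$ exist, are uniquely determined, and that the resulting three-point arc indeed lies in $\Gamma(\sigma,p)$ with the claimed restriction structure.

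The key tool is the total order $\leq$ on $\Gamma(\sigma,p)$ (Theorem at the end of Section~\ref{section_order}) together with Lemma~\ref{lemma_delta_difference_on_arcs}, which controls how $\Delta_q(\arc,\kappa)$ changes as $\arc$ moves through the order. Consider the family of connecting arcs $\arc_t\in\Gamma(\sigma,p)$ interpolated monotonically in the order $\leq$, starting from the current arc (for which $\kappa_R$ is a right-restriction and $\kappa_l$ is a left-violation) and increasing. As $\arc_t$ increases, Lemma~\ref{lemma_delta_difference_on_arcs} shows $\Delta_q(\arc_t,\kappa)$ is monotone in the appropriate sense at the relevant contact points: the left-violation value $\Delta_q(\cdot,\kappa)<-1$ at $\kappa_l$ will increase toward $-1$, while $\kappa_R$'s value stays at $1$ (or transitions controllably). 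By a continuity/intermediate-value argument — using that $\Gamma(\sigma,p)$ is the closure of $\Gamma^*(\sigma,p)$ and hence the relevant quantities vary continuously with the parametric distance $d_\infty$ — there is a first arc in the order at which $\kappa_l$ becomes exactly a restriction from the left. I would argue this arc is well-defined as a minimum/supremum in the total order (the order-completeness coming from taking the closure, as noted in the remark after Definition~\ref{defi_connecting_arc}), giving existence. For uniqueness, I would invoke the remark following Definition~\ref{defi_connecting_arc}: three pairwise distinct points $q_R,q_l,p$ determine a unique circle, and there is at most one starting point on $\sigma$ making the arc start to the left, so $\arc_\sigma[\,\cdot\,]$ is unique once the contact points are pinned down; and the contact points themselves are pinned down as the extremal (first/last) points of $\kappa_R\cap\arc$ and $\kappa_l\cap\arc$, which are unique because a circle and a channel segment (a sub-arc) meet in a controlled finite set.

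**Expected main obstacle.** The delicate part is the existence argument: making precise that as one sweeps monotonically through the totally ordered family $\Gamma(\sigma,p)$, the "violation at $\kappa_l$" really does degrade continuously into a "restriction at $\kappa_l$" without the right-restriction $\kappa_R$ being destroyed first, and that the sweep actually reaches such an arc rather than the requested configuration being skipped over. This needs a careful combination of Lemma~\ref{lemma_delta_difference_on_arcs} (for the discrete jumps of $\Delta_q$ across order-comparisons) with a genuine continuity statement about contact points as arcs vary in $d_\infty$; I would also need to handle the degenerate boundary cases (contact at a breakpoint of $\kappa$, at $\sigma(0)$ or $\sigma(1)$, or $\kappa_R$ and $\kappa_l$ sharing a contact point) separately, invoking Remark~\ref{remark_special_cuts} and Remark~\ref{remark_arcs_exist} to dispatch them. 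Everything else — reducing the second case to the first by symmetry, and deducing uniqueness from the three-point/one-starting-point rigidity — is routine.
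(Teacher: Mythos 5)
Your high-level strategy --- sweep through a family of connecting arcs, track how $\Delta_q$ evolves via Lemma~\ref{lemma_delta_difference_on_arcs}, and locate by an intermediate-value argument the first arc at which the left-violation on $\kappa_l$ relaxes to a left-restriction --- is in the same spirit as the paper. But there is a genuine gap exactly where you flag the ``expected main obstacle,'' and you do not close it. The set $\Gamma(\sigma,p)$ is a two-parameter family (starting point along $\sigma$ plus tangent direction), so ``interpolated monotonically in the order $\leq$'' does not by itself define a one-parameter sweep, and your claim that ``$\kappa_R$'s value stays at $1$ (or transitions controllably)'' is not automatic for an arbitrary monotone path. Indeed, Lemma~\ref{lemma_delta_difference_on_arcs} (case~1) shows that if $\arc$ increases without further constraint, a point $q$ that was a right-restriction ($\Delta_q=1$) can jump to $\Delta_q=2$ --- a right-violation. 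The missing idea, which the paper makes the cornerstone of its proof, is to \emph{constrain} the sweep: consider only arcs $\arc_t$ with starting point $\sigma(t)$ that remain tangentially pinned to $\kappa_R$ from the correct side (i.e., $\kappa_R\cap\arc_t\neq\emptyset$ with $\kappa_R$ locally right of $\arc_t$ at every contact). That pivoting family $f(t)$ is one-dimensional, varies continuously in $t$, and --- this requires a concrete Part~2 argument in the paper --- keeps $\kappa_R$ a right-restriction throughout. Only then does your intermediate-value reasoning locate the desired $\arc$; and a separate case analysis is still needed for the ways the sweep can terminate early (the arc falling out of $\Gamma(\sigma,p)$ and hence acquiring a starting restriction from the left, or the $\kappa_R$-touchpoint jumping past the $\kappa_l$-touchpoint).

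Your uniqueness argument is also circular: you propose to deduce uniqueness from ``three-point rigidity'' applied to contact points $q_R,q_l$, but those contact points are themselves defined in terms of the unknown arc $\arc$, so this begs the question. Two distinct arcs, each with right-blocking alternating sequence $(\kappa_R,\kappa_l)$, would in general touch $\kappa_R$ and $\kappa_l$ at \emph{different} points, and nothing in your plan rules this out. The paper's uniqueness argument is genuinely different: from the touching-from-the-right condition it deduces that $\kappa_R$ must lie in the closure of a single connected component of $\RR^2\setminus(\arc_1\cup\arc_2\cup\sigma)$, and likewise $\kappa_l$, and then obtains a contradiction by case-checking the order of $\arc_1,\arc_2$ and of their starting points. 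You would need some such topological separation argument; the rigidity of the three-point arc alone is not enough.
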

\begin{proof}
We assume that for any three distinct points $p_1, p_2, p_3$ there exists an arc
with starting point $p_1$, endpoint $p_3$ that passes $p_2$. This can be done
since the maximal length of relevant arcs can be limited, cf.~Remark~\ref{remark_arcs_exist}.\\
Furthermore, we assume that $p \in P^\circ$ is strictly left of $\sigma$,
which implies that the only intersection of a
$\arc \in \Gamma(\sigma, p)$ and $\sigma$ is $\arc(0)$.
If $p$ is not strictly left of $p$ then the proof works analogously,
but it has to be considered that $\arc$ and $\sigma$ might have two intersections.
Furthermore, we only consider line \ref{algo_update_arc_one}
and we assume $R \not= 0$ as the other cases can be proven similarly.\\
The proof is divided into four parts: First, we define an indexed family
of connecting arcs that represent the process of pushing $(1)$. Then we show
that $\kappa_R$ is a restriction from the right of every connecting arc
of this indexed family $(2)$ and that there is one connecting arc such that
$\kappa_l$ is a restriction from the left or such that it has an alternating
sequence of length three $(3)$. Finally, we show that the computed arc is unique $(4)$.\\[3mm]
\underline{Part 1:}
Let $\arc_1$ denote the arc $\arc$ computed so far in the algorithm and
let $t_1 := t_\sigma(\arc_1(0))$. We know that $t_1 \not= 0$ as $\arc_1$
does not have an alternating sequence of length three.
Consider the function $f : [0, t_1] \to \Gamma$
where $f(t) = \arc_t$ is the unique arc with starting point $\sigma(t)$, endpoint $p$
such that $\kappa_R \cap \arc_t \not= \emptyset$ and $\kappa_R$ is locally
right of $\arc_t$ at every $x \in \kappa_R \cap \arc_t$.
Let $t_2 := {\min\{t \in [0, t_1] \fdg f(t') \in \Gamma(\sigma, p) \text{ for all } t' \in [t, t_1] }\}$
and $\arc_2 := f(t_2)$.
As $f$ is continuous and $f(t_1) = \arc_1 \in \Gamma(\sigma, p)$, we know that $t_2$ is well defined.\\[3mm]
\underline{Part 2:} Now we show that $\kappa_R$ is a restriction from the right of every
$\arc_t, t \in [t_2, t_1)$ and that $\arc_1 < \arc_t$:
since $\arc_1 \cap \kappa_R \not= \emptyset$ and $\arc_t \cap \kappa_R \not= \emptyset$
and $\kappa_R$ is locally right of both $\arc_1$ and $\arc_t$
at every intersection point it is easy to see
that $\arc_t \capl \arc_1 \not= \emptyset$, which yields $\arc_1 < \arc_t$.
Furthermore, with $\arc_t \capl \arc_1 := \{\tau_t\}$ and $\arc_1 \capr \arc_t := \{\tau_1\}$,
i.e. $\arc_1(\tau_1) = \arc_t(\tau_t)$, we know that
$\kappa_R \cap \arc_t\restr{[0,\tau_t)} = \kappa_R \cap \arc_1\restr{(\tau_1, 1]} = \emptyset$.
Let $q \in \kappa_R$ be a restriction from the right of $\arc_1$.
By Lemma~\ref{lemma_delta_difference_on_arcs}, we get
$\Delta_q(\arc_t, \kappa) = 1$ if $q = \arc_1(\tau_1)$
and $\Delta_q(\arc_t, \kappa) = 0$ if $q = \arc_1\restr{[0,\tau_1)}$.
Hence, as $\kappa_R$ is locally right of $\arc_t$ at every point in $\kappa_R \cap \arc_t$,
in either case we know that for every $q \in \kappa_R$ we have
$\Delta_q(\arc_t, \kappa) \in \{0, 1\}$ with value $1$ if and
only if $q \in \arc_t$. This yields that $\kappa_R$ is a restriction from the right of $\arc_t$.
Analogously, one can show that every $q\in \kappa_l$ with $\Delta_q(\arc_1, \kappa) = -2$
is a restriction from the left of $\arc_t, t\in[t_2, t_1)$ if $q \in \arc_t$.\\[3mm]
\underline{Part 3:} We know that there are $q_1 \in \kappa_R, q_2 \in \kappa_l$ such that $(q_1, q_2)$ is
a right-blocking alternating sequence of $\arc_1$ as otherwise there would be an
alternating sequence of length three in $\kappa_l, \kappa_R, \kappa_L$.
Since $f$ is continuous and $\kappa_l$ and $\kappa_R$ are closed, we know that there is
a minimal $t^* \in [t_2, t_1]$ such that $\arc_4 := f(t^*)$ has
an right-blocking alternating sequence $(q_1, q_2)$ with $q_1 \in \kappa_R$, $q_2 \in \kappa_l$.
First, we show that in the case $t^* = t_2$ the arc $\arc_2 = \arc_4$
has an alternating sequence of length three since it has a starting restriction from the left.
An example showing this case is illustrated in Figure~\ref{figure_pushing} (left).
Since we assumed that $p$ is left of $\sigma$, the only way to lose the
connecting arc property is that the arc does not leave $\sigma$ in $0$ to the left.
By continuity of $\sigma' : [0,1] \to S^1$ and $(f(t))'(0) : [0,1] \to S^1$,
we know that either $t_2 = 0$, $\sigma'(t_2) = -\gamma'_2(0)$ or $\sigma'(t_2) = \gamma'_2(0)$.
If $\sigma'(t_2) = -\gamma'_2(0)$ then $\sigma$ is left of $\arc_2$. As
we know that $\arc_1 \capr \arc_2 \not= \emptyset$,
with $\arc_1 \capr \arc_2 := \{\tau\}$, there must be a $\tau' \in (0, \tau)$
such that $\arc_1$ cuts $[\arc_2]$ in $\tau'$ from the left.
So, with $p \in \arc_1 \cap \arc_2$ this would yield at least three intersections of
$\arc_1$ and $[\arc_2]$, which is a contradiction.
Hence, we know that $\gamma_2$ has a starting restriction from the left
as $t_2 = 0$ or $\sigma'(t_2) = \gamma'_2(0)$.\\
If $t^* \not= t_2$ then we know that the above property can only be lost
if the touching point on $\kappa_R$ \enquote{jumps} such that it is
after the one on $\kappa_l$ with respect to the respective arc or the intersection
with $\kappa_l$ that is after $\kappa_R$ with respect to the respective arc disappears.
In the first case, we know that 
either $(\kappa_R(0), q_2, \kappa_R(1))$ or $(\kappa_R(1), q_2, \kappa_R(0))$
is a right-blocking alternating sequence of $\arc_4$, cf. Figure~\ref{figure_pushing} in the middle.
Otherwise, we know that $\kappa_l$ is locally left of $\arc_4$ at $q_2$.
So, $\kappa_l$ is a restriction from the left of $\arc_4$ or there are
$q_3 \in \kappa_l$, $q_4 \in \kappa_R$,
such that $(q_3, q_4, \kappa_l(0))$ or $(q_3, q_4, \kappa_l(1))$
is a left-blocking alternating sequence of $\arc_4$, cf. Figure~\ref{figure_pushing} (right).
We know that $\arc_4$ does not have an alternating sequence of length
three with restrictions in $\kappa_l, \kappa_R$. Hence, $\kappa_l$ must be a
restriction from the left of $\arc_4$, so we found an arc
with right-blocking alternating sequence $(\kappa_R, \kappa_l)$.
\begin{figure}[bt]
\centering
\begin{overpic}[height=5cm]{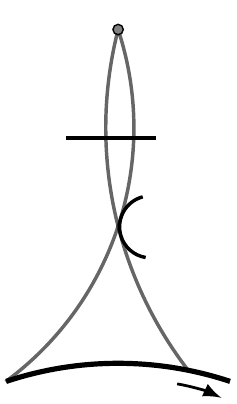}
\put(24,95){$p$}
\put(20,4){$\sigma$}
\put(41,20){$\arc_1$}
\put(8,20){$\arc_4$}
\put(37,37){$\kappa_R$}
\put(10,65){$\kappa_l$}
\end{overpic}
\hfill
\begin{overpic}[height=5cm]{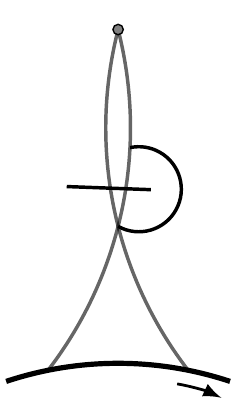}
\put(24,95){$p$}
\put(20,4){$\sigma$}
\put(41,20){$\arc_1$}
\put(12,20){$\arc_4$}
\put(40,40){$\kappa_R$}
\put(10,53){$\kappa_l$}
\end{overpic}
\hfill
\begin{overpic}[height=5cm]{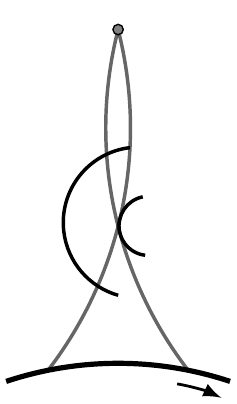}
\put(24,95){$p$}
\put(20,4){$\sigma$}
\put(41,20){$\arc_1$}
\put(12,20){$\arc_4$}
\put(37,37){$\kappa_R$}
\put(10,53){$\kappa_l$}
\end{overpic}
\caption{Illustration of some configurations appearing in the proof of Lemma~\ref{lemma_arcs_exist}.
\label{figure_pushing}}
\end{figure}
\vspace*{-1mm}

\noindent\underline{Part 4:} To show uniqueness we assume that there are two arcs
$\arc_1, \arc_2 \in \Gamma(\sigma, p)$, $\arc_1 \not=\arc_2$ with
right-blocking alternating sequence $(\kappa_R, \kappa_l)$.
Since $\kappa_R$ is a restriction from the right of $\arc_1$ and $\arc_2$, we know that
$\kappa_R \cap \arc_1 \not= \emptyset$ and $\kappa_R \cap \arc_2 \not= \emptyset$
and that $\kappa_R$ is locally right of both $\arc_1$ and $\arc_2$ at every
point of the respective intersection. Hence, we know that
$\kappa_R \subset \overline{Z}$ with $Z$ a connected component of
$\RR^2 \setminus (\arc_1 \cup \arc_2 \cup \sigma)$. Similarly, this holds for $\kappa_l$.
Distinguishing by the order of $\arc_1$ and $\arc_2$ and by the order of $\arc_1(0)$
and $\arc_2(0)$ with respect to $\prec_\sigma$ it is easy to see that this is not possible.
\qed
\end{proof}
\begin{lem}\label{lemma_invariant}
Let $q_R$ be the first point on $\kappa_R$, with respect to $\kappa$,
such that $q_R \in \arc$ and let $q_L$ be the first point on $\kappa_L$,
with respect to $\kappa$, such that $q_L \in \arc$. Then we have:
\begin{enumerate}
\item if there is a restriction from the right $q$ with $q \prec_\kappa q_R$ and
$q \prec_\arc q_R$ then, with $q$ the first such restriction with respect to $\kappa$,
we know that $\kappa$ approaches $\arc$ in $t_\kappa(q)$ from the left.
\item if there is a restriction from the left $q$ with $q \prec_\kappa q_L$ and
$q_L \prec_\arc q$ then, with $q$ the first such restriction with respect to $\kappa$,
we know that $\kappa$ approaches $\arc$ in $t_\kappa(q)$ from the right.
\end{enumerate}
\end{lem}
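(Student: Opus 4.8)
The plan is to establish both statements at once as an invariant of the \textbf{while} loop of Algorithm~\ref{algo_klopfalgo}, proved by induction on the number of completed iterations. Alongside (1) and (2) I would carry the side invariants that are implicit in the design of the algorithm and that the inductive step needs, namely that $\kappa_L$ is a restriction from the left of $\arc$, that $\kappa_R$ is a restriction from the right of $\arc$, that none of $\kappa_1, \ldots, \kappa_l$ is a violation from the left of $\arc$, and that none of $\kappa_1, \ldots, \kappa_r$ is a violation from the right of $\arc$ (recall $R \le r$ and $L \le l$ always hold).

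For the base case we have $\arc = \min\Gamma(\sigma, p)$ and $L = R = 0$, so $\kappa_0 = \sigma$ and $q_L = q_R = \arc(0)$; since nothing precedes $\arc(0)$ with respect to $\prec_\arc$, the hypotheses of (1) and (2) are never met and both hold vacuously. In the inductive step I fix one iteration and assume the invariant at its start. If neither test in lines~\ref{algo_check_violation_one} and~\ref{algo_check_violation_two} succeeds, then $\arc$, $L$ and $R$ are unchanged while $l$ and $r$ are merely incremented, and the side invariants survive because $\kappa_l$ and $\kappa_r$ were just checked not to be violations; so there is nothing to do. By the left--right symmetry of the algorithm and of the claim it then suffices to treat the case in which line~\ref{algo_update_arc_one} fires: writing $\arceins$ for the value of $\arc$ at the start of the iteration, $\arc$ is replaced by the arc $\arczwei$ with right-blocking alternating sequence $(\kappa_R, \kappa_l)$ (which exists and is unique by Lemma~\ref{lemma_arcs_exist}), $L$ becomes $l$, and $R$ is unchanged; one must re-establish (1) for the unchanged $R$ and (2) for $L = l$, with $\arc$ now equal to $\arczwei$.

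The core is a comparison of $\arceins$ and $\arczwei$. Since $\kappa_R$ is a restriction from the right of $\arceins$ and $\kappa_l$ a violation from the left of $\arceins$, the pushing construction in the proof of Lemma~\ref{lemma_arcs_exist} realizes $\arczwei$ as the limit of a continuous family of connecting arcs, each $\ge \arceins$, along which $\kappa_R$ stays a restriction from the right and the only restrictions from the left that can appear lie on $\kappa_l$; in particular $\arceins < \arczwei$, so Lemma~\ref{lemma_delta_difference_on_arcs} pins down $\Delta_q(\arczwei, \kappa) - \Delta_q(\arceins, \kappa)$ for every $q \in (\arceins \cup \arczwei) \cap \kappa$ from the position of $q$ relative to the unique mutual cut of $\arceins$ and $\arczwei$. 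Now let $q$ be the first (along $\kappa$) restriction from the right of $\arczwei$ with $q \prec_\kappa q_R$ and $q \prec_{\arczwei} q_R$, and split into two cases. If $\Delta_q(\arceins, \kappa) = 1$, then $q$ is already a restriction from the right of $\arceins$; using the side invariants and the local-right property of $\kappa_R$ along the whole push one argues that $q$ also precedes the first intersection of $\kappa_R$ with $\arceins$ both along $\kappa$ and along $\arceins$, so the inductive hypothesis gives that $\kappa$ approaches $\arceins$ from the left at $q$, and the cut structure of $\arceins$ and $\arczwei$ transfers this to $\arczwei$. If instead $\Delta_q(\arceins, \kappa) \le 0$, then $q$ is genuinely created by the push, and the $\Delta$-bookkeeping of Lemma~\ref{lemma_delta_difference_on_arcs} together with $q \prec_{\arczwei} q_R$ locates $q$ on the part of $\arczwei$ past the mutual cut, where attaining $\Delta_q(\arczwei, \kappa) = 1$ without a preceding violation from the left of $\arczwei$ — which the side invariant on $\kappa_1, \ldots, \kappa_l$ rules out — forces the contribution of $\kappa$ at $q$ to be an approach from the left. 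Statement (2) for the new $\kappa_L = \kappa_l$ is handled by the analogous reasoning, using now that $\kappa_1, \ldots, \kappa_{l-1}$ are not violations from the left of $\arceins$ and the other branches of Lemma~\ref{lemma_delta_difference_on_arcs}.

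I expect the main obstacle to be exactly this second case: deducing, from the $\Delta$-values alone, that a restriction from the right of $\arczwei$ which is not inherited from $\arceins$ must be approached by $\kappa$ from the left. This calls for a careful local reading of how the partial sum $\Delta_q(\arczwei, \kappa)$ is assembled from the approach/leave contributions of Definition~\ref{defi_restriction}: in essence, that between two consecutive restrictions this partial sum cannot dip below $0$ without producing a violation from the left among $\kappa_1, \ldots, \kappa_l$, which the side invariant forbids. A secondary technical point is keeping track of how $q_R$, the first meeting point of $\kappa_R$ with the arc, moves when $\arc$ passes from $\arceins$ to $\arczwei$; the fact that $\kappa_R$ is locally right of every arc in the pushing family keeps this displacement under control and is what lets the ``before $q_R$'' window for $\arczwei$ be matched against the one for $\arceins$.
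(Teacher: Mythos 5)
Your high-level plan is correct and matches the paper in outline: the statement is indeed an invariant of the \textbf{while} loop, proved by induction over iterations, with the real work concentrated in the case where line~\ref{algo_update_arc_one} or \ref{algo_update_arc_two} fires, and Lemma~\ref{lemma_delta_difference_on_arcs} is the right tool for comparing $\Delta$ before and after the push. But two of your auxiliary steps do not hold up, and the central step is where you stop rather than where you finish.

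First, the side invariants you propose to carry — that none of $\kappa_1, \ldots, \kappa_l$ is a violation from the left of the \emph{current} $\arc$ and none of $\kappa_1, \ldots, \kappa_r$ a violation from the right — are not preserved by the loop. After an update in line~\ref{algo_update_arc_one} the counter $r$ is reset to $R$, so segments between $R+1$ and the old $r$ have only been checked against the \emph{old} arc, and Lemma~\ref{lemma_delta_difference_on_arcs} (case~3 or~4) shows the push can turn a restriction from the left of the old arc into a violation from the left of the new one on the part of the arcs before their mutual cut; the symmetric failure occurs for line~\ref{algo_update_arc_two}. Intermediate arcs produced by the algorithm \emph{can} have violations on already-passed segments — that is precisely why Lemma~\ref{lemma_invariant} is needed, and also why the crucial claim of your second case (``the partial sum cannot dip below $0$ without producing a violation from the left among $\kappa_1, \ldots, \kappa_l$, which the side invariant forbids'') rests on a premise you are not entitled to. Moreover, a restriction from the left already has $\Delta = -1 < 0$ and is not a violation, so the partial sum routinely dips below $0$ even when no violations are present; the claim as stated is false independent of the invariant issue.

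Second, and more fundamentally, the lemma's conclusion cannot be extracted from $\Delta$-bookkeeping alone. That $\kappa$ approaches $\arc$ at the first early restriction-from-the-right $q$ \emph{from the left} means, by the definition of $\Delta$, that $\Delta_{\kappa(t_\kappa(q)-\varepsilon)}(\arc,\kappa) = 2$ just before $q$, i.e.\ $\kappa$ is emerging from a violation from the right; pure accounting can detect that a violation exists but not \emph{where} it sits relative to the mutual cut of the two arcs or to the other restriction points, which is what the lemma needs. The paper instead proceeds by contradiction with a topological argument: it assembles a simple closed path from pieces of $\arceins$, $\arczwei$, $\kappa$ and $\sigma$, lets $Z$ be the connected component locally right of $\arczwei$ at $q$, notes that $\kappa$ dips into $Z$ just before $q$ but that $\kappa(0) \notin Z$, and hence that $\kappa$ must cross the boundary of $Z$ earlier; tracking where that crossing can occur (with a case split on whether $r_3 \prec_{\arceins} r_1$ or the reverse, $r_i$ being the first intersections of the relevant segments with the two arcs) yields an earlier restriction, contradicting minimality of $q$. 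In short, your decomposition is right, your diagnosis of the hard step is right, but the hard step is left unsolved, the $\Delta$-accounting heuristic you offer for it is not sound, and the side invariants you lean on are not available.
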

\begin{proof}
We only prove the first invariant, the second one can be shown analogously.
Obviously, the invariant holds initially. Consider the update step in line
\ref{algo_update_arc_two} and \ref{algo_update_l_3} and suppose $L \not= 0$
and $R \not= 0$. The update in line \ref{algo_update_arc_one} and \ref{algo_update_r_1}
and the cases with vanishing $L$ or $R$ can be proven analogously.
Let $\arc_1$ and $R_1$ be the arc $\arc$ and the value of $R$ computed
so far in the algorithm and denote the respective values after the update
by $\arc_2$ and $R_2$. Furthermore, let $r_1$ be the first point on $\kappa_{R_1}$,
with respect to $\kappa$, such that $r_1 \in \arc_1$,
let $r_2$ be the first point on $\kappa_{R_2}$ such that $r_2 \in \arc_2$ and
let $r_3$ be the first point on $\kappa_{R_2}$ such that $r_3 \in \arc_1$.\\
Assume that the invariant holds before the update and that it is violated
after the update. Then we know that there is a $q\in\kappa$ that is a
restriction from the right of $\arc_2$ and that satisfies
$q \prec_\kappa r_2$ and $q \prec_\arc r_2$. With $q$ the first
such restriction with respect to $\kappa$, we know that $\kappa$ approaches
$\arc_2$ in $t_\kappa(q)$ from the right.
Since $\arc_1 < \arc_2$ and $q \prec_{\arc_2} r_2 \prec_{\arc_2} \arc_2(t)$, with
$\{t\} := \arc_2 \capl \arc_1$, Lemma~\ref{lemma_delta_difference_on_arcs}
yields $\Delta_q(\arc_1, \kappa) = 2$. Hence, $q \prec_\kappa r_1$ as
otherwise $\arc$ and $R$ would have been updated earlier.\\
Consider the closed path
\[
\alpha := \arc_1\restr{[0,t(r_3)]} \sqcup \kappa\restr{[t(r_3), t(r_2)]}
\sqcup \overline{\arc_2\restr{[0,t(r_2)]}} \sqcup \sigma\restr{[t(\arc_2(0)), t(\arc_1(0))]}
\]
and let $Z$ be the connected component of $\RR^2 \setminus \alpha$
that is locally right of $\arc_2$ in $q$.
Obviously, $\kappa(0) \not\in Z$ and as $\kappa$ approaches
$\arc_2$ in $t_\kappa(q)$ from the right, we know that there is
an $\varepsilon > 0$ such that
$\kappa([t_\kappa(q) - \varepsilon, t_\kappa(q))) \subset Z$.
Hence, there must be a $t \in [0, t_\kappa(q))$ such that
$\kappa(t) \in \alpha$.\\
First, we consider the case $r_3 \prec_{\arc_1} r_1$. We know
that there is a $t \in [0, t_\kappa(q)]$ with
$\kappa(t) \in \arc_1\restr{[0, t(r_3))} \cup \arc_2\restr{[0, t(r_2))}$.
Let $t^* \in [0,t_\kappa(q))$ be maximal such that
$\kappa(t^*) \in \arc_1\restr{[0, t(r_3)]} \cup \arc_2\restr{[0, t(r_2)]}$
and let $q^* = \kappa(t^*)$.
Then, $\kappa\restr{(t^*, t_\kappa(q))} \subset Z$ and in particular
$\kappa\restr{(t^*, t_\kappa(q))} \cap \arc_2 = \emptyset$.
Hence, $q^* \not\in \arc_2\restr{[0, t(r_2)]}$, since otherwise
$q^*$ would be a restriction from the right, which would be
a contradiction as $q$ was defined as the first such restriction.
So, $q^* \in \arc_1\restr{[0, t(r_3)]}$. Since
$\kappa\restr{(t^*, t_\kappa(q))} \cap \arc_2 = \emptyset$ and $\kappa$ approaches
$\arc_2$ in $t_\kappa(q)$ from the right, we know that $\Delta_{q^*}(\arc_2, \kappa) = 0$.
Hence, Lemma~\ref{lemma_delta_difference_on_arcs} yields $\Delta_{q^*}(\arc_1, \kappa) = 1$,
so $q^*$ is a restriction from the right of $\arc_1$.
As $q^* \prec_\kappa r_1$ and $q^* \prec_{\arc_1} r_3 \prec_{\arc_1} r_1$, we can define $q'$ to be
the first restriction from the right of $\arc_1$ with respect to $\kappa$
such that $q' \prec_{\arc_1} r_1$ and we know $q' \prec_\kappa r_1$.
Since the invariant with respect to $\arc_1$ holds, we know that $\kappa$
approaches $\arc_1$ in $t_\kappa(q')$ from the left.
With the same arguments as before we can conclude that there must be a
$t \in [0, t_\kappa(q'))$ with
$\kappa(t) \in \arc_1\restr{[0, t(r_3)]} \cup \arc_2\restr{[0, t(r_2)]}$
and the point corresponding to the last such parameter is either
a restriction from the right of $\arc_1$ or a restriction from the right
of $\arc_2$. In either case this yields a contradiction.\\
In the case $r_1 \prec_{\arc_1} r_3$ the argumentation is basically the same.
Since\linebreak $\kappa\restr{[0,t_\kappa(q)]} \cap \kappa\restr{[t(r_1), t(r_3)]} = \emptyset$
and $\kappa\restr{[t(r_1), t(r_3)]}$ does not cut $\arc_1\restr{[t(r_1), t(r_3)]}$,
we also know that there is a $t \in [0, t_\kappa(q))$ with
$\kappa(t) \in \arc_1\restr{[0, t(r_1)]} \cup \arc_2\restr{[0, t(r_2)]}$.
Let $t^* \in [0, t_\kappa(q))$ be maximal such that
$\kappa(t^*) \in \arc_1\restr{[0, t(r_1)]} \cup \arc_2\restr{[0, t(r_2)]}$
and let $q^* = \kappa(t^*)$.
The only difference to the case $r_3 \prec_{\arc_1} r_1$ is that it not obvious
that $q^*$ is a restriction from the right of $\arc_1$ or $\arc_2$ if
$q^* \in \arc_1\restr{[0, t(r_1)]}$ or $\arc_2\restr{[0, t(r_2)]}$, respectively.
We only show that $q^*$ is a restriction from the right of $\arc_2$ if
${q^* \in \arc_2\restr{[0, t(r_2)]}}$ as the other cases can be proven analogously.
We know that there is no restriction from the left in
$\kappa\restr{[t_\kappa(r_1), t_\kappa(r_3)]} \cap \arc_1\restr{[t_1(r_1), t_1(r_3)]}$
and we assume $\kappa\restr{(t_\kappa(r_1), t_\kappa(r_3))} \cap \arc_1\restr{(t_{\arc_1}(r_1),t_{\arc_1}(r_3))} = \emptyset$,
otherwise $\kappa\restr{[t_\kappa(r_1), t_\kappa(r_3)]}$ can be split such that
there is no intersection at each part and the arguments work for every part. Let
\[
\alpha_2 := \kappa\restr{[t_\kappa(r_1), t_\kappa(r_3)]}
\sqcup \overline{\arc_1\restr{[t_{\arc_1}(r_1),t_{\arc_1}(r_3)]}}
\]
and let $Z_2$ be the connected component locally right of
$\arc_1\restr{[t_{\arc_1}(r_1),t_{\arc_1}(r_3)]}$.
Then we know that $\kappa\restr{(t^*, t(q))} \subset (Z \cup Z_2 \cup \arc_1\restr{[t_{\arc_1}(r_1),t_{\arc_1}(r_3)]})$.
Since $\Delta(\arc_2, \beta) = 0$ for every arc spline $\beta \subset Z$
and since we can split $\kappa\restr{[t^*, t_\kappa(q)]}$ such that every
part is either in $Z \cup \arc_1\restr{[t_{\arc_1}(r_1),t_{\arc_1}(r_3)]}$ or in
$Z_2 \cup \arc_1\restr{[t_{\arc_1}(r_1),t_{\arc_1}(r_3)]}$
it is enough to show that $\Delta(\arc_2, \beta) = 0$ for any arc spline
$\beta$ with $\beta(0), \beta(1) \in \arc_1\restr{(t_{\arc_1}(r_1),t_{\arc_1}(r_3))}$
and $\beta^\circ \in Z_2$.
We assume that $Z_2$ is the interior of $\alpha_2$, otherwise it works analogously.
As $\alpha_2$ leaves $\arc_1$ in $0$ to the right, we know that there is a $\varepsilon > 0$
such that $\arc_1\restr{[t_{\arc_1}(r_1) - \varepsilon, t_{\arc_1}(r_1))} \cap Z_2 = \emptyset$.
For every $\tilde{q} \in \kappa\restr{[t_\kappa(r_1), t_\kappa(r_3)]}$ we know
that $\tilde{q}$ is not a violation from the right of $\arc_1$
and if $\tilde{q} \in \arc_1\restr{[0, t_{\arc_1}(r_1)]}$
then it is not a restriction from the left of $\arc_1$. This yields
$\Delta(\arc_1\restr{[0, t_{\arc_1}(r_1) - \varepsilon]}, \alpha_2) = 0$.
Hence, with Proposition~\ref{prop_delta_vanishes},
we know that $\arc_1(0) \not\in Z_2$ and $\arc_1(1) \not\in Z_2$
since $\Delta(\arc_1, \alpha_2) = 0$.
Now, let $\beta$ be an arc spline with $\beta(0), \beta(1) \in \arc_1\restr{(t_{\arc_1}(r_1),t_{\arc_1}(r_3))}$
and $\beta^\circ \in Z_2$. Then, the interior of $\beta$, denoted by $I_\beta$, is a subset of
$Z_2$. Hence, $\arc_1(0), \arc_1(1) \not\in I_\beta$ and with Proposition~\ref{prop_delta_vanishes}
we get $\Delta(\arc_1, \beta) = 0$. Lemma~\ref{lemma_delta_difference_on_arcs} yields $\Delta(\arc_2, \beta) = 0$
as $\beta$ is supposed to be a part of the channel boundary. \qed
\end{proof}
\begin{lem}\label{lemma_result_is_visibility_arc}
The arc $\arc$ in line \ref{algo_return_visibility_arc} is a visibility arc.
\end{lem}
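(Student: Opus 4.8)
The plan is to read off the state of the algorithm when it reaches line~\ref{algo_return_visibility_arc}, to use the invariant of Lemma~\ref{lemma_invariant} to conclude that the returned arc $\arc$ has no violation, and finally to turn ``no violation'' into ``$\arc\subset P$''.

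First I would pin down what reaching line~\ref{algo_return_visibility_arc} means. The while loop terminates only with $l=r=n$, and in the terminating iteration neither of the updates in lines~\ref{algo_update_arc_one} and~\ref{algo_update_arc_two} was executed: each of them resets $r$ to $R$ (in line~\ref{algo_update_arc_one}) or $l$ to $L$ (in line~\ref{algo_update_arc_two}), so the loop would continue unless that index already equalled $n$; but then the segment concerned would be simultaneously a restriction (maintained by the updates) and a violation of $\arc$ (the update trigger), which is impossible. Tracing back to the last iteration in which $\arc$ was modified, one sees that afterwards the counter $l$ sweeps through $L,L+1,\dots,n$ and $r$ through $R,R+1,\dots,n$ while $\arc$ stays fixed; hence, for the final $\arc$, none of $\kappa_L,\dots,\kappa_n$ is a violation from the left and none of $\kappa_R,\dots,\kappa_n$ is a violation from the right. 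Moreover $\kappa_L$ is a restriction from the left and $\kappa_R$ a restriction from the right of $\arc$ — this is exactly what lines~\ref{algo_update_arc_one} and~\ref{algo_update_arc_two} construct, and for $L=0$ (resp.\ $R=0$) it is the starting restriction from the left (resp.\ right) provided by $\arc=\min\Gamma(\sigma,p)$ together with the convention $\kappa_0:=\sigma$, symmetrically to Remark~\ref{rmk_maximal_connecting_arc}. Finally, since the conditions in lines~\ref{algo_check_one} and~\ref{algo_check_two} failed in the terminating iteration, $\arc$ has no alternating sequence of length three with restrictions among $\sigma,\kappa_L,\dots,\kappa_n$, and there is no $\arc^{*}\in\Gamma(\sigma,p)$ with one among $\sigma,\kappa_L,\kappa_R,\kappa_n$.

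Next I would exclude any violation of this $\arc$. Assume, say, that $\kappa_j$ is a violation from the left; by the previous step $j<L$, so some $q\in\kappa_j$ satisfies $\Delta_q(\arc,\kappa)\le-2$, and $q\prec_\kappa q_L$ where $q_L$ is the first point of $\kappa_L$ lying on $\arc$, for which $\Delta_{q_L}(\arc,\kappa)=-1$. Since $\Delta_{\kappa(0)}(\arc,\kappa)\ge0$, since $\Delta_q(\arc,\kappa)$ is odd exactly when $q\in\arc$, and since $\Delta_q(\arc,\kappa)\le-2$, there is a first point $q_1\prec_\kappa q_L$ with $\Delta_{q_1}(\arc,\kappa)=-1$; it is a restriction from the left of $\arc$. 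I would then distinguish the two $\prec_\arc$-orders of $q_1$ and $q_L$. If $q_L\prec_\arc q_1$, the second part of Lemma~\ref{lemma_invariant} applies to the first restriction from the left with that property and forces $\kappa$ to approach $\arc$ from the right there; reading off $\Delta_q(\arc,\kappa)$ just before that point gives the value $-2$, hence a point where $\Delta$ equals $-1$ still earlier along $\kappa$ — contradicting the choice of $q_1$. If $q_1\prec_\arc q_L$, one argues that the segment of $\kappa$ containing $q_1$ was, in the iteration in which the counter $l$ reached its index, already a restriction or a violation from the left of the arc then current, so that either $L$ would have been set to an index $\le j$ or an alternating sequence of length three among the restrictions checked at that moment would have been detected — contradicting either the final value of $L$ or the first step. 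The case of a violation from the right is symmetric, using the first part of Lemma~\ref{lemma_invariant}. Thus $\arc$ has no violation. I expect this to be the main obstacle, since it is exactly here that the bookkeeping of $\Delta_q(\arc,\kappa)$ along $\kappa$ has to be reconciled with the order ``$\prec_\arc$'', with the geometric meaning of ``$\kappa$ approaches $\arc$ from the right'', and with the precise moments at which the loop updates $L$ and $R$.

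Finally I would convert ``no violation'' into ``$\arc$ is a visibility arc''. If $\arc\not\subset P$, then, since $\arc(0)\in\sigma\subset P$ and $\arc(1)=p\in P^\circ$, there is a maximal subinterval of $(0,1)$ on which $\arc$ runs in the exterior of $\sigma\sqcup\kappa$; its endpoints lie on $\kappa$, and there $\arc$ cuts $\kappa$ from the left (at the exit) and from the right (at the re-entry), equivalently $\kappa$ cuts $\arc$ from the right and then from the left. Applying Proposition~\ref{prop_delta_vanishes} to the simple closed curve formed by this excursion together with the portion of $\kappa$ it bounds shows that $\Delta_q(\arc,\kappa)$ changes by $\pm2$ across that portion of $\kappa$; taking the first such excursion fixes the baseline of $\Delta$ so that the jump produces a point with $|\Delta_q(\arc,\kappa)|\ge2$, i.e.\ a violation — a contradiction. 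Hence $\arc\subset P$, and as $\arc\in\Gamma(\sigma,p)$ has its starting point on $\sigma$ and endpoint $p$, it is a visibility arc.
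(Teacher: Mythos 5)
Your overall decomposition is sound and Parts 1 and 3 fill in steps the paper leaves implicit: you correctly read off the terminal state of the loop (no violation from the left of the final $\arc$ in $\kappa_L,\dots,\kappa_n$, none from the right in $\kappa_R,\dots,\kappa_n$, and $\kappa_L,\kappa_R$ are restrictions), and your Part~3 (``no violation $\Rightarrow \arc\subset P$'') is a reasonable elaboration of a step the paper treats as understood. Your Case~A (for concreteness, $q_L\prec_\arc q_1$) is also essentially what the paper does: the directly computed approach direction of $\kappa$ at the first restriction contradicts the one forced by Lemma~\ref{lemma_invariant}, so this $\prec_\arc$-order cannot occur.

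The gap is Case~B ($q_1\prec_\arc q_L$), which is precisely the case the invariant does \emph{not} rule out and therefore the heart of the lemma. Your argument there --- that the segment $\kappa_j$ containing $q_1$ ``was, in the iteration in which the counter $l$ reached its index, already a restriction or a violation from the left of the arc then current'' --- does not follow. At the moment $l=j$, the algorithm held a strictly smaller arc $\arc'\ne\arc$, and whether $\kappa_j$ was a restriction, a violation, or neither with respect to $\arc'$ is not controlled by the fact that it is a violation of the final $\arc$; Lemma~\ref{lemma_delta_difference_on_arcs} shows $\Delta_q$ can change in either direction when the arc is replaced. Nor is it clear why an alternating sequence of length three would have been detected. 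The paper closes this case by a topological argument instead: it forms the simple closed curve $\alpha := \kappa\restr{[0,t_\kappa(q)]}\sqcup\reverse{\arc\restr{[0,t_\arc(q)]}}\sqcup\sigma\restr{[t_\sigma(\arc(0)),1]}$, takes the component $Z$ locally right of $\arc\restr{[0,t_\arc(q)]}$, shows (using the absence of violations past $\kappa_R$) that $\kappa$ cannot leave $\arc\restr{[0,t_\arc(q)]}$ to the left after $q_r$, concludes $\kappa\restr{[t_\kappa(q_r),1]}\subset\overline{Z}$, and then derives a contradiction from $\kappa(1)=\sigma(0)$, either because $\sigma(0)\notin\overline Z$ or, if $\arc(0)=\sigma(0)$, because it would force $\Delta(\arc,\kappa)=1$ against Lemma~\ref{lemma_restriction_well_defined}. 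Some Jordan-curve argument of this kind (rather than an appeal to the algorithm's history) is needed to finish Case~B.
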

\begin{proof}
Let $t^*\in[0,1]$ maximal such that there is no violation in $\kappa\restr{[0,t^*]}$.
Suppose $t^* < 1$. Then we know that $\kappa(t^*)$
is a restriction of $\arc$. We suppose that $\kappa(t^*)$ is
a restriction from the right of $\arc$ as it works analogously, otherwise.
Let $q$ be the first restriction from the right of $\arc$, with respect to $\kappa$.
We know that $\kappa$ approaches $\arc$ in $t_\arc(q)$ from the right.
With $q_r$ the first restriction from the right in $\kappa_R$, we know that
$q \prec_\kappa \kappa(t^*) \prec_\kappa q_r$ which yields $q_r \prec_\arc q$ because of Lemma~\ref{lemma_invariant}.
Consider the closed arc spline $\alpha := \kappa\restr{[0, t_\kappa(q)]} \sqcup
\reverse{\arc\restr{[0,t_\arc(q)]}} \sqcup \sigma\restr{[t_\sigma(\arc(0)), 1]}$
and the connected component $Z$ of $\RR^2 \setminus \alpha$
that is locally right of $\arc\restr{[0,t_\arc(q)]}$.
We know that there is no $t \in [t_\kappa(q_r), 1]$ such that $\kappa$
leaves $\arc\restr{[0,t_\arc(q)]}$ in $t$ to the left as otherwise
there would be a violation from the right.
This yields $\kappa\restr{[t_\kappa(r), 1]} \subset \overline{Z}$
as $\kappa$ is simple and $\kappa^\circ \cap \sigma = \emptyset$.
If $\arc(0) \not= \sigma(0)$ then obviously $\kappa(1) = \sigma(0) \not\in \overline{Z}$
and we get a contradiction. Otherwise, if $\arc(0) = \sigma(0)$ then
$\Delta(\arc, \kappa) = 1$, a contradiction to Lemma~\ref{lemma_restriction_well_defined}.
\qed
\end{proof}
\begin{thm}\label{thm_linear_runtime}
Algorithm \ref{algo_klopfalgo} has linear complexity with respect to $n$,
the number of segments of the channel boundary.
\end{thm}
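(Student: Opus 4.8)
The plan is to establish the bound in two independent steps: that one pass through the \textbf{while}-loop of Algorithm~\ref{algo_klopfalgo} costs $O(1)$, and that the loop body is executed only $O(n)$ times; multiplying, and adding the $O(1)$ initialization and final return, gives the claim. For the first step I would argue line by line. The two index updates at the top of the loop are trivial. Deciding whether $\kappa_l$ is a violation from the left of the current arc $\arc$ (line~\ref{algo_check_violation_one}), and symmetrically whether $\kappa_r$ is a violation from the right (line~\ref{algo_check_violation_two}), only involves the single circle or line carrying $\arc$ and the single channel segment $\kappa_l$ (resp.\ $\kappa_r$): the value $\Delta_q(\arc,\kappa)$ at the breakpoint $q$ where $\kappa_l$ begins can be carried over from the previous pass, and within one segment the two curves cross at most twice, so $\Delta_q(\arc,\kappa)$ on all of $\kappa_l$ is determined in $O(1)$. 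Constructing the new arc in line~\ref{algo_update_arc_one} (resp.~\ref{algo_update_arc_two}) is $O(1)$ by Lemma~\ref{lemma_arcs_exist}, since it is the unique arc with the prescribed right-blocking alternating sequence on the two fixed segments $\kappa_R,\kappa_l$ (resp.\ $\kappa_r,\kappa_L$). The only test that is not obviously local is the one for an alternating sequence of length three in lines~\ref{algo_check_one} and~\ref{algo_check_two}, which ranges over $\kappa_L,\ldots,\kappa_l$ and $\kappa_R,\ldots,\kappa_r$; here I would carry, as additional state, the right-blocking alternating sequence of length two of $\arc$ together with the two extremal already-scanned restrictions that could prolong it (the latest right restriction after $\kappa_L$ and the earliest left restriction before $\kappa_R$ in the order $\prec_\arc$), so that only the freshly scanned $\kappa_l$ and $\kappa_r$ need be inspected in each pass, this state being reset together with the affected scan index whenever $\arc$ is re-chosen.

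For the second step I would use $L$ and $R$ as progress measures: both are non-decreasing and confined to $\{0,1,\ldots,n\}$, every execution of line~\ref{algo_update_arc_one} strictly increases $L$, and every execution of line~\ref{algo_update_arc_two} strictly increases $R$. Since the scan indices $l$ and $r$ are incremented once at the top of every pass and are moved backwards only by the resets $r\leftarrow R$ and $l\leftarrow L$ inside the two update branches, the total number of passes is $O(n)$ plus the total backward movement of $l$ and $r$. So everything reduces to showing that this backward movement is $O(n)$, and the plan is to derive this from Lemma~\ref{lemma_invariant}. After an update the arc is re-chosen to be larger (following a violation from the left) or smaller (following a violation from the right) than before --- this monotonicity is the mechanism behind Lemma~\ref{lemma_arcs_exist} --- so the invariant applies and tells us that no violation of the new arc can occur on any channel segment that had already been cleared before the reset. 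Hence a re-scan starting from $R$ (resp.\ $L$) passes through every such segment without re-entering an update branch, and I would charge the work of re-scanning a segment to the next advance of $R$ (resp.\ $L$) past it; because $R$ and $L$ move only forward within $\{0,\ldots,n\}$, this charging bounds the total backward movement, and hence the number of passes, by $O(n)$.

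The step I expect to be the real obstacle is precisely this charging. One has to upgrade the informal assertion that no problematic restriction ever re-appears before $\kappa_L$ or $\kappa_R$ into an invariant sharp enough that each unit by which $l$ or $r$ retreats is provably compensated by a later forward unit of $L$ or $R$ --- equivalently, that a potential such as $2n-L-R$, possibly refined to also track the gap between the scanned frontier and the committed indices $L$ and $R$, does not increase over a pass. Once this is set up on top of Lemmas~\ref{lemma_arcs_exist} and~\ref{lemma_invariant}, the remaining arguments are routine counting.
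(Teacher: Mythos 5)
Your two-step structure matches the paper's: $O(1)$ per pass (carrying $\Delta_{\kappa_l(0)}(\gamma,\kappa)$ and $\Delta_{\kappa_r(0)}(\gamma,\kappa)$ incrementally, using that $\Delta_q$ is piecewise constant with at most two jumps per segment, using Apollonius for the new arcs, and maintaining a small amount of state for the length-three check) times $O(n)$ passes.

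However, the part you yourself flag as ``the real obstacle'' is where you go wrong, and in two ways. First, the specific claim ``the invariant applies and tells us that no violation of the new arc can occur on any channel segment that had already been cleared before the reset'' is false and is not what Lemma~\ref{lemma_invariant} asserts. That lemma constrains how certain \emph{restrictions} before $q_L$ or $q_R$ approach $\gamma$; it says nothing about violations in $\kappa_{R+1},\dots,\kappa_{r_{\text{old}}}$ for the freshly chosen arc, and indeed such violations can and do occur --- that is exactly why the algorithm resets $r$ to $R$ at all. If the re-scan never re-entered an update branch, the reset would be pointless. Lemma~\ref{lemma_invariant} is needed for correctness (Lemma~\ref{lemma_result_is_visibility_arc}), not for the runtime.

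Second, and more fortunately, you do not need any geometric input here: the bound on the number of passes is purely combinatorial and follows from the increment/reset structure alone. After every update (line~\ref{algo_update_l_2} or~\ref{algo_update_r_3}) the state satisfies $l=L$ and $r=R$. Between two consecutive updates every pass is a no-violation pass in which $l$ and $r$ each advance by one; hence at the moment the next update is triggered at, say, $\kappa_l$, the jump $L \leftarrow l$ advances $L$ by exactly the number of passes since the previous update (analogously for $R$). Summing, the number of passes up to and including the last update equals $L_{\mathrm{final}}+R_{\mathrm{final}}$, and after the last update at most $\max(n-L_{\mathrm{final}},\,n-R_{\mathrm{final}})$ further passes occur before $l=r=n$. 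Altogether this is at most $n+\max(L_{\mathrm{final}},R_{\mathrm{final}}) \le 2n$, which is the bound the paper states and which needs neither a refined potential nor an appeal to the invariant. So: right skeleton, but the step you single out as the crux is resolved by a simpler accounting than you anticipate, and the bridge you propose to cross it (via Lemma~\ref{lemma_invariant}) does not hold.
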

\begin{proof}
To reach linear runtime we have to keep a record of $\Delta_{\kappa_l(0)}(\arc, \kappa)$ and
$\Delta_{\kappa_r(0)}(\arc, \kappa)$. Initially the value is clear.
If $\arc$ is not updated in an iteration then $\Delta_{\kappa_l(0)}(\arc, \kappa)$ can easily be updated
as $\Delta_{\kappa_l(0)}(\arc, \kappa) = \Delta_{\kappa_{l-1}(0)}(\arc, \kappa) + \Delta(\arc, \kappa_{l-1})$.
If $\arc$ is updated then the value of $\Delta_{\kappa_l(0)}(\arc, \kappa)$ is known
as $\kappa_l$ is a restriction from the left.
All this operations can be computed in constant time.
The same holds for $\Delta_{\kappa_r(0)}(\arc, \kappa)$, respectively.\\
As $\kappa_L$ and $\kappa_R$ are restrictions, with Lemma~\ref{lemma_delta_difference_on_arcs}
we can compute $\Delta_q(\arc^*, \kappa)$ with $\arc^* \in \Gamma(\sigma, p)$,
$q \in \kappa_L, \kappa_R, \kappa_l, \kappa_r$ only considering the respective segment.
With $\lambda$ a channel segment, $\Delta_{\lambda(t)}(\arc, \kappa)$
as a function of $t$ is locally constant for every
$t \in [0,1]$ with $\lambda(t) \not\in \arc$.
This yields that the value changes at no more
than two intersection points, hence $\{\Delta_q(\arc, \kappa) \fdg q \in \lambda\}$,
$\lambda = \kappa_l, \kappa_r, \kappa_L, \kappa_R$ can be computed in constant time
which yields that the statements in line \ref{algo_check_violation_one} and
\ref{algo_check_violation_two} can be computed in constant time.
The check in line \ref{algo_check_two} and the arcs in line
\ref{algo_update_arc_one} and \ref{algo_update_arc_two} can also be computed
in constant time using the problem of Apollonius, cf.~\cite{coxeter1968}.
To compute line \ref{algo_check_one} in constant time, it is enough to
keep a record of the restriction from the left of $\arc$ in
$\kappa_L, \kappa_{L+1}, \ldots, \kappa_l, \kappa_R, \kappa_{R+1}, \ldots, \kappa_r$ that is minimal
with respect to $\arc$ and the maximal restriction from the right,
respectively. The minimal or maximal restriction can be updated
in constant time as only the current segment $\kappa_l$ or $\kappa_r$ must
be taken into account. It is obvious that the remaining statements can be computed in constant time.\\
To get a linear runtime we show that the loop starting in
line~\ref{algo_while} is passed at most $2n$ times.
Therefore, we show that in every iteration at least
one segment is finally processed.
In each iteration in that the algorithm
has not stopped, we can distinguish two cases:
neither $\kappa_l$ nor $\kappa_r$ is a violation from the left or right
or at least one of them, $\kappa_l$ or $\kappa_r$, is a violation from the left or right.
In the first case $l$ and $r$ are incremented by one.
Otherwise, if $\kappa_l$ is a violation from the left, $r$ is reset to $R$
but $L$ is incremented to $l$, so the segments $\kappa_{L+1}, \ldots, \kappa_l$
are finally processed. Analogously, if $r-R = k$ and $\kappa_r$ is a violation
from the right then this $k$ segments are finally processed.
As the algorithm stops if both $l$ and $r$ are equal to $n$
we know that there are at most $2n$ iterations.\qed
\end{proof}
\begin{rmk}
The result of Algorithm \ref{algo_klopfalgo} is either a visibility arc
or an arc that has an alternating sequence of length three and thus proves
that the considered point is not visible. Even if we consider numerical
errors, in either case we get an arc which proves that the respective
point is visible or not visible up to a certain tolerance. Hence, numerical
inaccuracies only affect situations where some uncertainty is inevitable.
\end{rmk}
\section{Numerical stability\label{section_stability}}
\begin{figure}[ht]
\begin{minipage}[b]{0.47\textwidth}
\begin{center}
\begin{overpic}[height=6.2cm]{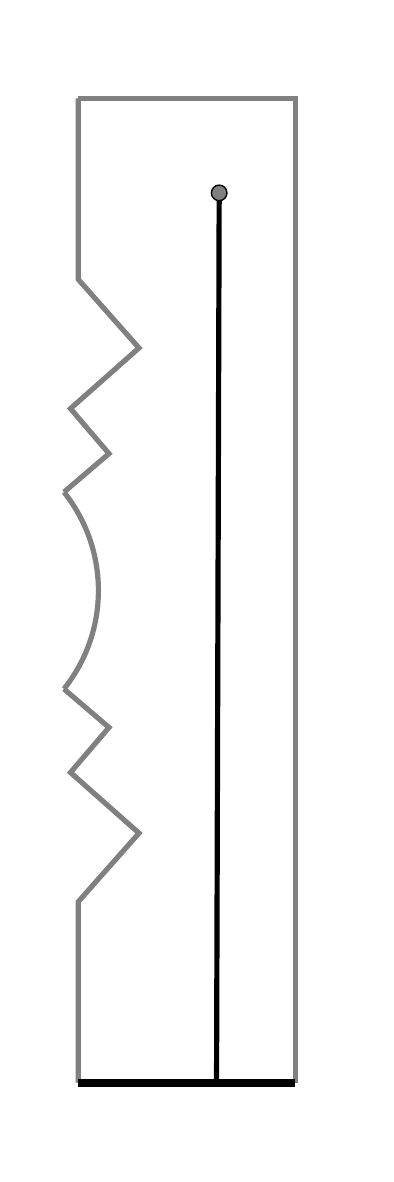}
\put(15,4){$\sigma$}
\put(0,87){$P$}
\put(14,84){$p$}
\put(20,48){$\gamma$}
\end{overpic}
\caption{Channel $P$ with starting arc $\sigma$ a visible point $p$ and a visibility arc $\gamma$
\label{figure_stability_visibility_line}}
\end{center}
\end{minipage}
\hfill
\begin{minipage}[b]{0.47\textwidth}
\begin{center}
\begin{overpic}[height=6.2cm]{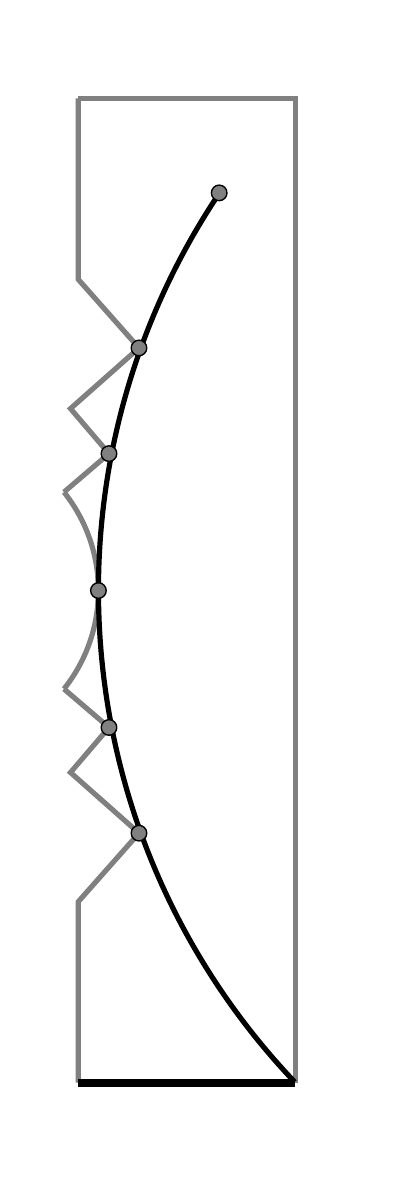}
\put(17,4){$\sigma$}
\put(0,87){$P$}
\put(21,84){$p$}
\put(17,20){$\gamma$}
\end{overpic}
\caption{A visibility arc which has several alternating sequences of length two
\label{figure_stability_visibility_arc}}
\end{center}
\end{minipage}
\end{figure}
In this section we will show how Algorithm~\ref{algo_klopfalgo} from Section~\ref{section_algorithm}
works on a numerically critical situation.
Assume we have the setting depicted in Figure~\ref{figure_stability_visibility_line}.
There is no doubt that the point $p$ is circularly visible as there is a visibility arc that
is clearly inside the channel. Furthermore, $p$ is not even close to the boundary of the visibility set
as the whole interior if the channel is circularly visible.
Nevertheless, we will now show that the computation of a visibility arc has some numerical issues.
One can show that there are exactly two visibility arcs having an alternating sequence of length two,
one with a left-blocking and one with a right-blocking alternating sequence.
It is reasonable to only regard visibility arcs having an alternating sequence
as every visibility algorithm in literature is in some way based on alternating sequences.
In fact, they are inevitable if one is interested in the boundary arcs of the
visibility set which usually the case, cf.~\cite{Maier2014}.
So, let $\gamma$ be the visibility arc that has a right-blocking alternating sequence
as depicted in Figure~\ref{figure_stability_visibility_arc}.
Although, $\gamma$ does not have an alternating sequence of length greater than two,
$\gamma$ and $\kappa$ have several points in common. So, finding the visibility arc $\gamma$
is comparable to the problem of finding a solution of an over-determined system of equations
which is highly unstable.
Every little change of the channel or the point $p$ results in a
different alternating sequence of the visibility arc. Although the actual choice
of the associated alternating sequence is not relevant to decide if a point is visible,
it can be a problem for the algorithm as they need to compute an alternating sequence.\\
Now, let us analyze Algorithm~\ref{algo_klopfalgo} in this situation.
Initially, $\gamma = \min \Gamma(\sigma, p)$ as depicted in Figure~\ref{figure_stability_algorithm} (left).
Then there are two update steps as shown in the intermediate and right example of Figure~\ref{figure_stability_algorithm}.
After the second update, we have the numerically critical situation from above where $\gamma$
is a visibility arc that has several points in common with the channel boundary.
We will see that this is not a problem in Algorithm~\ref{algo_klopfalgo}.
In the next steps of the algorithm every subsequent channel segment is checked
for a violation from the left.
Actually, there should not be violations exceeding a certain tolerance.
But even if due to numerical inaccuracies a violation from the left is found,
then $L$ is updated such that the violation turns into a restriction
but the resulting arc basically remains the same.
As only subsequent channel segments are considered it is not possible that the
former restriction is found as a violation later on, which would lead to an
inconsistency or even an infinite loop.\\
So, Algorithm~\ref{algo_klopfalgo} returns without any doubt that $p$ is visible.
\begin{figure}[t]
\begin{minipage}[b]{0.31\textwidth}
\begin{center}
\begin{overpic}[height=6.2cm]{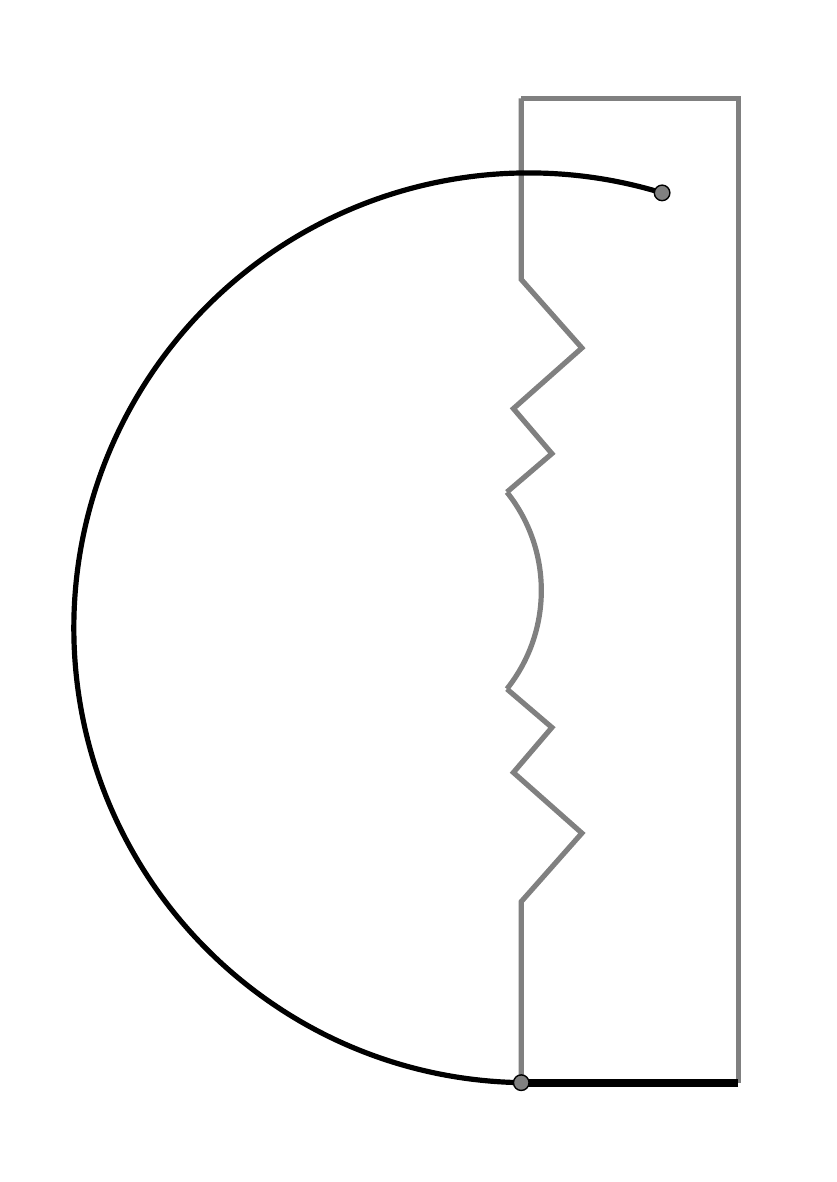}
\put(52,4){$\sigma$}
\put(58,83){$p$}
\put(8,48){$\gamma$}
\end{overpic}
\end{center}
\end{minipage}
\hfill
\begin{minipage}[b]{0.31\textwidth}
\begin{center}
\begin{overpic}[height=6.2cm]{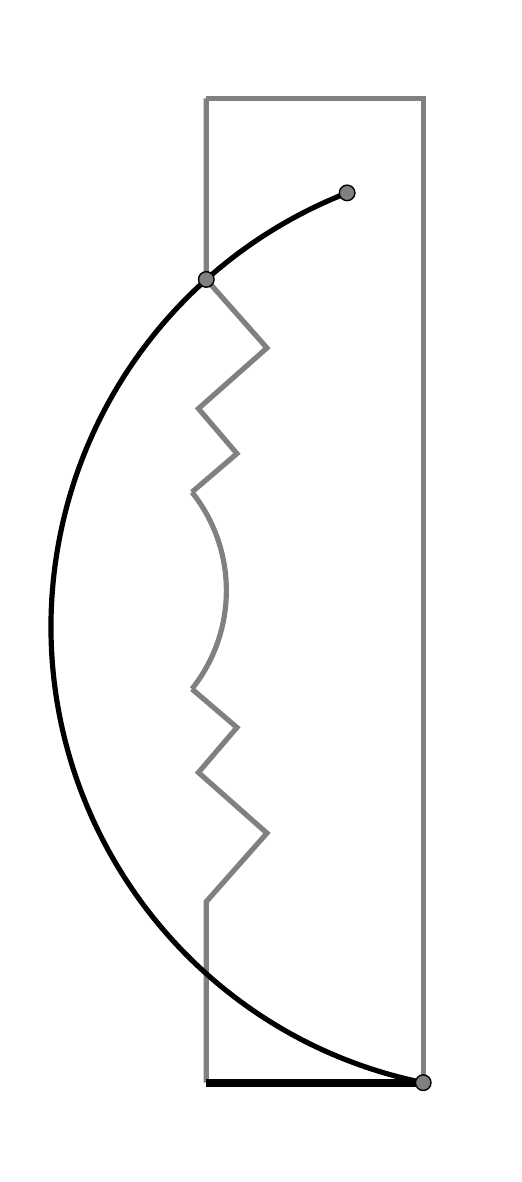}
\put(25,4){$\sigma$}
\put(31,83){$p$}
\put(11,17){$\gamma$}
\end{overpic}
\end{center}
\end{minipage}
\hfill
\begin{minipage}[b]{0.31\textwidth}
\begin{center}
\begin{overpic}[height=6.2cm]{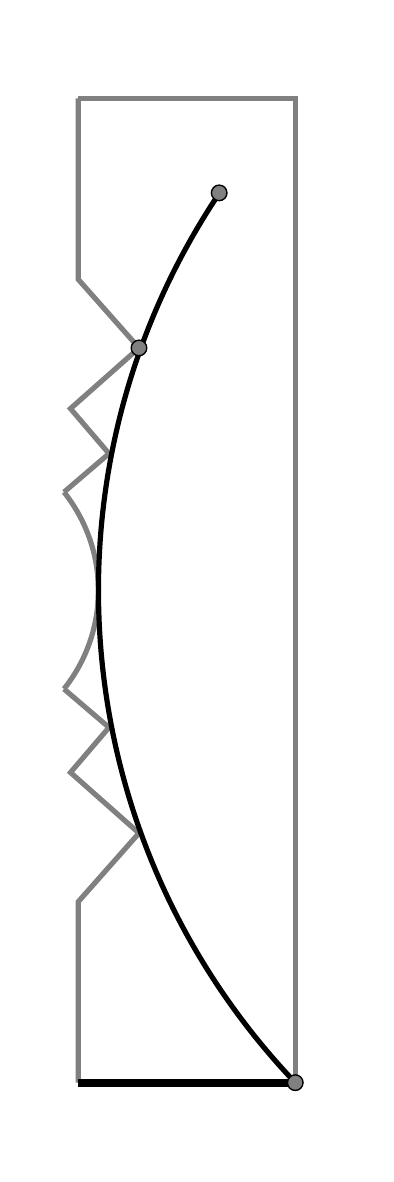}
\put(15,4){$\sigma$}
\put(21,83){$p$}
\put(17,20){$\gamma$}
\end{overpic}
\end{center}
\end{minipage}
\caption{Illustration of the steps of Algorithm~\ref{algo_klopfalgo}\label{figure_stability_algorithm}}
\end{figure}
\section{Conclusion}
We treat the problem if a point inside a simple closed
arc spline is circularly visible from a boundary arc.
In particular, we provide an easy-to-check criterion
that implies that the point is not visible
and we present a simple and numerically stable linear time
algorithm that checks visibility.\\
We will integrate the results into the SMAP approach,
see \cite{Maier2014}. Thus, point sequences can be
approximated by an arc spline up to a arbitrary tolerance
with optimal segment number in a numerically stable and
efficient manner.

\vspace*{6mm}
\noindent\textbf{Acknowledgments}
\vspace*{3mm}

The first author has been funded by the German Research Foundation
(DFG~-~Deutsche Forschungsgemeinschaft) under grant number MA 5834/1-1.
\vspace*{6mm}
\noindent\textbf{References}

\bibliography{bib_cagd}
\bibliographystyle{plain}

\end{document}